\newif\ifLIPICS
\newif\ifLIPICSappendix
\newif\ifARXIV
\newcommand{\gabriele}[2][]{{\color{black}\todo[color=green!30,#1]{{\bf G.} #2}}\ignorespaces}
\definecolor{nicecyan}{HTML}{006165}
\definecolor{nicered}{HTML}{DB3A34}
\newcommand\nicecyan[1]{{\color{nicecyan}#1}}
\newcommand\nicered[1]{{\color{nicered}#1}}
\newcommand\dom{\textsf{dom}}
\newcommand\oeq{=_o}
\newcommand\ocont{\subseteq_o}
\newcommand\nocont{\not\subseteq_o}
\newcommand\synch{G}
\newcommand\delay{\textsf{delay}}
\newcommand\ipar{\textsf{ipar}}
\newcommand\opar{\textsf{opar}}
\newcommand\move{\textsf{move}}
\newcommand\nxt{\textsf{next}}
\newcommand\angled[1]{\langle #1 \rangle}
\newcommand\In{\textsf{in}}
\newcommand\Out{\textsf{out}}
\newcommand\Orig{\textsf{orig}}
\newcommand\otype{\gamma}
\newcommand{\Qright}{Q_{\succ}}
\newcommand{\Qleft}{Q_{\prec}}
\newcommand{\bound}{B}
\renewcommand{\act}[2][]{\mathrel{\raisebox{-1pt}[10pt][0pt]{%
			\ensuremath{\underset{^{\raisebox{-6pt}[0pt][0pt]{\ensuremath{^{^{#1}}}}}}%
				{\raisebox{0pt}[3pt][0pt]{\ensuremath{\relbar\mspace{-8mu}\xrightarrow{#2}}}}}}}}
\newcommand{\fin}{\omega}
\newcommand{\lftmark}{\mathop{\vdash}}
\newcommand{\rgtmark}{\mathop{\dashv}}
\newcommand\true{\mathit{true}}
\newcommand{\om}{\textsf{omatch}}
\newcommand{\im}{\textsf{imatch}}
\newcommand{\match}{\textsf{match}}
\newcommand{\il}{\textsf{lag}}
\newcommand{\eqS}{=_{\Sigma}}
\newcommand{\eqG}{=_{\Gamma}}
\newcommand{\projS}{\pi_{\S}}
\newcommand{\projG}{\pi_{\G}}
\newcommand{\guess}{\textit{before}}
\newcommand{\chck}{\textit{after}}
\newsavebox{\wraptext}
\newsavebox{\wrapfig}
\newsavebox{\deftext}
\savebox{\deftext}{Xp}
\newcommand\mywraptextjustified[2]{\savebox{\wraptext}{\parbox[t]{\linewidth-#1}%
                                                                 {#2\linebreak}}}
\newcommand\mywraptext[2]{\savebox{\wraptext}{\parbox[t]{\linewidth-#1}%
                                                        {#2\par\phantom{X}}}}
\newcommand\mywrapfig[2]{\savebox{\wrapfig}{\parbox[t]{#1}{\centering #2}}%
                         \usebox{\wraptext}%
                         \raisebox{-\ht\wrapfig}{\usebox{\wrapfig}}%
                         \vspace{-\ht\deftext minus 1pt}\linebreak}
\tikzstyle{dot} = [draw,shape=circle,fill, minimum size=1mm, inner sep=0pt, outer sep=0pt]
\tikzstyle{ball} = [draw,shape=circle, minimum size=1cm, inner sep=0pt, outer sep=0pt]
\tikzstyle{arrow} = [->, >=stealth, line width=0.5pt, shorten >=1pt, shorten <=1pt, 
\tikzstyle{link} = [line width=0.5pt, shorten >=1pt, shorten <=1pt, 
\tikzstyle{hidden}=[minimum size=1mm, shape=circle, inner sep=0pt, outer sep=0pt, style={font=\vphantom{Ag}}]
\tikzstyle{state}=[draw,shape=circle, fill=gray!25, minimum size=6mm, inner sep=0pt, outer sep=0pt, style={font=\vphantom{Ag}}]
\tikzstyle{letter}=[draw,shape=rectangle, fill=gray!25, minimum size=5mm, inner sep=0pt, outer sep=0pt, style={font=\vphantom{Ag}}]
\tikzstyle{cir}=[draw=violet, fill=violet!20!white, rounded rectangle,minimum size=1.5em,inner sep=0em]                                                                                                      
\tikzstyle{cir1}=[draw=green!50!blue, fill=green!40!white, circle,minimum size=1.5em,inner sep=0.1em]                                                                                                      
\tikzstyle{cir2}=[draw=red,fill=red!20!white,rounded rectangle,minimum size=1.5em,inner sep=0em]                                                                                                    
\tikzstyle{cir11}=[draw=green!20!blue,fill=green!40!white,rounded rectangle,minimum size=1.5em,inner sep=0em]                                                                                                    
\tikzstyle{cir22}=[draw=green!70!violet,fill=green!15!violet!40!white,rounded rectangle,minimum size=1.5em,inner sep=0em]
\title{On Synthesis of Resynchronizers for Transducers}
\title{\bf On Synthesis of Resynchronizers for Transducers}
\author{Sougata Bose}{LaBRI, University of Bordeaux, France}{}{}{}
\author{Shankara Narayanan Krishna}{Department of Computer Science \& Engineering IIT Bombay, India}{}{}{} 
\author{Anca Muscholl}{LaBRI, University of Bordeaux, France}{}{}{}
\author{Vincent Penelle}{LaBRI, University of Bordeaux, France}{}{}{}
\author{Gabriele Puppis}{CNRS, LaBRI, University of Bordeaux, France}{}{}{}
\authorrunning{S. Bose, S. Krishna, A. Muscholl, V. Penelle, G. Puppis}
\keywords{String transducers, resynchronizers, synthesis}
\author{Sougata Bose \\ LaBRI, University of Bordeaux, France \and
        Shankara Narayanan Krishna \\ Department of Computer Science \& Engineering IIT Bombay, India \and
        Anca Muscholl \\ LaBRI, University of Bordeaux, France \and
        Vincent Penelle \\ LaBRI, University of Bordeaux, France \and
        Gabriele Puppis \\ CNRS, LaBRI, University of Bordeaux, France}
\date{}
\begin{document}

\sloppy

\maketitle

\makeatletter
\let\mydollar=$  
\makeatother
\def\<{\ifinlineelse{\mydollar}{\begin{equation*}}} 
\def\>{\ifinlineelse{\mydollar}{\end{equation*}}}

\newenvironment{itemize*}{\ifinlineelse{%
                            \let\olditem=\item%
                            \def\item{}%
                            \leavevmode\unskip%
                          }{%
                            \begin{itemize}
                          }}%
                         {\ifinlineelse{%
                            \let\item=\olditem%
                            \leavevmode\unskip%
                          }{%
                            \end{itemize}
                          }}
\newenvironment{enumerated*}{\ifinlineelse{%
                            \let\olditem=\item%
                            \def\item{}%
                            \leavevmode\unskip%
                          }{%
                            \begin{enumerated}
                          }}%
                         {\ifinlineelse{%
                            \let\item=\olditem%
                            \leavevmode\unskip%
                          }{%
                            \end{enumerated}
                          }}




\begin{abstract}
We study two formalisms that allow to compare transducers over words under
origin semantics:  rational and  regular resynchronizers, and show
that the former are captured by the latter.  We then consider some instances of the following synthesis
problem: given transducers $T_1,T_2$, 
construct a rational (resp.~regular) resynchronizer $R$, if it exists,
such that  $T_1$ is  contained in $R(T_2)$ under the origin semantics. We show that synthesis of rational resynchronizers 
is decidable for functional, and even finite-valued, one-way transducers,
and undecidable for relational one-way transducers. In the two-way setting,
synthesis of regular resynchronizers is shown to be decidable for 
unambiguous two-way transducers. For larger classes of two-way 
transducers, the decidability status is open.


\end{abstract}

\section{Introduction}\label{sec:intro}

The notion of word transformation is pervasive in computer science,
as computers typically process streams of data and transform them between 
different formats. 
The most basic form of word transformation is realized 
using finite memory. Such a model is 
called finite-state transducer and was studied from the early 
beginnings of automata theory.
Differently from automata, the expressiveness of  transducers
is significantly affected by the presence of non-determinism 
(even when the associated transformation is a function),
and by the capability of processing the input in both directions 
(one-way vs two-way transducers).
Another difference is that many problems, notably, equivalence
and containment, become undecidable when moving from automata to 
transducers \cite{FischerR68,iba78siam}.

An alternative semantics for transducers, called \emph{origin
  semantics}, was introduced in~\cite{boj14icalp} in order to obtain
canonical two-way word transducers.  In the origin semantics, the
output is tagged with positions of the input, called origins, that
describe where each output element was produced.  According to this
semantics, two transducers may be non-equivalent even when they
compute the same relation in the classical semantics. 
From a
computational viewpoint, the
origin semantics has the advantage that it allows to recover the
decidability of equivalence and containment of
non-deterministic (and even two-way) transducers~\cite{bmpp18}.

It can be argued that comparing two transducers in the origin
semantics is rather restrictive, because it requires that the same
output is generated at precisely the same place. A natural approach to
allow some 'distortion' of the origin information when comparing two
transducers was proposed in~\cite{FiliotJLW16}. \emph{Rational}
resynchronizers allow to
compare \emph{one-way} transducers (hence, the name 'rational') under
origin distortions that are generated with finite control. A rational resynchronizer is simply
a one-way transducer that processes an interleaved input-output
string, producing another interleaved interleaved input-output string
with the same input and output projection. For two-way transducers (or
equivalently, streaming
string transducers~\cite{AC10}) a different formalism is
required to capture origin distortion, since the representation of the origin
information through interleaved input-output pairs does not work
anymore. To this purpose, \emph{regular resynchronizers} were introduced in~\cite{bmpp18}
as a logic-based transformation of origin graphs, in the spirit of 
 Courcelle's monadic second-order logic definable graph
transductions~\cite{CE12}. In~\cite{bmpp18} it  was shown that containment of two-way
transducers up to a (bounded) regular resynchronizer is decidable.


In this paper we first show that bounded regular resynchronizers capture the
rational ones. This result is rather technical, because rational
resynchronizers work on explicit origin graphs, encoded as
input-output pairs, which is not the case for regular
resynchronizers. Then we consider the following problem: given two
transducers $T_1,T_2$, we ask whether some rational, or bounded regular,
resynchronizer  $R$ exists such that $T_1$ is origin-contained in $T_2$
up to $R$.  So here, the resynchronizer $R$ is not part of the input,
and we want to synthesize such a resynchronizer, if one exists.

Our main contributions can be summarized as follows:
\begin{enumerate}
	\item synthesis of rational resynchronizers 
	      for functional (or even finite-valued) one-way transducers is decidable,
	\item synthesis of rational resynchronizers 
	      for unrestricted one-way transducers is undecidable,
	\item synthesis of bounded regular resynchronizers 
		  for unambiguous two-way transducers is decidable.
\end{enumerate}
Somewhat surprisingly, for both decidable cases above the existence of a
resynchronizer turns out to be equivalent to the classical inclusion
of the two transducers.

\ifLIPICS
Full proofs of the results presented in this paper can be found in the
extended version \url{https://arxiv.org/abs/1906.08688}.
\fi



\section{Preliminaries}\label{sec:prelims}

\paragraph*{One-way transducers.}
One of the simplest transducer model is the one-way 
non-deterministic finite-state transducer (hereafter, 
simply one-way transducer), capturing the class of 
so-called \emph{rational relations}. 
This is basically an automaton in which every transition 
consumes one letter from the input and appends a word 
of any length to the output. 

Formally, a \emph{one-way transducer} is a tuple $T=(\S,\G,Q,I,E,F,L)$, where 
$\S,\G$ are finite input and output alphabets, $Q$ is a finite set of states,
$I,F\subseteq Q$ are subsets of initial and final states, $E\subseteq Q\times\Sigma\times Q$
is a finite set of transition rules, and $L:E\uplus F\rightarrow 2^{\Gamma^*}$ is a 
function 
specifying a regular 
language of partial outputs for each 
transition rule and each final state. 
The relation defined by $T$ contains 
pairs $(u,v)$ of input and output words,
where $u=a_1\dots a_n$ and $v=v_1\dots v_n\,v_{n+1}$,
for which there is a run
$q_0 \act{a_1 \:|\: v_1} q_1 \act{a_2 \:|\: v_2} \dots q_n \act{\:|\: v_{n+1}}$
such that
$q_0\in I$, $q_n\in F$, $(q_{i-1},a_i,q_i)\in E$, 
$v_i\in L(q_{i-1},a_i,q_i)$, and $v_{n+1}\in L(q_n)$.
The transducer is called \emph{functional} if it associates at most one output
with each input, namely, if it realizes a partial function. 
For example, the figure below shows two one-way transducers 
\ifARXIV
with input alphabet $\S=\{a,b\}$ and output alphabet $\G\supseteq\S$.
\fi
\mywraptext{6.3cm}{%
\ifLIPICS
with input alphabet $\S=\{a,b\}$ and output alphabet $\G\supseteq\S$.
\fi
The first transducer is functional, and realizes the 
cyclic rotation $f: cu \mapsto uc$, for any letter $c\in\{a,b\}$ and any word
$u\in\{a,b\}^*$.
The second transducer is not functional, and associates with an input 
$u\in\S^*$ any possible word $v\in\G^*$ as output such that $u$ is a 
sub-sequence of $v$.
}%
\mywrapfig{6.3cm}{%
	\vspace{-2mm}%
\begin{tikzpicture}
%
\begin{scope}
	\draw (0,0) node [dot] (A) {};
	\draw (1.1,0.25) node [dot] (B) {};
	\draw (1.1,-0.25) node [dot] (C) {};
	\draw (2.5,0.25) node (D) {};
	\draw (2.5,-0.25) node (E) {};

	\draw [arrow] (-0.5,0) to (A);
	
	\draw [arrow] (A) to node [black, above=-0.05, sloped] 
	      {\small $\texttt{\nicecyan{a}} \:|\: \emptystr$} (B);
	\draw [arrow] (A) to node [black, below=-0.05, sloped] 
	      {\small $\texttt{\nicered{b}} \:|\: \emptystr$} (C);
	
	\draw [arrow] (B) to [out=135, in=45, looseness=25] 
	      node [black, above=-0.05] {\small $c \:|\: c$} (B);
	\draw [arrow] (C) to [out=-135, in=-45, looseness=25] 
	      node [black, below=-0.05] {\small $c \:|\: c$} (C);
	
	\draw [arrow] (B) to node [black, above=-0.05, sloped] 
	      {\small $~~|\:\texttt{\nicecyan{a}}$} (D);
	\draw [arrow] (C) to node [black, below=-0.05, sloped] 
	      {\small $~~|\:\texttt{\nicered{b}}$} (E);
\end{scope}

\begin{scope}[xshift=4cm]
	\draw (0,0) node [dot] (A) {};
	\draw (1.5,0) node (B) {};

	\draw [arrow] (-0.6,0) to (A);
	
	\draw [arrow] (A) to [out=135, in=45, looseness=45] node [black, above=-0.05] 
	      {\small $\texttt{\nicecyan{a}} \:|\: \G^*\texttt{\nicecyan{a}}$} (A);
	\draw [arrow] (A) to [out=-135, in=-45, looseness=45] node [black, below=-0.05] 
	      {\small $\texttt{\nicered{b}} \:|\: \G^*\texttt{\nicered{b}}$} (A);
	
	\draw [arrow] (A) to node [black, above=-0.05] 
	      {\small $~~\quad|\: \G^*$} (B);
\end{scope}
\end{tikzpicture}
\hspace{-8mm}
%
}

\paragraph*{Two-way transducers.}
Allowing the input head to move in any direction, 
to the left or to the right, gives a more 
powerful model of transducer, which captures 
e.g.~the relation $\{(u,u^n) ~:~ u\in\S^*,n\in\Nat\}$.
To define two-way transducers, we adopt 
the convention that, for any given input $u\in\S^*$, 
$u(0)=\lftmark$ and $u(|u|+1)=\rgtmark$, where 
$\lftmark,\rgtmark\notin\S$ are special markers 
used as delimiters of the input.
In this way, a transducer can detect when an endpoint of the 
input has been reached.

A \emph{two-way transducer} is a tuple $T=(\S,\G,Q,I,E,F,L)$, 
whose components are defined just like those of a one-way transducer, 
except that 
the state set $Q$ is partitioned into two subsets,
$\Qleft$ and $\Qright$, the set $I$ of initial states 
is contained in $\Qright$, and the set $E$ of transition rules is contained in
$(Q\times\Sigma\times Q) \:\uplus\: 
 (\Qleft\times\{\lftmark\}\times\Qright) \:\uplus\:
 (\Qright\times\{\rgtmark\}\times\Qleft)$.
The partitioning of the set of states is useful for specifying which
letter is read from each state: states from $\Qleft$ read the letter 
to the left, whereas states from $\Qright$ read the letter to the right.
Given an input $u\in\S^*$, a configuration of a two-way transducer is a 
pair $(q,i)$, with $q\in Q$ and $i\in\{1,\dots,|u|+1\}$. Based on the types 
of source and target states in a transition rule, we can distinguish four types
of transitions between configurations (the output $v$ is always assumed
to range over the language $L(q,a,q')$):
\begin{itemize}
	\item $(q,i) \act{a \:|\: v} (q',i+1)$ if $(q,a,q')\in E$, $q,q'\in\Qright$, and $a=u(i)$,
	\item $(q,i) \act{a \:|\: v} (q',i)$ if $(q,a,q')\in E$, $q\in\Qright$, $q'\in\Qleft$, and $a=u(i)$, 
	\item $(q,i) \act{a \:|\: v} (q',i-1)$ if $(q,a,q')\in E$, $q,q'\in\Qleft$, and $a=u(i-1)$, 
    \item $(q,i) \act{a \:|\: v} (q',i)$ if $(q,a,q')\in E$, $q\in\Qleft$, $q'\in\Qright$, and $a=u(i-1)$.
\end{itemize}
Note that, when reading a marker $\lftmark$ or $\rgtmark$,
the transducer is obliged to make a U-turn, either left-to-right or right-to-left.
The notions of successful run, realized relation, and functional 
transducer are naturally generalized from the one-way to the two-way variant, 
(we refer to \cite{bmpp18} for more details). 

In \cite{BDGPicalp17}, a slight extension of two-way transducers,
called two-way transducers \emph{with common guess}, was proposed. 
Before processing its input, such a transducer can non-deterministically
guess some arbitrary annotation of the input over a fixed alphabet. 
Once an annotation is guessed, it remains the same during the computation. 
Transitions may then depend on the input letter and the guessed annotation 
at the current position. 
For example, this extension allows to define relations of
the form $\{ (u,vv) ~\mid~ u\in\S^*, v\in\G^*, |u|=|v|\}$.
Note that the extension with common guess does not increase the expressiveness of 
one-way transducers, since these are naturally closed under input projections. 
Likewise, common guess does not affect the expressive power of functional two-way 
transducers, since one can guess a canonical annotation at runtime.

\paragraph*{Classical vs origin semantics.}
In the previous definitions, we associated a classical semantics
to transducers (one-way or two-way), which gives rise to
relations or functions between input words over $\S$ and output 
words over $\G$.
In \cite{boj14icalp} an alternative semantics for transducers,
called \emph{origin semantics}, was introduced with the goal of
getting canonical transducers for any given word function.
Roughly speaking, in the origin semantics, every position of the output
word is annotated with the position of the input where that particular
output element was produced. 
This yields a bipartite graph, called \emph{origin graph}, with two linearly
ordered sets of nodes, representing respectively the input and the output elements,
and edges directed from output nodes to input nodes, representing the
so-called \emph{origins}.
%
\mywraptext{6cm}{%
The figure depicts an input-output pair $(a^n,b^n)$
annotated with two different origins: in the first graph, a position $i$ in the
output has its origin at the same position $i$ in the input, while in the second
graph it has origin at position $n-i$.
}%
\mywrapfig{6cm}{%
\begin{tikzpicture}[yscale=0.9]
\begin{scope}
	
	\draw (0,1.25) node (1) {$\texttt{a}$};
	\draw (0.3,1.25) node (2) {$\texttt{a}$};
	\draw (0.6,1.25) node (3) {$\texttt{a}$};
	\draw (1,1.25) node (4) {$\dots$};
	\draw (1.4,1.25) node (5) {$\texttt{a}$};
	\draw (1.7,1.25) node (6) {$\texttt{a}$};
	\draw (2,1.25) node (7) {$\texttt{a}$};
	
	\draw (0,0) node (1') {$\texttt{b}$};
	\draw (0.3,0) node (2') {$\texttt{b}$};
	\draw (0.6,0) node (3') {$\texttt{b}$};
	\draw (1,0) node (4') {$\dots$};
	\draw (1.4,0) node (5') {$\texttt{b}$};
	\draw (1.7,0) node (6') {$\texttt{b}$};
	\draw (2,0) node (7') {$\texttt{b}$};
	
	\draw [arrow] (1'.north) to (1.south);
	\draw [arrow] (2'.north) to (2.south);
	\draw [arrow] (3'.north) to (3.south);
	\draw [arrow] (5'.north) to (5.south);
	\draw [arrow] (6'.north) to (6.south);
	\draw [arrow] (7'.north) to (7.south);
\end{scope}
	
\begin{scope}[xshift=3.25cm]
	\draw (0,1.25) node (1) {$\texttt{a}$};
	\draw (0.3,1.25) node (2) {$\texttt{a}$};
	\draw (0.6,1.25) node (3) {$\texttt{a}$};
	\draw (1,1.25) node (4) {$\dots$};
	\draw (1.4,1.25) node (5) {$\texttt{a}$};
	\draw (1.7,1.25) node (6) {$\texttt{a}$};
	\draw (2,1.25) node (7) {$\texttt{a}$};
	
	\draw (0,0) node (1') {$\texttt{b}$};
	\draw (0.3,0) node (2') {$\texttt{b}$};
	\draw (0.6,0) node (3') {$\texttt{b}$};
	\draw (1,0) node (4') {$\dots$};
	\draw (1.4,0) node (5') {$\texttt{b}$};
	\draw (1.7,0) node (6') {$\texttt{b}$};
	\draw (2,0) node (7') {$\texttt{b}$};
	
	\draw [arrow] (7'.north) to (1.south);
	\draw [arrow] (1'.north) to (7.south);
	\draw [arrow] (6'.north) to (2.south);
	\draw [arrow] (2'.north) to (6.south);
	\draw [arrow] (5'.north) to (3.south);
	\draw [arrow] (3'.north) to (5.south);
\end{scope}
\end{tikzpicture}
\hspace{-6mm}	

%
}

Formally, the origin semantics of a transducer is a
relation $S_o \subseteq \S^* \times(\G \times \Nat)^*$ consisting of
pairs $(u,\nu)$, where $u=a_1\dots a_n\in\S^*$ is a possible input and
$\nu=\nu_1\dots\nu_{m+1}\in(\G\times\Nat)^*$ is the corresponding
output tagged with input positions, as induced by a successful run of
the form
$(q_0,i_0)\act{a_1 \:|\: \nu_1} (q_1,i_1) \act{a_2 \:|\: \nu_2} \dots
(q_m,i_m) \act{\:|\: \nu_{m+1}}$, with each
$\nu_j\in (\G\times\{i_j\})^*$. 
We identify a pair $(u,\nu)$ with the origin graph obtained by arranging
the input elements and the output elements along two lines (we omit 
the successor relation in the graph notation), and adding edges from 
every output element $(a,i)$ to the $i$-th element of the input.
Given an origin graph $\synch=(u,\nu)$, we denote by
$\In(\synch)$, $\Out(\synch)$, and $\Orig(\synch)$ respectively the
input word $u$, the output word obtained by projecting $\nu$ onto the
finite alphabet $\G$, and the sequence of input positions (origins)
obtained by projecting $\nu$ onto $\Nat$.

For one-way transducers, there is a simpler
 presentation of origin graphs in the form of interleaved words. Assuming
that the alphabets $\S$ and $\G$ are disjoint, we interleave the input
and output word by appending after each input symbol the output word
produced by reading that symbol.
For example, if $\S=\{a\}$ and $\G=\{b\}$, then a word 
of the form $abb\dots abb$ represents an origin graph $(a^n,\nu)$, 
where $|\nu|=2n$ and $\nu(2i-1)=\nu(2i)=(b,i)$, for all $i=1,\dots,n$.
Words over $\S\uplus\G$ are called \emph{synchronized words}.
Just as every synchronized word represents an origin graph,  
a regular language over $\S\uplus\G$ represents a rational relation
with origins, or equally the origin semantics of a one-way transducer.

In general, when comparing transducers, we can refer to one of the two 
possible semantics. Clearly, two transducers that 
are equivalent in the origin semantics are also equivalent in the 
classical semantics, but the converse  is not true.



\section{Resynchronizations}\label{sec:expressiveness}

The central concept of this paper is that of resynchronization,
which is a transformation of origin graphs 
that preserves the underlying input and output words. 
The concept was originally introduced in \cite{FiliotJLW16},
and mostly studied in the setting of rational relations.
Here we use the concept in the more general setting of relations
definable by two-way transducers.

Formally, a \emph{resynchronization} is any relation
$R \subseteq (\S^*\times(\G\times\Nat)^*)^2$ 
that contains only pairs $(\synch,\synch')$ of origin graphs
such that $\In(\synch) = \In(\synch')$
and $\Out(\synch) = \Out(\synch')$, namely, with the same projections onto 
the input and output alphabets.%
\footnote{In \cite{FiliotJLW16}, resynchronizers were further restricted 
	to contain at least the pairs of identical origin graphs.
	Here we prefer to avoid this additional restriction and reason with 
	a more general class of resynchronizations.}
A resynchronization $R$ can be used to modify the origin information of a relation,
while preserving the underlying input-output pairs. 
Formally, for every relation $S_o \subseteq \S^*\times (\G\times\Nat)^*$ 
with origins, we define the \emph{resynchronized relation} 
$
  R(S_o) = \{ \synch' \in S_o \:\mid\: (\synch,\synch')\in R, ~ \synch\in S_o\}.
$
Note that if the origin information is removed from both $R(S_o)$ and $S_o$, 
then $R(S_o) \subseteq S_o$. 
Moreover, $R(S_o)=S_o$ when $R$ is the \emph{universal resynchronization}, 
that is, when $R$ contains all pairs $(\synch,\synch')$, 
with $\synch,\synch'\in\S^*\times(\G\times\Nat)^*$,
$\In(\synch) = \In(\synch')$, and $\Out(\synch) = \Out(\synch')$.

\paragraph*{Definability of resynchronized relations.} 
An important property that we need to guarantee in order to enable
some effective reasoning on resynchronizations is the definability
of the resynchronized relations. More precisely, given a class $\Cc$ 
of transducers, we say that a resynchronization $R$ \emph{preserves
	definability in $\Cc$} if for every transducer $T\in\Cc$, the relation 
$R(T)$ is realized by some transducer $T'\in\Cc$, that can be effectively 
constructed from $R$ and $T$. The class $\Cc$ will usually be the class
of one-way transducers or the class of two-way transducers, and this
will be clear from the context.

Below, we recall the definitions of two important classes of resynchronizations,
called rational \cite{FiliotJLW16} and regular resynchronizers \cite{bmpp18}, 
that preserve definability by one-way transducers and by two-way transducers, respectively.
We will then compare the expressive power of these two formalisms, showing that 
rational resynchronizers are strictly less expressive than 
regular resynchronizers.

\paragraph*{Rational resynchronizers.}	
A natural definition of resynchronizers for one-way transducers
is obtained from rational relations over the disjoint union 
$\Sigma\uplus\Gamma$ of the input and output alphabets. 
Any such relation consists of pairs of synchronized words $(w,w')$,
and thus represents a transformation of origin graphs. 
In addition, if the induced synchronized words $w$ and $w'$ 
have the same projections over the input and output alphabets, 
then the relation represents a resynchronization. 
We also recall that rational relations are captured by one-way transducers,
so, by analogy, we call \emph{rational resynchronizer} any one-way transducer 
over $\Sigma\uplus\Gamma$ that preserves the input and output projections. 

It is routine to see that rational resynchronizers preserve definability 
of relations by one-way transducers.
It is also worth noting that every rational resynchronizer is a length-preserving 
transducer. By a classical result of Elgot and Mezei~\cite{ElgotM65}
every rational resynchronizer can be assumed to be a \emph{letter-to-letter} 
one-way transducer, 
namely, a transducer with transitions 
of the form $q\act{a \:|\: b} q'$, with $a,b\in\S\uplus\G$.

\begin{figure}
\centering  
\begin{tikzpicture}
	
\begin{scope}[color=nicecyan]
    \draw (0,3.5) node {$T_1$};
    
	\draw (0.25,2.75) node [dot] (A) {};
	\draw (1.75,2.75) node [dot] (B) {};
	
	\draw [arrow] (-0.25,3) to (A);
	\draw [arrow] (A) to (-0.25,2.5);
	\draw [arrow] (A) to [bend left] node [above] {\small $\texttt{a} \:|\: \texttt{b}$} (B);
	\draw [arrow] (B) to [bend left] node [below] {\small $\texttt{a} \:|\: \texttt{b}$} (A);
\end{scope}
\begin{scope}[yshift=0.5cm,color=nicecyan]
	\draw (0,0.9) node (1) {$\texttt{a}$};
	\draw (0.3,0.9) node (2) {$\texttt{a}$};
	\draw (0.6,0.9) node (3) {$\texttt{a}$};
	\draw (0.9,0.9) node (4) {$\texttt{a}$};
	\draw (1.3,0.9) node (5) {$\dots$};
	\draw (1.7,0.9) node (6) {$\texttt{a}$};
	\draw (2,0.9) node (7) {$\texttt{a}$};
	
	\draw (0,0) node (1') {$\texttt{b}$};
	\draw (0.3,0) node (2') {$\texttt{b}$};
	\draw (0.6,0) node (3') {$\texttt{b}$};
	\draw (0.9,0) node (4') {$\texttt{b}$};
	\draw (1.3,0) node (5') {$\dots$};
	\draw (1.7,0) node (6') {$\texttt{b}$};
	\draw (2,0) node (7') {$\texttt{b}$};
	
	\draw [arrow] (1'.north) to (1.south);
	\draw [arrow] (2'.north) to (2.south);
	\draw [arrow] (3'.north) to (3.south);
	\draw [arrow] (4'.north) to (4.south);
	\draw [arrow] (6'.north) to (6.south);
	\draw [arrow] (7'.north) to (7.south);
\end{scope}
	
\begin{scope}[xshift=5cm,color=nicered]
    \draw (0,3.5) node {$T_2$};
    
	\draw (0.25,2.75) node [dot] (A) {};
	\draw (1.75,2.75) node [dot] (B) {};
	
	\draw [arrow] (-0.25,3) to (A);
	\draw [arrow] (A) to (-0.25,2.5);
	\draw [arrow] (A) to [bend left] node [above] {\small $\texttt{a} \:|\: \texttt{bb}$} (B);
	\draw [arrow] (B) to [bend left] node [below] {\small $\texttt{a} \:|\: \varepsilon$} (A);
\end{scope}
\begin{scope}[xshift=5cm,yshift=0.5cm,color=nicered]
	\draw (0,0.9) node (1) {$\texttt{a}$};
	\draw (0.3,0.9) node (2) {$\texttt{a}$};
	\draw (0.6,0.9) node (3) {$\texttt{a}$};
	\draw (0.9,0.9) node (4) {$\texttt{a}$};
	\draw (1.3,0.9) node (5) {$\dots$};
	\draw (1.7,0.9) node (6) {$\texttt{a}$};
	\draw (2,0.9) node (7) {$\texttt{a}$};

	\draw (0,0) node (1') {$\texttt{b}$};
	\draw (0.3,0) node (2') {$\texttt{b}$};
	\draw (0.6,0) node (3') {$\texttt{b}$};
	\draw (0.9,0) node (4') {$\texttt{b}$};
	\draw (1.3,0) node (5') {$\dots$};
	\draw (1.7,0) node (6') {$\texttt{b}$};
	\draw (2,0) node (7') {$\texttt{b}$};

	\draw [arrow] (1'.north) to ([xshift=-1pt] 1.south);
	\draw [arrow] (2'.north) to ([xshift=1pt] 1.south);
	\draw [arrow] (3'.north) to ([xshift=-1pt] 3.south);
	\draw [arrow] (4'.north) to ([xshift=1pt] 3.south);
	\draw [arrow] (6'.north) to ([xshift=-1pt] 6.south);
	\draw [arrow] (7'.north) to ([xshift=1pt] 6.south);
\end{scope}

\begin{scope}[xshift=10cm]
	\draw (0,3.5) node {$R$};

	\draw (0.5,2.75) node [dot] (A) {};
	\draw (1.5,3.25) node [dot] (B) {};
	\draw (2.5,2.75) node [dot] (C) {};
	\draw (1.5,2.25) node [dot] (D) {};

	\draw [arrow] (0,3) to (A);
	\draw [arrow] (A) to (0,2.5);
	\draw [arrow] (A) to node [above, sloped] 
	      {\small ${\color{nicecyan} \texttt{a}} \:|\: {\color{nicered} \texttt{a}}$} (B);
	\draw [arrow] (B) to node [above, sloped] 
	      {\small ${\color{nicecyan} \texttt{b}} \:|\: {\color{nicered} \texttt{b}}$} (C);
	\draw [arrow] (C) to node [below, sloped] 
	      {\small ${\color{nicecyan} \texttt{a}} \:|\: {\color{nicered} \texttt{b}}$} (D);
	\draw [arrow] (D) to node [below, sloped] 
	      {\small ${\color{nicecyan} \texttt{b}} \:|\: {\color{nicered} \texttt{a}}$} (A);
\end{scope}
\begin{scope}[xshift=10cm,yshift=0.5cm]
	\draw (1.25,0.9) node [nicecyan] {\texttt{abababab}~\dots~\texttt{abab}};
	\draw (1.25,0.4) node [rotate=-90] {$\mapsto$};
	\draw (1.25,0) node [nicered] {\texttt{abbaabba}~\dots~\texttt{abba}};
\end{scope}
\end{tikzpicture}
\caption{Two functional 1NFT $T_1,T_2$, their origin graphs, and a rational resynchronizer $R$.}
\label{fix:resynch}
\end{figure}

	

\begin{example}\label{ex:rational-resync} 
Consider the functional one-way transducers $T_1,T_2$ in Figure \ref{fix:resynch}. 
The domain of both transducers is $(aa)^*$. 
An origin graph of $T_1$ is a one-to-one mapping 
from the output to the input (each $a$ produces one $b$).
On the other hand, in an origin graph of $T_2$, every $a$ at 
input position $2i+1$ is the origin of two $b$'s at output positions 
$2i+1,2i+2$.   
The transducer $R$ depicted to the right of the figure transforms
synchronized words while preserving their input and output projections.
It is then a rational resynchronizer.
In particular, $R$ transforms origin graphs of $T_1$ to origin graphs of $T_2$. 
\end{example}

\paragraph*{Regular resynchronizers.}
While languages of synchronized words are a faithful representation of
rational relations, this notation does not capture regular relations,
so relations realized by two-way transducers.  An alternative
formalism for resynchronizations of relations defined by two-way
transducers was proposed in \cite{bmpp18} under the name of MSO
resynchronizer (here we call it simply `resynchronizer'). 
The formalism describes pairs $(\synch,\synch')$ of origin graphs by
means of two relations $\move_\otype$ and $\nxt_{\otype,\otype'}$ ($\otype,\otype'\in\G$) in the spirit of MSO graph
transductions. More precisely:
\begin{itemize}
	\item $\move_\otype$ describes how the origin $y$ of an output 
	position $x$ labeled by $\otype$ is redirected to a new origin $z$ 
	(for short, we call $y$ and $z$ the \emph{source} and \emph{target} origins of $x$).
	Formally, $\move_\otype$ is a relation contained in $\Sigma^*\times\Nat\times\Nat$ that
	induces resynchronization pairs $(\synch,\synch')$ such that,
	for all output positions $x$, 
	if $\Out(\synch)(x)=\otype$, $\Orig(\synch)(x)=y$, and $\Orig(\synch')(x)=z$, 
	then $(\In(\synch),y,z) \in \move_\otype$.
	\item $\nxt_{\otype,\otype'}$ constrains the target origins $z$ and $z'$ 
	of any two consecutive output positions $x$ and $x+1$ that are labelled 
	by $\otype$ and $\otype'$, respectively.
	Formally, $\nxt_{\otype,\otype'}$ is a relation contained in $\Sigma^*\times\Nat\times\Nat$ 
	that induces resynchronization pairs $(\synch,\synch')$ such that,
	for all output positions $x$ and $x+1$, 
	if $\Out(\synch)(x)=\otype$, $\Out(\synch)(x+1)=\otype'$, 
	$\Orig(\synch')(x)=z$, and $\Orig(\synch')(x+1)=z'$, 
	then $(\In(\synch),z,z') \in \nxt_{\otype,\otype'}$.
\end{itemize}
A \emph{resynchronizer} is  a tuple 
$\big((\move_\otype)_{\otype\in\G},(\nxt_{\otype,\otype'})_{\otype,\otype'\in\G}\big)$,
and defines the resynchronization $R$ with pairs $(\synch,\synch')$ 
induced by 
the relations $\move_\otype$ and $\nxt_{\otype,\otype'}$, where $\otype,\otype'\in\G$.

In order to obtain a well-behaved class of resynchronizations, that in particular 
preserves definability by two-way transducers, we need to enforce some restrictions.
First, we require that the relations $\move_\otype$ and $\nxt_{\otype,\otype'}$ are 
described by regular languages (or equally, definable in monadic second-order logic). 
By this we mean that we encode the input positions $y,z,z'$ with suitable annotations over 
the binary alphabet $\bbB=\{0,1\}$, so that we can identify the
relations $\move_\otype$ and $\nxt_{\otype,\otype'}$ with some
\emph{regular} languages over the expanded alphabet $\S\times\bbB^2$. 
We call \emph{regular resynchronizer} a resynchronizer 
where the relations $\move_\otype$ and $\nxt_{\otype,\otype'}$ are 
given by regular languages.
In addition, we also require that regular resynchronizers are \emph{$k$-bounded}, 
for some $k\in\bbN$, in the sense that for every input $u$, every output letter 
$\otype$, and every target origin $z$, there are at most $k$ positions $y$ 
such that $(u,y,z)\in\move_\otype$.

\begin{example}\label{ex:regular-resync} 
Consider the resynchronization $R$ that contains the pairs 
$(\synch,\synch')$,	where the origin graph $\synch$ (resp.~$\synch'$) 
maps every output position to the first (resp.~last) input position, 
as 
\mywraptextjustified{5cm}{%
		shown in the figure.
		Note that $R$ is `one-way', in the sense that it contains only origin
		graphs that are admissible outcomes of runs of one-way transducers. 
		However, $R$ is not 
		definable by any rational resynchronizer, since, in terms of 
		synchronized words, it should map $a\,v\,u$ to $a\,u\,v$, 
}%
\mywrapfig{5cm}{%
		\hspace{4mm}
\begin{tikzpicture}[scale=0.75,baseline]
\draw (0.5,1.7) node [hidden] (I1) {\small $a_1$};
\draw (1.5,1.7) node [hidden] (I2) {\small $a_2$};
\draw (2.5,1.7) node [hidden] (I3) {\small $a_3$};
\draw (3.5,1.7) node [hidden] (I4) {\small $a_4$};
\draw (4.5,1.7) node [hidden] (I5) {\small $a_5$};

\draw (0,0) node [hidden] (O0) {\small $b_1$};
\draw (1,0) node [hidden] (O1) {\small $b_2$};
\draw (2,0) node [hidden] (O2) {\small $b_3$};
\draw (3,0) node [hidden] (O3) {\small $b_4$};
\draw (4,0) node [hidden] (O4) {\small $b_5$};
\draw (5,0) node [hidden] (O5) {\small $b_5$};

\draw (O0) edge [arrow, nicered, dashed] ([xshift=-5mm] I5.south);
\draw (O1) edge [arrow, nicered, dashed] ([xshift=-3mm] I5.south);
\draw (O2) edge [arrow, nicered, dashed] ([xshift=-1mm] I5.south);
\draw (O3) edge [arrow, nicered, dashed] ([xshift=-0.25mm] I5.south);
\draw (O4) edge [arrow, nicered, dashed] ([xshift=2mm] I5.south);
\draw (O5) edge [arrow, nicered, dashed] ([xshift=4mm] I5.south);

\draw (O0) edge [arrow, nicecyan] ([xshift=-2mm] I1.south);
\draw (O1) edge [arrow, nicecyan] ([xshift=-1mm] I1.south);
\draw (O2) edge [arrow, nicecyan] ([xshift=0.25mm] I1.south);
\draw (O3) edge [arrow, nicecyan] ([xshift=2mm] I1.south);
\draw (O4) edge [arrow, nicecyan] ([xshift=4mm] I1.south);
\draw (O5) edge [arrow, nicecyan] ([xshift=6mm] I1.south);
\end{tikzpicture}


%
}
for every $a\in\S$, $u\in\S^*$, and $v\in\G^*$, 
which is clearly not a rational relation.
The resynchronization $R$ can however be defined by a $1$-bounded regular resynchronizer,
for example $\big((\move_\otype)_{\otype\in\G},(\nxt_{\otype,\otype'})_{\otype,\otype'\in\G}\big)$, 
where $\move_\otype = \{ (u,y,z) \:\mid\: u\in\S^*, \: y=1, \: z=|u| \}$ and 
$\nxt_{\otype,\otype'}=\S^* \times \Nat \times \Nat$.
\end{example}

One can observe that, in the previous example, $\nxt$ is not restricting
the resynchronization further. For other examples that use $\nxt$ in a 
non-trivial way see for instance \cite[Example~13]{bmpp18}.

The notion of resynchronizer can be slightly enhanced in order 
to allow some additional amount of non-determinism in the way origin graphs 
are transformed (this enhanced notion is indeed the one proposed in \cite{bmpp18}).
The principle is very similar to the idea of enhancing two-way transducers 
with common guess. 
More precisely, we allow additional monadic parameters that annotate 
the input and the output, thus obtaining words over expanded alphabets of the 
form $\S\times\S'$ and $\G\times\G'$.
A resynchronizer \emph{with parameters} is thus a tuple 
$\big(\ipar,\opar,(\move_\otype)_\otype,(\nxt_{\otype,\otype'})_{\otype,\otype'}\big)$,
where 
$\ipar \subseteq (\S\times\S')^*$ 
describes the possible annotations of the input,
$\opar \subseteq (\G\times\G')^*$ describes the possible annotations of the output,
and, for every $\otype,\otype'\in\G\times\G'$,
$\move_\otype \subseteq (\S\times\S'\times\bbB^2)^*$ describes a transformation from 
source to target origins of $\otype$-labelled output positions, and 
$\nxt_{\otype,\otype'} \subseteq (\S\times\S'\times\bbB^2)$ constraints the target origins
of consecutive output positions labelled by $\otype$ and $\otype'$. 
The resynchronization pairs $(\synch,\synch')$ in this case 
are induced by $\big((\move_\otype)_{\otype\in\G\times\G'},(\nxt_{\otype,\otype'})_{\otype,\otype'\in\G\times\G'}\big)$
and are obtained by projecting
the input and output over the original alphabets $\S$ and $\G$, 
under the assumption that the annotations satisfy $\ipar$ and $\opar$. 
A resynchronizer with parameters is called \emph{regular} 
if all its relations are regular. A regular resynchronizer 
is called \emph{bounded} if it is $k$-bounded, for some $k$.

In \cite{bmpp18} it was shown that, given a bounded regular resynchronizer 
$R$ with parameters and a two-way transducer $T$ with common guess, 
one can construct a two-way transducer $T'$ with common guess such that 
$T' \oeq R(T)$.  The notation $T' \oeq R(T)$ is used to represent the 
fact that $T'$ and $R(T)$ define the same relation in the origin semantics. 

\medskip

\emph{Unless otherwise stated, hereafter we assume that two-way transducers 
are enhanced with common guess, and regular resynchronizers are enhanced 
with parameters.}

\paragraph*{Rational vs regular resynchronizers.}

Our  first result  shows that bounded, regular resynchronizers are 
more expressive than rational resynchronizers.  Consider for instance
Example \ref{ex:rational-resync}: it can be captured by the regular
resynchronizer with $\opar$ annotating even/odd positions. The
resynchronizer shifts the origins of the even positions of the output
by one to the left and keeps the origins of the odd positions
unchanged. So here $\move_\otype$ can be described by a regular language.
On the other hand, Example \ref{ex:regular-resync} shows that there are bounded, 
regular resynchronizers that cannot be captured by rational resynchronizers.
	
\begin{restatable}{theorem}{RationalVsRegular}\label{thm:rational-vs-regular}
For every rational resynchronizer, there is an equivalent $1$-bounded regular resynchronizer.
\end{restatable}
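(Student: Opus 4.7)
The plan is to simulate, via the parameter annotations of a regular resynchronizer, the letter-to-letter run of the rational resynchronizer $R$ on the underlying synchronized words.

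First I would apply the Elgot--Mezei theorem to normalize $R$ into a letter-to-letter one-way transducer over $\Omega = \Sigma \uplus \Gamma$ with finite state set $Q$. Any accepting run of $R$ on the synchronized word $w$ representing $\synch$, producing the synchronized word $w'$ representing $\synch'$, then has $|w|$ steps, each reading and writing one letter of $\Omega$. Each input position $p$ corresponds to the unique step reading the $p$-th $\Sigma$-letter of $w$, and each output position $i$ corresponds to the unique step reading the $i$-th $\Gamma$-letter of $w$; the step writing the $i$-th $\Gamma$-letter of $w'$ may be a different step, and this mismatch is what the annotations must track.

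Next I would let $\ipar$ annotate each $u$-position $p$ with the states of $R$ before and after its step, the step's type (reading $\Sigma$ and writing $\Sigma$, or reading $\Sigma$ but writing $\Gamma$), the written letter, and, in the second case, a bounded-alphabet slot identifier. Symmetrically, $\opar$ annotates each $v$-position $i$ with the states at its source step and its target step, the types of these two steps, and the matching slot identifier when they differ. Since $R$ has finitely many states, the number of $\Gamma$-letters simultaneously held in transit by $R$ is bounded, so the slot alphabet can be taken finite. Regularity of $\ipar$ and $\opar$ checks that each guessed transition belongs to $R$, that state sequences chain correctly along consecutive same-side steps, and that slot identifiers follow a valid non-crossing matching of bounded depth. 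The relations $\move_\otype$ and $\nxt_{\otype,\otype'}$ enforce cross-consistency between the two sides: $\move_\otype(u,y,z)$ checks that the states recorded by $\ipar$ around $u[y]$ are compatible with the source-side states in $\otype$ and that the annotations around position $z$ in $u$ are compatible with the target-side states and slot identifier in $\otype$, while $\nxt_{\otype,\otype'}$ propagates state continuity along consecutive output positions.

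The main obstacle is the $1$-boundedness property. Because $R$ may keep a $\Gamma$-letter in its state for arbitrarily many steps, the lag $z-y$ is unbounded in general and cannot be recorded directly in $\opar$. The argument instead exploits the bounded buffering capacity of $R$: the matching between $\Gamma$-reading and $\Gamma$-writing steps forms a non-crossing pairing of bounded depth, and the bounded slot identifier in $\opar$, combined with the state sequence encoded in $\ipar$, pins down the unique source step that matches the target step whose location in $w'$ is determined by $z$. Hence for any triple $(u, z, \otype)$ there is exactly one $y$ satisfying $\move_\otype$, yielding the desired $1$-bounded regular resynchronizer.
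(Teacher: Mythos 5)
Your overall strategy --- normalize $R$ to letter-to-letter, encode its run via the parameters of the regular resynchronizer, and exploit the boundedness of the amount of material ``in transit'' --- is the same as the paper's, and the quantity you call the buffering capacity is essentially the paper's notion of \emph{lag}, which is indeed bounded by $|Q|$ (Lemma~\ref{lem:rational-lag}). However, there is a genuine gap in how you encode the run. The relation $\move_\otype$ only sees the annotated \emph{input} word together with the two marked positions $y,z$ and the type $\otype$ of a single output position; it has no access to the output word or to the run of $R$ except through what $\ipar$ writes on input positions and what the single letter $\otype$ carries. Your $\ipar$ annotates each input position with data about ``its step'' only, so the (possibly unboundedly long) portions of the run that read and write output letters between two consecutive input letters are not encoded anywhere that $\move_\otype$ can see. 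Consequently $\move_\otype$ cannot verify state continuity along the run, nor locate the step at which a given output letter is rewritten, nor count the input letters written in between (which is what actually determines the target origin). The paper first solves the special case where output blocks have bounded length --- there the whole run can be written on the input, each position $y$ carrying a bounded factor $\r_y$ --- and then needs a substantive combinatorial lemma (Lemma~\ref{lem:blocks}) showing that every output block factorizes into $\cO(|Q|^2)$ sub-blocks, each of which survives as a factor of the target word and hence has a single target origin; only this reduction makes a finite ``slot''/offset alphabet sufficient. Your bounded-depth non-crossing matching is a true observation, but it does not by itself yield this factorization: within one unbounded output block all positions share the same source origin, and you must still argue that only boundedly many distinct target origins (hence types) occur, otherwise two positions with the same type $\otype$ and the same $y$ but different correct targets $z\neq z'$ would force $\move_\otype$ to contain both triples, and the resynchronizer would admit spurious redirections.

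A second, smaller gap: ``compatibility of the recorded states'' around $y$ and $z$ cannot pin down the correct pair, since states recur along the run; the paper instead uses a counting criterion (the number of output, resp.\ input, symbols produced between two designated transitions must equal a lag value readable from the annotation) to turn $\om_\r$ and $\im_\r$ into regular partial bijections on input positions. Relatedly, the correctness of the per-position output annotation can only be enforced through $\nxt_{\otype,\otype'}$, which by definition speaks about \emph{target} origins, so one must recover source origins through $\move_\otype$ itself; this creates a circularity that the paper breaks by syntactically restricting $\move_\otype$ to the bijective part of the candidate relation. Your proposal does not address either of these points.
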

	
\ifLIPICS
The proof of the above result is rather technical and can be found in
the extended version.
Here we only provide a rough idea.
\fi
\ifARXIV
The proof of the above result is rather technical, so we first provide a rough
idea.
\fi
Consider a rational resynchronizer $R$, that is, 
a one-way transducer that transforms synchronized 
\ifLIPICS
words while preserving the input and output
\fi
\mywraptextjustified{7cm}{%
\ifARXIV
words while preserving the input and output
\fi
projections. For example, the figure to the right 
    represents
	a possible pair of synchronized words,
	denoted $w$ and $w'$, shown in blue and in red, respectively,
        such that $(w,w') \in R$.
	We assume that $\S=\{a\}$ and 
}
\mywrapfig{7cm}{%
	\vspace{-4.5mm}\hspace{4mm}%
\begin{tikzpicture}[xscale=1.25,yscale=0.8,baseline]
	
	\draw [nicecyan] (0,1.25) node [hidden] (1) {\small $\texttt{a}$};
	\draw [nicecyan] (0.3,1.25) node [hidden] (2) {\small $\texttt{a}$};
	\draw [nicecyan] (0.6,1.25) node [hidden] (3) {\small $\texttt{b}$};
	\draw [nicecyan] (0.9,1.25) node [hidden] (4) {\small $\texttt{a}$};
	\draw [nicecyan] (1.2,1.25) node [hidden] (5) {\small $\texttt{a}$};
	\draw [nicecyan] (1.5,1.25) node [hidden] (6) {\small $\texttt{a}$};
	\draw [nicecyan] (1.8,1.25) node [hidden] (7) {\small $\texttt{b}$};
	\draw [nicecyan] (2.1,1.25) node [hidden] (8) {\small $\texttt{b}$};
	\draw [nicecyan] (2.4,1.25) node [hidden] (9) {\small $\texttt{a}$};
	\draw [nicecyan] (2.7,1.25) node [hidden] (10) {\small $\texttt{a}$};
	\draw [nicecyan] (3.0,1.25) node [hidden] (11) {\small $\texttt{a}$};
	\draw [nicecyan] (3.3,1.25) node [hidden] (12) {\small $\texttt{b}$};
	\draw [nicecyan] (3.6,1.25) node [hidden] (13) {\small $\texttt{a}$};
	\draw [nicecyan] (3.9,1.25) node [hidden] (14) {\small $\texttt{a}$};
	\draw [nicecyan] (4.2,1.25) node [hidden] (15) {\small $\texttt{b}$};
	\draw [nicecyan] (4.5,1.25) node [hidden] (16) {\small $\texttt{b}$};
	\draw [nicecyan] (4.8,1.25) node [hidden] (17) {\small $\texttt{b}$};
	\draw [nicecyan] (5.1,1.25) node [hidden] (18) {\small $\texttt{a}$};
	
	\draw [nicered] (0,0) node [hidden] (1') {\small $\texttt{a}$};
	\draw [nicered] (0.3,0) node [hidden] (2') {\small $\texttt{b}$};
	\draw [nicered] (0.6,0) node [hidden] (3') {\small $\texttt{a}$};
	\draw [nicered] (0.9,0) node [hidden] (4') {\small $\texttt{a}$};
	\draw [nicered] (1.2,0) node [hidden] (5') {\small $\texttt{a}$};
	\draw [nicered] (1.5,0) node [hidden] (6') {\small $\texttt{b}$};
	\draw [nicered] (1.8,0) node [hidden] (7') {\small $\texttt{a}$};
	\draw [nicered] (2.1,0) node [hidden] (8') {\small $\texttt{a}$};
	\draw [nicered] (2.4,0) node [hidden] (9') {\small $\texttt{a}$};
	\draw [nicered] (2.7,0) node [hidden] (10') {\small $\texttt{b}$};
	\draw [nicered] (3.0,0) node [hidden] (11') {\small $\texttt{b}$};
	\draw [nicered] (3.3,0) node [hidden] (12') {\small $\texttt{a}$};
	\draw [nicered] (3.6,0) node [hidden] (13') {\small $\texttt{a}$};
	\draw [nicered] (3.9,0) node [hidden] (14') {\small $\texttt{b}$};
	\draw [nicered] (4.2,0) node [hidden] (15') {\small $\texttt{a}$};
	\draw [nicered] (4.5,0) node [hidden] (16') {\small $\texttt{b}$};
	\draw [nicered] (4.8,0) node [hidden] (17') {\small $\texttt{b}$};
	\draw [nicered] (5.1,0) node [hidden] (18') {\small $\texttt{a}$};

	\draw [dashed,link] (1') to (1);
	\draw [dashed,link] (5') to (5);
	\draw [dashed,link] (9') to (10);
	\draw [dashed,link] (13') to (13);
	\draw [dashed,link] (15') to (14);

	\draw [link] (2') to (3);
	\draw [link] (6') to (7);
	\draw [link] (10') to (8);
	\draw [link] (11') to (12);
	\draw [link] (14') to (15);
	\draw [link] (16') to (16);
	\draw [link] (17') to (17);
	
	\draw [arrow, nicecyan] (3.north) to [out=90,in=60,looseness=2] (2.north);
	\draw [arrow, nicecyan] (7.north) to [out=90,in=60,looseness=2] ([xshift=1pt]6.north);
	\draw [arrow, nicecyan] (8.north) to [out=90,in=70,looseness=2] ([xshift=-1.5pt]6.north);
	\draw [arrow, nicecyan] (12.north) to [out=90,in=60,looseness=2] (11.north);
	\draw [arrow, nicecyan] (15.north) to [out=90,in=60,looseness=2] ([xshift=1.25pt]14.north);
	\draw [arrow, nicecyan] (16.north) to [out=90,in=70,looseness=2] ([xshift=-1pt]14.north);
	\draw [arrow, nicecyan] (17.north) to [out=90,in=80,looseness=2] ([xshift=-3.25pt]14.north);

	\draw [arrow, nicered] (2'.south) to [out=-90,in=-60,looseness=2] (1'.south);
	\draw [arrow, nicered] (6'.south) to [out=-90,in=-60,looseness=2] (5'.south);
	\draw [arrow, nicered] (10'.south) to [out=-90,in=-60,looseness=2] ([xshift=1pt]9'.south);
	\draw [arrow, nicered] (11'.south) to [out=-90,in=-70,looseness=2] ([xshift=-1.5pt]9'.south);
	\draw [arrow, nicered] (14'.south) to [out=-90,in=-60,looseness=2] (13'.south);
	\draw [arrow, nicered] (16'.south) to [out=-90,in=-60,looseness=2] ([xshift=1pt]15'.south);
	\draw [arrow, nicered] (17'.south) to [out=-90,in=-80,looseness=2] ([xshift=-1.5pt]15'.south);
\end{tikzpicture}%
\hspace{-4mm}%
%
	\vspace{-3mm}%
}
$\G=\{b\}$. 

From the given rational resynchronizer $R$ we construct an 
equivalent $1$-bounded, regular resynchronizer
$R'$. The natural approach is to encode a successful run $\r$ of $R$ 
over a synchronized word $w$.
By measuring the differences between the partial inputs and the 
partial outputs that are consumed and produced along the run $\r$, 
we obtain a partial bijection on the input letters
that represents a mapping from source origins to target origins.
This mapping determines the relation $\move_\otype$ of $R'$, and
in fact depends on a suitable additional annotation $\otype$ of 
the underlying output position. The additional annotation is needed 
in order to distinguish output elements with the same origin in the 
source, but with different origins in the target.

For example, by referring again to the figure above, consider the first 
occurrence of $b$ in $w$. Its origin in $w$ is given by the closest
input letter to the left (follow the blue arrow). To find the 
origin in $w'$, one finds the same occurrence of $b$ in 
$w'$ (solid line), then moves to the closest input letter to the
left (red arrow), and finally maps the latter input position in $w'$
back to $w$ (dashed line). The resulting position determines
the new origin (w.r.t.~$w'$) of the considered output element.

The remaining components $\ipar$, $\opar$, and $\nxt_{\otype,\otype'}$
of $R'$ are used to guarantee the correctness of the various annotations
(notably, the correctness of the encoding of the run $\r$ and that of 
the output annotations).

\ifARXIV
\bigskip
The rest of the section is devoted to a formal proof of Theorem \ref{thm:rational-vs-regular}.

We fix a one-way transducer $R$ over $\S\uplus\G$ that defines a rational resynchronizer.
We assume without loss of generality that $R$ is \emph{letter-to-letter}, as well as \emph{trimmed}, 
namely, every state in $R$ occurs in some successful run.
Note that $R$ maps synchronized words to synchronized words.
With a slight abuse of terminology, we shall use the terms `source' (resp.~`target') 
to refer to a synchronized word that is an input (resp.~an output) of $R$. 
When depicting examples, we will often adopt the convention that source synchronized
words are shown in blue, while target synchronized words are shown in red.
On the other hand, we shall use the terms `input' and `output' to refer to 
the projections of a synchronized word over $\S$ and $\G$, respectively
(note that, in this case, it does not matter whether the synchronized word
is the source or the target, since these have the same projections over 
$\S$ and $\G$).
The goal is to construct a $1$-bounded, regular resynchronizer $R'$, with parameters,
that defines the same resynchronization as $R$. 

We begin by introducing the key concept of \emph{lag}, which represents the difference
between the number of input symbols consumed and number of input symbols produced along
a certain run (not necessarily successful) of $R$. Formally, given a run of $R$ of the 
form $\r = q_0 \act{c_1 \:|\: d_1} q_1 \act{c_2 \:|\: d_2} \dots \act{c_n \:|\: d_n} q_n$,
we define its lag $\il(\r)$ as $|\projS(c_1\ldots
c_n)|-|\projS(d_1\ldots d_n)|$, 
where $\projS$ denotes the operation of projection onto the alphabet $\S$.
Note that, because $R$ is letter-to-letter, one could have equally defined 
$\il(\r)$ by counting the difference between produced output symbols and consumed output symbols.
Further note that the lag of a successful run is always $0$, since $R$ preserves the input projection.
Notice that the lag of a run is a notion distinct of the \emph{delay} of a 
rational resynchronizer presented in \cite{FiliotJLW16}
which is the maximum distance between the target origin of an output position and its source origin.
The following lemma shows that the lag is in fact a property of the initial 
and final states of a run. 
				
\begin{lemma}\label{lem:rational-lag}
For every two runs $\r_1$ and $\r_2$ of $R$ that begin with the same state and end with the same state,
$\il(\r_1)=\il(\r_2)$. 
\end{lemma}
	
\begin{proof}
Since $R$ is trimmed, both runs $\r_1$ and $\r_2$ can be completed to some 
successful runs of the form $\r' \r_1 \r''$ and $\r' \r_2 \r''$.
From $\il(\r' \r_1 \r'') = 0 = \il(\r' \r_2 \r'')$, it immediately follows 
that $\il(\r_1) = 0-(\il(\r') + \il(\r'')) = \il(\r_2)$.
\end{proof}
		
In view of the above lemma, we can associate a lag $\il(q)$ with each state $q$ of $R$
as follows: we choose an arbitrary run $\r$ that starts with the initial state of $R$
and ends with $q$, and let $\il(q)=\il(\r)$. This is well-defined since 
$\il(q)$ does not depend on the particular choice of $\r$.
For instance, if we consider the letter-to-letter resynchronizer $R$ of Example \ref{ex:rational-resync},
the only state with non-zero lag is the bottom one, which has lag $1$.
Note that, because each transition of $R$ can only increase or decrease the lag
by $1$, all lags range over the finite set $\{-|Q|,\dots,+|Q|\}$, where
$Q$ is the state space of $R$.

Next, we consider a successful run of $R$, say 
$\r= q_0 \act{c_1 \:|\: d_1} q_1 \act{c_2 \:|\: d_2} \dots \act{c_n \:|\: d_n} q_n$,
and define relations $\om_\r$ and $\im_\r$ between positions of $\r$.
These relations are used later to define a bijection between source and target origins.
The relation $\om_\r$ consists of all pairs $(i,j)$ of positions of $\r$ such that
$c_i$ and $d_j$ are output letters and 
$c_1 c_2\dots c_i \eqG d_1 d_2 \dots d_j$
(the latter is a shorthand for $\projG(c_1 c_2\dots c_i)  = \projG(d_1 d_2 \dots d_j)$). 
Note that $\om_\r$ is in fact a partial bijection.
In a similar way, we define $\im_\r$ as the partial bijection that contains all pairs $(i,j)$ 
of positions of $\r$ such that $c_i$ and $d_j$ are input letters 
and $c_1 c_2 \dots c_i \eqS d_1 d_2 \dots d_j$. 
	
\begin{example}\label{ex:match}
\ifLIPICS
Consider the pair of source and target synchronized 
words over $\S\uplus\G$ shown
\fi
\ifARXIV
We consider again the pair of source and target synchronized 
words that we used beofre to give an intuitive account of 
the proof of Theorem \ref{thm:rational-vs-regular}.
These synchronized words are  
\fi
\mywraptextjustified{7cm}{%
\ifLIPICS
to the right, where $\S=\{a\}$ and $\G=\{b\}$, 
which could be realized by a successful run $\r$ of $R$. 
For the moment, we overlook the blue and red arrows.
Because $R$ is letter-to-letter, any position in any of the
two words corresponds precisely to a position in the run
$\r$, so we can represent the
\fi
\ifARXIV
depicted
to the right, and assumed to be realized by a 
successful run $\r$ of $R$.
For the moment, we overlook the blue and red arrows.
Because $R$ is letter-to-letter, any position in any of the
two words corresponds precisely to a position in the run
$\r$, so we can represent the
relations $\om_\r$ 
\fi
}%
\mywrapfig{7cm}{%
	\vspace{-3.5mm}\hspace{4mm}%
\begin{tikzpicture}[xscale=1.25,yscale=0.8,baseline]
	
	\draw [nicecyan] (0,1.25) node [hidden] (1) {\small $\texttt{a}$};
	\draw [nicecyan] (0.3,1.25) node [hidden] (2) {\small $\texttt{a}$};
	\draw [nicecyan] (0.6,1.25) node [hidden] (3) {\small $\texttt{b}$};
	\draw [nicecyan] (0.9,1.25) node [hidden] (4) {\small $\texttt{a}$};
	\draw [nicecyan] (1.2,1.25) node [hidden] (5) {\small $\texttt{a}$};
	\draw [nicecyan] (1.5,1.25) node [hidden] (6) {\small $\texttt{a}$};
	\draw [nicecyan] (1.8,1.25) node [hidden] (7) {\small $\texttt{b}$};
	\draw [nicecyan] (2.1,1.25) node [hidden] (8) {\small $\texttt{b}$};
	\draw [nicecyan] (2.4,1.25) node [hidden] (9) {\small $\texttt{a}$};
	\draw [nicecyan] (2.7,1.25) node [hidden] (10) {\small $\texttt{a}$};
	\draw [nicecyan] (3.0,1.25) node [hidden] (11) {\small $\texttt{a}$};
	\draw [nicecyan] (3.3,1.25) node [hidden] (12) {\small $\texttt{b}$};
	\draw [nicecyan] (3.6,1.25) node [hidden] (13) {\small $\texttt{a}$};
	\draw [nicecyan] (3.9,1.25) node [hidden] (14) {\small $\texttt{a}$};
	\draw [nicecyan] (4.2,1.25) node [hidden] (15) {\small $\texttt{b}$};
	\draw [nicecyan] (4.5,1.25) node [hidden] (16) {\small $\texttt{b}$};
	\draw [nicecyan] (4.8,1.25) node [hidden] (17) {\small $\texttt{b}$};
	\draw [nicecyan] (5.1,1.25) node [hidden] (18) {\small $\texttt{a}$};
	
	\draw [nicered] (0,0) node [hidden] (1') {\small $\texttt{a}$};
	\draw [nicered] (0.3,0) node [hidden] (2') {\small $\texttt{b}$};
	\draw [nicered] (0.6,0) node [hidden] (3') {\small $\texttt{a}$};
	\draw [nicered] (0.9,0) node [hidden] (4') {\small $\texttt{a}$};
	\draw [nicered] (1.2,0) node [hidden] (5') {\small $\texttt{a}$};
	\draw [nicered] (1.5,0) node [hidden] (6') {\small $\texttt{b}$};
	\draw [nicered] (1.8,0) node [hidden] (7') {\small $\texttt{a}$};
	\draw [nicered] (2.1,0) node [hidden] (8') {\small $\texttt{a}$};
	\draw [nicered] (2.4,0) node [hidden] (9') {\small $\texttt{a}$};
	\draw [nicered] (2.7,0) node [hidden] (10') {\small $\texttt{b}$};
	\draw [nicered] (3.0,0) node [hidden] (11') {\small $\texttt{b}$};
	\draw [nicered] (3.3,0) node [hidden] (12') {\small $\texttt{a}$};
	\draw [nicered] (3.6,0) node [hidden] (13') {\small $\texttt{a}$};
	\draw [nicered] (3.9,0) node [hidden] (14') {\small $\texttt{b}$};
	\draw [nicered] (4.2,0) node [hidden] (15') {\small $\texttt{a}$};
	\draw [nicered] (4.5,0) node [hidden] (16') {\small $\texttt{b}$};
	\draw [nicered] (4.8,0) node [hidden] (17') {\small $\texttt{b}$};
	\draw [nicered] (5.1,0) node [hidden] (18') {\small $\texttt{a}$};

	\draw [dashed,link] (1') to (1);
	\draw [dashed,link] (5') to (5);
	\draw [dashed,link] (9') to (10);
	\draw [dashed,link] (13') to (13);
	\draw [dashed,link] (15') to (14);

	\draw [link] (2') to (3);
	\draw [link] (6') to (7);
	\draw [link] (10') to (8);
	\draw [link] (11') to (12);
	\draw [link] (14') to (15);
	\draw [link] (16') to (16);
	\draw [link] (17') to (17);
	
	\draw [arrow, nicecyan] (3.north) to [out=90,in=60,looseness=2] (2.north);
	\draw [arrow, nicecyan] (7.north) to [out=90,in=60,looseness=2] ([xshift=1pt]6.north);
	\draw [arrow, nicecyan] (8.north) to [out=90,in=70,looseness=2] ([xshift=-1.5pt]6.north);
	\draw [arrow, nicecyan] (12.north) to [out=90,in=60,looseness=2] (11.north);
	\draw [arrow, nicecyan] (15.north) to [out=90,in=60,looseness=2] ([xshift=1.25pt]14.north);
	\draw [arrow, nicecyan] (16.north) to [out=90,in=70,looseness=2] ([xshift=-1pt]14.north);
	\draw [arrow, nicecyan] (17.north) to [out=90,in=80,looseness=2] ([xshift=-3.25pt]14.north);

	\draw [arrow, nicered] (2'.south) to [out=-90,in=-60,looseness=2] (1'.south);
	\draw [arrow, nicered] (6'.south) to [out=-90,in=-60,looseness=2] (5'.south);
	\draw [arrow, nicered] (10'.south) to [out=-90,in=-60,looseness=2] ([xshift=1pt]9'.south);
	\draw [arrow, nicered] (11'.south) to [out=-90,in=-70,looseness=2] ([xshift=-1.5pt]9'.south);
	\draw [arrow, nicered] (14'.south) to [out=-90,in=-60,looseness=2] (13'.south);
	\draw [arrow, nicered] (16'.south) to [out=-90,in=-60,looseness=2] ([xshift=1pt]15'.south);
	\draw [arrow, nicered] (17'.south) to [out=-90,in=-80,looseness=2] ([xshift=-1.5pt]15'.south);
\end{tikzpicture}%
\hspace{-4mm}%
%
	\vspace{-3mm}
}
\ifLIPICS
relations $\om_\r$ 
\fi
and $\im_\r$ by means of edges between source and target positions. 
In the figure, the solid edges represent pairs of $\om_\r$, while
the dashed edges represent some pairs of $\im_\r$ 
(precisely, those pairs $(i,j)$ such that
the transition at position $j$ produces an input letter, while the next
transition produces an output letter).
\end{example}

\paragraph*{Mapping the source to target origins.} 
We now explain how the relations $\im_\r$ and $\om_\r$ 
can be used to define a mapping from source to target origins. 
We do so by first using the figure of Example \ref{ex:match}.
Consider any output letter at position $i$ in the source synchronized 
word $w$ (e.g.~the first blue letter $b$). 
Let $j$ be the last $\S$-labelled position before $i$, as indicated
by the blue arrow. This position $j$ determines the source origin 
$y=|\projS(w[1,j])|$ of the output letter.
To find the corresponding target origin, we observe that the position 
$i$ is mapped via the relation $\om_\r$ (solid line) to some position 
$k$ in the target synchronized word.
Let $h$ be the last $\S$-labelled position before $k$ (red arrow),
and map $h$ back to a position $\ell$ in the source via the relation
$\im_\r$ (dashed line).
The position $\ell$ determines precisely the target origin
$z=|\projS(w[1,\ell])|$ of the considered output letter.
The above steps describe a correspondence between two positions 
$j$ and $\ell$ in $\r$, with labels over $\S$, that is precisely defined by
\begin{align*}
  \exists i,k,h ~~  
  \begin{cases}
  \r[j,i] \text{ consumes a word in } \S\G^+ \\
  (i,k) \in \om_\r \\
  \r[h,k] \text{ produces a word in } \S\G^+ \\
  (h,\ell)\in \im_\r.
  \end{cases}
  \tag{$\star$}
\end{align*}
In the above $\rho[j,i]$ represents the part of $\rho$ between positions $j, i$ (both $j,i$ included).  

We denote by $\match_\r$ the relation of all pairs $(j,\ell)$
that satisfy Equation ($\star$).
Note that $\match$ determines an analogous correspondence 
between source and target origins of the input projection.
However, $\match$ has two issues: 
it is not yet a partial bijection (since different output positions 
may have the same source origin), and it needs to be implemented 
by means of a regular relation $\move_\otype$ that only considers 
positions of the input, plus the label $\otype$ of a single position 
in the output. 
Below, we explain how to overcome those issues.

\paragraph*{The case of bounded output blocks.}
Hereafter, we call \emph{output block} any maximal factor of a synchronized 
word that is labelled over $\G$. Intuitively, this corresponds to a maximal 
factor of the output that originates at the same input position.
We first consider, as a simpler case, a rational resynchronizer $R$ that 
reads {\sl source} synchronized words where the lengths of the output blocks 
are uniformly bounded by some constant, say $\bound$ (a similar property holds 
for the blocks of the target synchronized words, using lag-based arguments).
In this case we can encode any successful run $\r$ of $R$ entirely on the input,
by annotating every $\S$-labelled position $y$ with a factor $\r_y$
of $\r$ that reads the input symbol at position $y$, followed by the sequence 
of output symbols up to the next input symbol. Note that every factor $\r_y$ 
has length at most $\bound+1$.
The correctness of this input annotation can be checked by the regular
language $\ipar$. Given a factor $\rho_y \in \Sigma \Gamma^+$, 
$\rho_y[1] \in \Sigma$ is the first position of the factor $\rho_y$. 
Likewise, $\rho_y[i,j]$ denotes the subfactor of $\rho_y$ 
consisting of positions $i, i+1, \dots, j$.

In addition, we also annotate the output word with indices 
from $\{1,\dots,\bound\}$, called \emph{offsets}, in such a way that 
an output position $x$ is annotated with an offset $o$ if and only if
it is the $o$-th output position with the same source origin.
Note that the correctness of the annotation cannot be checked by 
a regular language such as $\opar$ that refers only to the output.
The check will be done instead by a combined use of the relations
$\move_\otype$ and $\nxt_{\otype,\otype'}$.

We first check that, for every pair of consecutive output positions $x$ 
and $x+1$ annotated with the offsets $o$ and $o'$, respectively, 
it holds that $o'=o+1$ or $o'=1$, depending on whether the 
\emph{source} origins of $x$ and $x+1$ coincide or not. For this
we let $(u,z,z') \in \nxt_{\otype,\otype'}$, with $\otype=(a,o)$
and $\otype'=(a',o')$, if
\begin{enumerate}
  \item either $o'=o+1$ and there is $y=y'$ such that
        $(u,y,z)\in\move_\otype$ and $(u,y',z')\in\move_{\otype'}$,
  \item or $o'=1$ and there are $y<y'$ such that
        $(u,y,z)\in\move_\otype$ and $(u,y',z')\in\move_{\otype'}$.
\end{enumerate}
Recall that the relation $\nxt_{\otype,\otype'}$ must be
defined in terms of the \emph{target} origins of $x$ and $x+1$. 
So it needs to rely on the relation $\move_\otype$
in order to determine the source origins 
from the target origins. We assume that for every 
output type $\otype$ 
the relation $\move_\otype$, 
which will be defined later, determines a~\emph{partial bijection} 
between input positions (we will see that this is indeed the case). 
Based on these assumptions, the above definition of $\nxt_{\otype,\otype'}$ 
guarantees that the offsets annotating consecutive positions 
in the output are either incremented or reset, depending on whether
they have the same origin or not.

It remains to check that maximal offset occurring in an output block 
with origin $y$ coincides with number of output symbols produced by 
the corresponding factor $\rho_y$ of the run.
Thus, we modify slightly the definition of $\nxt_{\otype,\otype'}$
in case 2., as follows:
\begin{enumerate}
  \item[2'.] or $o'=1$ and there are $y<y'$ such that
        $(u,y,z)\in\move_\otype$ and $(u,y',z')\in\move_{\otype'}$,
        \emph{and $o=|\rho_y|-1$}.
\end{enumerate}
Note that the factor $\rho_y$ can be derived by inspecting the
annotation of the input position $y$. 
The modification suffices to guarantee that the output annotation is 
correct for all output blocks but the last one. 
The annotation for the last output block can be checked by
marking the last output position with a distinguished symbol
and by requiring that if $\otype$ witnesses the marked symbol
and the offset $o$, then $\move_\otype$ can only contain a triple 
of the form $(u,y,z)$, with $o=|\rho_y|-1$.
We omit the tedious definitions in this case.


\medskip
Now, having the input correctly annotated with the factors $\r_y$ 
of $\r$ and the output correctly annotated with the offsets, 
we can encode any position $i$ of $\r$ by a pair 
$(y,o)$ that consists of a position $y$ of the input and an offset 
$o\in\{0,1,\dots,\bound\}$. 
The encoding is defined in such a way that $i = \sum_{y'<y}|\r_{y'}| + o + 1$
(in particular, $o=0$ when the transition at position $i$ consumes an
input symbol, otherwise $o\ge1$).
We use this encoding to translate the relations $\om_\r$, $\im_\r$, 
and $\match_\r$, to equivalent finite unions of partial bijections between 
input positions.
We begin by explaining the translation of $\om_\r$.

\paragraph*{Translation of $\om_\r$.} 
Consider any pair $(i,j)\in\om_\r$. Since the transition at position
$i$ of $\r$ consumes an output symbol, it is encoded by a pair of the form 
$(y,o)$, with $o\ge1$.
On the other hand, the transition at position $j$ may consume 
either an input symbol or an output symbol (but does produce an output symbol).
In the former case, $j$ is encoded by a pair $(y',0)$;
in the latter case, it is encoded by a pair $(y',o')$, with $o'\ge1$.
As an example, in the figure below, $(7,4) \in \om_\r$. Position 7 of
the run is encoded as $(5,1)$ on the input.   
The transition at position 4 consumes an input symbol $a$, and 
produces the output symbol $b$, and is encoded as $(3,0)$.  


\begin{tikzpicture}[xscale=1.25,baseline]
	\draw [nicecyan] (0,1.25) node [hidden] (1) {\small $\texttt{a}$};
	\draw [black] (0,1.5) node [hidden] (1'') {\tiny $\texttt{(1,0)}$};
	\draw [nicecyan] (0.5,1.25) node [hidden] (2) {\small $\texttt{a}$};
	\draw [black] (0.5,1.5) node [hidden] (2'') {\tiny $\texttt{(2,0)}$};
	\draw [nicecyan] (1,1.25) node [hidden] (3) {\small $\texttt{b}$};
	\draw [black] (1,1.5) node [hidden] (3'') {\tiny $\texttt{(2,1)}$};

	\draw [nicecyan] (1.5,1.25) node [hidden] (4) {\small $\texttt{a}$};
	
	\draw [black] (1.5,1.5) node [hidden] (4'') {\tiny $\texttt{(3,0)}$};

	\draw [nicecyan] (2,1.25) node [hidden] (5) {\small $\texttt{a}$};
	\draw [black] (2,1.5) node [hidden] (5'') {\tiny $\texttt{(4,0)}$};

	\draw [nicecyan] (2.5,1.25) node [hidden] (6) {\small $\texttt{a}$};
	\draw [black] (2.5,1.5) node [hidden] (6'') {\tiny $\texttt{(5,0)}$};

	\draw [nicecyan] (3,1.25) node [hidden] (7) {\small $\texttt{b}$};
	\draw [black] (3,1.5) node [hidden] (7'') {\tiny $\texttt{(5,1)}$};
	
	\draw [nicecyan] (3.5,1.25) node [hidden] (8) {\small $\texttt{b}$};
	\draw [black] (3.5,1.5) node [hidden] (8'') {\tiny $\texttt{(5,2)}$};

	\draw [nicecyan] (4,1.25) node [hidden] (9) {\small $\texttt{b}$};
	\draw [black] (4,1.5) node [hidden] (9'') {\tiny $\texttt{(5,3)}$};

	\draw [nicered] (0,0) node [hidden] (1') {\small $\texttt{a}$};
	\draw [nicered] (0.5,0) node [hidden] (2') {\small $\texttt{b}$};
	
	\draw [nicered] (1,0) node [hidden] (3') {\small $\texttt{a}$};
	\draw [nicered] (1.5,0) node [hidden] (4') {\small $\texttt{b}$};

	\draw [nicered] (2,0) node [hidden] (5') {\small $\texttt{a}$};
	\draw [nicered] (2.5,0) node [hidden] (6') {\small $\texttt{b}$};
	\draw [nicered] (3,0) node [hidden] (7') {\small $\texttt{b}$};
	\draw [nicered] (3.5,0) node [hidden] (8') {\small $\texttt{a}$};
	\draw [nicered] (4,0) node [hidden] (9') {\small $\texttt{a}$};
\node[draw=red,dotted,fit=(4')(5')(6')] {};
	\draw [link] (2') to (3);
	\draw [link] (4') to (7);
\end{tikzpicture}
\medskip

\noindent
In general, we observe that the lag induced just after the 
$o$-th transition of $\r_y$ must be equal to the number 
of output symbols produced between the $(o'+1)$-th 
transition of $\r_{y'}$ and the $o$-th transition of $\r_y$,
both included
(when the lag is negative one follows the transitions
 in reverse order, counting negatively). As an illustration in the figure,   
 the lag after the first transition of $\rho_5$ is 2, which is the number 
 of output symbols in the dotted box. The dotted box 
 consists of the symbols produced between the first transition 
 of $\r_{3'}$ and the first transition of $\r_5$, and has two output symbols.

\paragraph*{Translation of $\im_{\r}$.} 
The translation of the relation $\im_\r$ is similar.
The only difference is that now the pairs $(i,j)\in\im_\r$ 
are encoded by tuples of the form 
$\big((y,o),(y',o')\big)$, with $o=0$ since the transition at $i$ 
consumes an input symbol. The transition at position $j$ as before, 
can consume an input symbol or an output symbol. 
Consider the figure below, where $(2,3) \in \im_{\r}$. 
Position $i=2$ is encoded as 
$(2,0)$.  The transition at position 3 
consumes an output symbol $b$ (and produces the input symbol $a$). 
Position 3 is encoded as $(2,1)$. 
	
\begin{tikzpicture}[xscale=1.25,baseline]
	\draw [nicecyan] (6,1.25) node [hidden] (1) {\small $\texttt{a}$};
		\draw [nicecyan] (6.5,1.25) node [hidden] (2) {\small $\texttt{a}$};
	\draw [nicecyan] (7,1.25) node [hidden] (3) {\small $\texttt{b}$};
	\draw [nicecyan] (7.5,1.25) node [hidden] (4) {\small $\texttt{a}$};
	\draw [black] (7.5,1.5) node [hidden] (4'') {\tiny $\texttt{(3,0)}$};
	
	\draw [nicecyan] (8,1.25) node [hidden] (5) {\small $\texttt{a}$};
	\draw [black] (8,1.5) node [hidden] (5'') {\tiny $\texttt{(4,0)}$};
	
	\draw [nicecyan] (8.5,1.25) node [hidden] (6) {\small $\texttt{a}$};
	\draw [nicecyan] (9,1.25) node [hidden] (7) {\small $\texttt{b}$};
	
	\draw [nicecyan] (9.5,1.25) node [hidden] (8) {\small $\texttt{b}$};
	\draw [nicecyan] (10,1.25) node [hidden] (9) {\small $\texttt{b}$};
	\draw [nicered] (6,0) node [hidden] (1') {\small $\texttt{a}$};
	\draw [nicered] (6.5,0) node [hidden] (2') {\small $\texttt{b}$};
	\draw [nicered] (7,0) node [hidden] (3') {\small $\texttt{a}$};
	\draw [nicered] (7.5,0) node [hidden] (4') {\small $\texttt{b}$};
	\draw [nicered] (8,0) node [hidden] (5') {\small $\texttt{a}$};
	\draw [nicered] (8.5,0) node [hidden] (6') {\small $\texttt{b}$};
	\draw [nicered] (9,0) node [hidden] (7') {\small $\texttt{b}$};
	\draw [nicered] (9.5,0) node [hidden] (8') {\small $\texttt{a}$};
	\draw [nicered] (10,0) node [hidden] (9') {\small $\texttt{a}$};
\node[draw=red,dotted,fit=(4') (5')] {};
	\draw [dashed,link] (1') to (1);
	\draw [dashed,link] (3') to (2);
	\draw [dashed,link] (4) to (5');
\end{tikzpicture}
\medskip

\noindent
The only difference here is that 
one has to relate the lag with the 
number of {\sl input} letters produced between (both positions included) the first transition 
of $\r_y$ and the $o'$-th transition 
of $\r_{y'}$. 
Again, in the figure, the lag after the first transition of $\r_3$ is 
1, which is the number of input symbols in the 
dotted box. The dotted box contains the symbols 
produced between the first transition of 
$\r_3$ and the first transition of $\rho_{4'}$, and has one input 
symbol.

\paragraph*{Relations encoding $\om_\r$ and $\im_\r$.} 
So we can represent $\om_\r$ as a finite union of
relations $O_{o,o'}\subseteq(\S\times\S')^*\times\Nat\times\Nat$,
each describing a regular property of annotated inputs with two 
distinguished positions in it, in such a way that the positions
are bijectively related to one another.

Likewise, we can represent $\im_\r$ as a finite union of relations 
$I_{0,o'}$, each describing a regular property of annotated inputs 
with two distinguished positions encoded as $(y,0)$ and $(y', o')$ in it, 
which are bijectively related to one another.

\paragraph*{Translation of $\match_\r$.} 
We finally turn to the translation of the relation $\match_\r$,
which will eventually determine the relations $\move_\otype$ 
of the desired regular resynchronizer $R'$.
This is done by mimicking Equation ($\star$) via the encoding
of  positions in the run $\rho$ using pairs of input positions and offsets,
and more precisely, by replacing the variables $j,i,k,h,\ell$ 
of Equation (*) with the pairs $(y,0)$, $(y,o)$, $(y',o')$, $(y'',o'')$, $(z,0)$.

Formally, for every offset $o\in\{1,\dots,\bound\}$, we define 
the set $M_o$ of all triples $(u,y,z)$, where $u$ is an annotated 
input and $y,z$ are positions in it that satisfy the following 
property:

\begin{align*}
  \exists y',y''  
  \bigvee_{0\le o',o''\le\bound} ~~ 
  \begin{cases}
  \r_y[1,o+1] \text{ consumes a word in $\S\G^+$} \\
  (u,y,y') \in O_{o,o'} \\
  \r_{y''}[o''+1,|\r_{y''}|] ~ \r_{y''+1} \dots \r_{y'-1} ~ \r_{y'}[1,o'+1] 
    \text{ produces a word in $\S\G^+$} \\
  (u,z,y'') \in I_{0,o''}.
  \tag{$\star\star$}
  \end{cases}
\end{align*}

Note that the first condition holds trivially by definition of $\r_y$, 
while the third condition is easily implemented by accessing the
factors $\r_{y''},\dots, \r_{y'}$
of $\r$ that are encoded by the input parameters. 
For simplicitly, here we assumed that $(y'',o'')$ is lexicographically 
before $(y',o')$; to treat the symmetric case, one has to interpret 
the definition by considering the sequence of transitions in reverse.
The intended meaning of $(u,y,z)\in M_o$ is as follows. 
Suppose that the input is correctly annotated with the factors $\r_y$
of a successful run $\r$ of $R$, and that the output 
position $x$ of $\r$ is correctly annotated with an offset $o$. 
Assuming that $x$ is the $o$-th output position with source origin $y$, 
then $z$ is its target origin in $\r$.

Continuing with our running example, we determine the target origin 
for the point $b$ annotated (5,1), whose source origin is 
(5,0). We will find the target origin of this $b$ annotated (5,1). As seen in the computation of $\om_\r$, we know that 
$(u,5,3) \in O_{1,0}$.  The factor  $\rho_5=abbb$, and 
$\rho_5[1,2]=ab \in \Sigma \Gamma^+$, and as we have seen, 
$(u,5,3) \in O_{1,0}$. Now, consider the part of the source $u$ annotated with 
$(2,1)(3,0)$. This produces the output $ab \in \Sigma \Gamma^+$. That is, for $y''=2, o''=1$, and $y'=3, o'=0$, we have
$\rho_{y''}[o''+1, 2] \:  \rho_{y'}[1,o'+1]=\rho_2[2,2]\rho_3[1,1]=ba$ produces the output 
$ab \in \Sigma \Gamma^+$. 

Consider $(z,0)=(2,0)$. The lag after the $a$ at $i=2$ annotated $(2,0)$ is 1. Also, 
$(2,3) \in \im_\r$. The position 3 consumes an 
output and produces an input $a$. 
Indeed, the lag after the first transition of $\rho_2$ is 
1, which is the number of input symbols between the first transition 
of $\rho_2$ and the second transition ($(o'+1)$th transition) of  
$\rho_2$. That is, $(u,2,2) \in I_{0,1}$. Thus, starting with the $b$ annotated $(y,o)=(5,1)$ such that 
$\rho_5[1,2] \in \Sigma \Gamma^+$,  
we first obtain $(y',o')=(3,0)$ with $(u,5,3) \in O_{1,0}$. Further, 
 $\rho_2[2,2]\rho_3[1,1]$ produces a word in $\Sigma \Gamma^+$. 
 Finally, we have $(u,2,2) \in I_{0,1}$, obtaining $(u,5,2) \in M_1$.

\begin{tikzpicture}[xscale=1.25,baseline]
	\draw [nicecyan] (6,1.25) node [hidden] (1) {\small $\texttt{a}$};
	\draw [black] (6,1.5) node [hidden] (1') {\tiny $\texttt{(1,0)}$};
			\draw [nicecyan] (6.5,1.25) node [hidden] (2) {\small $\texttt{a}$};
	\draw [black] (6.5,1.5) node [hidden] (2') {\tiny $\texttt{(2,0)}$};
		\draw [nicecyan] (7,1.25) node [hidden] (3) {\small $\texttt{b}$};
	\draw [black] (7,1.5) node [hidden] (3') {\tiny $\texttt{(2,1)}$};
			\draw [nicecyan] (7.5,1.25) node [hidden] (4) {\small $\texttt{a}$};
	\draw [black] (7.5,1.5) node [hidden] (4') {\tiny $\texttt{(3,0)}$};
		\draw [nicecyan] (8,1.25) node [hidden] (5) {\small $\texttt{a}$};
	\draw [black] (8,1.5) node [hidden] (5') {\tiny $\texttt{(4,0)}$};
		\draw [nicecyan] (8.5,1.25) node [hidden] (6) {\small $\texttt{a}$};
	\draw [black] (8.5,1.5) node [hidden] (6'') {\tiny $\texttt{(5,0)}$};
	\draw [nicecyan] (9,1.25) node [hidden] (7) {\small $\texttt{b}$};
	\draw [black] (9,1.5) node [hidden] (7'') {\tiny $\texttt{(5,1)}$};
	\draw [nicecyan] (9.5,1.25) node [hidden] (8) {\small $\texttt{b}$};
	\draw [black] (9.5,1.5) node [hidden] (7'') {\tiny $\texttt{(5,2)}$};
	\draw [nicecyan] (10,1.25) node [hidden] (9) {\small $\texttt{b}$};
	\draw [black] (10,1.5) node [hidden] (7'') {\tiny $\texttt{(5,3)}$};

	\draw [nicered] (6,0) node [hidden] (1') {\small $\texttt{a}$};
	\draw [nicered] (6.5,0) node [hidden] (2') {\small $\texttt{b}$};
	\draw [nicered] (7,0) node [hidden] (3') {\small $\texttt{a}$};
	\draw [nicered] (7.5,0) node [hidden] (4') {\small $\texttt{b}$};
	\draw [nicered] (8,0) node [hidden] (5') {\small $\texttt{a}$};
	\draw [nicered] (8.5,0) node [hidden] (6') {\small $\texttt{b}$};
	\draw [nicered] (9,0) node [hidden] (7') {\small $\texttt{b}$};
	\draw [nicered] (9.5,0) node [hidden] (8') {\small $\texttt{a}$};
	\draw [nicered] (10,0) node [hidden] (9') {\small $\texttt{a}$};
	\draw [dashed,link] (3') to (2);
	
\end{tikzpicture}

\smallskip
\paragraph*{Definition of $\move_\otype$.} 
It is tempting to define $\move_\otype$ just as $M_o$, 
for every $\otype=(a,o)\in\G\times\{1,\dots,\bound\}$. 
However, we recall that the correctness of the output annotation is guaranteed
only once we are sure that every relation $\move_\otype$ defines a partial 
bijection between input positions $y$ and $z$ (hereafter we say for short that
the relation is bijective), which is not known a priori.
Bijectiveness must then be enforced syntactically, without relying on annotations: 
for this it suffices to define $\move_\otype$ as 
$\{ (u,y,z)\in M_o \:\mid\: 
    \forall (u,y',z')\in M_o ~ (y=y')\leftrightarrow (z=z') \}$,
and observe that either $M_o$ is bijective, and hence
$\move_\otype=M_o$, 
or it is not, and in this case $\move_\otype$ is a subrelation of $M_o$
that is still bijective.
Note that, in the case where $\move_\otype$ is a subrelation of $M_o$,
there will be no induced pair of synchronized words, since the origins
of some output elements could not be redirected. 
This is fine, and actually needed, in order to avoid generating with $R'$
spurious pairs of synchronized words that are not also generated by $R$.
On the other hand, observe that the relation $\move_\otype$ does generate, 
for appropriate choices of the output annotations, all the pairs of 
synchronized words that are generated by $R$.
We finally observe that the relations $\move_\otype$ and $\nxt_{\otype,\otype'}$
are regular. We obtain in this way,  a $1$-bounded, regular resynchronizer $R'$
equivalent to $R$.

\paragraph*{The general case.} 
We now aim at generalizing the previous ideas to capture a
rational resynchronizer $R$ with source output blocks of possibly
unbounded length. One additional difficulty is that we cannot 
anymore encode a successful run $\r$ of $R$ entirely on the input, 
as $\r$ may have arbitrarily long factors on outputs blocks.
Another difficulty is that we cannot uniquely identify the
positions in an output block using offsets ranging over a fixed 
finite set. We will see that a solution to both problems comes 
from covering most of the output by factors in which the positions 
behave similarly in terms of the source-to-target origin transformation. 
Intuitively, each of these factors can be thought of as 
a `pseudo-position', and accordingly the output blocks can be thought 
of as having boundedly many pseudo-positions.
This will make it possible to apply the same ideas as before. 
We now state the key lemma that 
identifies the aforesaid factors.
By a slight abuse of terminology, we call output blocks 
also the maximal $\G$-labelled factors of a synchronized word.

\begin{lemma}\label{lem:blocks}
Let $\r$ 
be a successful run of $R$, and let $w$ and $w'$ 
be the source and target synchronized words induced by $\r$.
\begin{itemize}
\item Every output block $v$ of $w$ can be factorized into $\cO(|Q|^2)$ 
      sub-blocks $v_1,\dots,v_n$ such that
	  if $|v_i|>1$ and $\r_i$ is the factor of $\r$ that corresponds to $v_i$,
	  then all states in $\r_i$ have the same lag, say $\ell_i$,
	  and the factor obtained by extending $\r_i$ to the left and 
	  to the right by exactly $|\ell_i|$ transitions forms a loop of $R$.
\item Moreover, for every factorization $v=v_1\dots v_n$ as above,
      each sub-block $v_i$ is also a factor of $w'$, and hence all 
      positions in $v_i$ have the same target origin.
\end{itemize}
\end{lemma}

\begin{proof}
%
\begin{figure}
\centering  
\begin{tikzpicture}

\draw (0,0) edge [|-|] (10,0);

\draw [dashed,rounded corners=5] (1,-0.2) rectangle (4,0.2);
\draw [dashed,rounded corners=5] (5,-0.2) rectangle (9,0.2);

\draw [draw=none,pattern=north east lines] (1.75,-0.2) rectangle (3.25,0.2);
\draw [draw=none,pattern=north east lines] (5.75,-0.2) rectangle (8.25,0.2);

\draw (7.5,2) node [above] {singleton sub-blocks};
\draw [arrow,dotted] (7,2) to [out=-10,in=110] (10,0.3);
\draw [arrow,dotted] (7,2) to [out=-20,in=100] (9.25,0.3);
\draw [arrow,dotted] (7,2) to [out=-30,in=90] (8.5,0.3);
\draw [arrow,dotted] (7,2) to [out=-144,in=80] (5.5,0.3);
\draw [arrow,dotted] (7,2) to [out=-150,in=70] (4.5,0.3);
\draw [arrow,dotted] (7,2) to [out=-155,in=60] (3.5,0.3);
\draw [arrow,dotted] (7,2) to [out=-160,in=60] (1.5,0.3);
\draw [arrow,dotted] (7,2) to [out=-165,in=55] (0.75,0.3);
\draw [arrow,dotted] (7,2) to [out=-170,in=50] (0,0.3);

\draw (1.25,2) node [above] {non-overlapping maximal loops};
\draw [arrow,thick] (1,2) to (1.25,0.3);
\draw [arrow,thick] (1.75,2) to (5.25,0.3);

\draw (4.75,-1) node [below] {loops shrinked by lag};
\draw [arrow,thick] (4.25,-1) to (2.5,-0.3);
\draw [arrow,thick] (5,-1) to (7,-0.3);
\end{tikzpicture}
\caption{Factorization of an output block.}
\label{fig:loop}
\end{figure}


%
%
We prove the first claim of the lemma (Figure \ref{fig:loop} provides an
intuitive account of the constructions).
Let $v$ be an output block of the source synchronized word $w$ and let
$\r'$ be the factor of the run $\r$ aligned with $v$. As a preliminary
step, we fix a maximal set of pairwise non-overlapping maximal loops inside $\r'$,
say $\r'_1,\dots,\r'_m$. A simple counting argument shows that $m\le|Q|$ and that
there are at most $|Q|$ positions in $\r'$ that are not covered by the loops 
$\r'_1,\dots,\r'_m$. The latter positions determine some sub-blocks of $v$
of length $1$. The remaining sub-blocks of $v$ will be obtained by factorizing
the loops $\r'_1,\dots,\r'_m$, as follows. Consider any loop $\r'_j$. 
By construction, all letters consumed by $\r'_j$ occur in $v$, so they must be 
output letters.
Similarly, all letters produced by $\r'_j$ are also output letters, since otherwise,
by considering repetitions of the loop $\r'_j$, one could get different lags,
violating Lemma \ref{lem:rational-lag}. 
This means that the lag associated with the states along $\r'_j$ is constant, 
say $\ell_j$ ($\le|Q|$).
If $\r'_j$ has length at most $2|\ell_j|$, then we simply 
decompose it into $2|\ell_j|$ factors of length $1$.
Otherwise, we cover a prefix of $\r'_j$ with $|\ell_j|$ factors of length $1$,
and a suffix of $\r'_j$ with $|\ell_j|$ other factors of length $1$.
The remaining part of $\r'_j$ is covered by a last factor of length
$|\r'_j| - 2|\ell_j|$. Overall, this induces a factorization of $v$ 
into at most $|Q|$ (the sub-blocks not covered by a loop)  + $|Q|\cdot (2|Q| + 1)$ (Each $\rho'_j$ 
is decomposed into $(2\ell_j+1) \leq (2|Q|+1)$ sub-blocks). This gives 
$\cO(|Q|^2)$ sub-blocks $v_1,\dots,v_n$.
Moreover, by construction, if $|v_i|>1$, then in the corresponding factor
$\r_i$ of $\r$, all states have the same lag, say $\ell_i$, and if we
extend $\r_i$ to the left and to the right by exactly $|\ell_i|$ transitions,
we get back one of the loops $\r'_j$ (recall that each loop $\r'_j$ 
of length $>2|\ell_j|$ is decomposed into  $|\ell_j|$ blocks of length 1, then 
a block of length $|\r'_j|-|\ell_j|$, and finally, $|\ell_j|$ blocks of length 1.
Clearly, if we extend the middle block on either side by 
blocks of length $|\ell_j|$, then we get back $\r'_j$. 
 This proves the first claim of the lemma.

As for the second claim, suppose that $v_1,\dots,v_n$ is a factorization 
of an output block $v$ of $w$ satisfying the first claim. Clearly, every
sub-block $v_i$ of length $1$ is also a factor of the target synchronized
word $w'$. The interesting case is when a sub-block $v_i$ has length 
larger than $1$. In this case, by the previous claim, we know that in
the corresponding factor $\r_i$ of $\r$, all states have the same lag $\ell_i$,
and the factor $\r'_i$ of $\r$ that is obtained by expanding $\r_i$ to the 
left and to the right by $|\ell_i|$ transition is a loop.
In fact, since $\r'_i$ is a loop, we also know that all states in it
have lag $\ell_i$.
Now, to prove that $v_i$ is a factor of the target synchronized word $w'$,
it suffices to show that every two consecutive positions of $\r_i$
are mapped to consecutive positions via the relation $\om_\r$.
This follows almost by construction, since for every 
pair $(i',k')\in\om_\r$, if $i'$ occurs inside the factor $\r_i$, 
then $k'$ occurs inside the loop $\r'_i$ (recall that $\r'_i$ consumes
and produces only output symbols), and hence $k'=i' - \ell_i$.
In addition, if $i'+1$ also occurs inside $\r_i$, then clearly
$(i'+1,k'+1)\in\om_\r$.
This proves that $v_i$ is a factor of the target synchronized word $w'$,
and hence all positions in it have the same target origin.
\end{proof}

In view of the above lemma we can guess a suitable factorization of 
the output into sub-blocks that refine the output blocks, and treat 
each sub-block as if it were a single position. 
In particular, we can annotate every sub-block with a unique offset from 
a finite set of quadratic size w.r.t.~$|Q|$. The role of the offsets
will be the same as in the previous proof, where blocks had bounded length,
namely, determine some partial bijections $O_{o,o'}$, $I_{0,o'}$, and $M_o$
between positions of the input.
In addition, we annotate every sub-block with the pair consisting 
of the first and last states of the factor of the successful run 
that consumes that sub-block. We call such a pair of states a 
\emph{pseudo-transition}, as it plays the same role of a transition
associated with a single output position.
Finally, we annotate every input position $y$ with a sequence
of bounded length that represents a single transition on $y$
followed by the pseudo-transitions on the subblocks with source
origin $y$. The resulting input annotation provides an abstraction
of a successful run of $R$.

The correctness of the above annotations can be enforced by defining suitable
relations $\ipar$, $\opar$, $\nxt_{\otype,\otype'}$ for the regular resynchronizer $R'$.
We omit the tedious details concerning these relations, and only observe that, 
as before, the definition $\nxt_{\otype,\otype'}$ relies on the fact that 
$\move_\otype$ and $\move_{\otype'}$ define partial bijections between
input positions.

Finally, we turn to describing the relation $\move_\otype$ that 
maps source to target origins for $\otype$-labelled output positions.
The definition is basically the same as before, based on some
auxiliary relations $O_{o,o'}$ and $I_{0,o''}$ that implement
$\om_\r$ and $\im_\r$ at the level of input positions.
As before, we guarantee, by means of a syntactical trick, 
that $\move_\otype$ determines a partial bijection between 
input positions.
In conclusion, we get a regular resynchronizer $R'$, with input and output parameters,
that is equivalent to the rational resynchronizer $R$.



\fi



\section{Synthesis of Resynchronizers}\label{sec:synthesis}

Recall that containment between transducers depends on the adopted 
semantics. More precisely, according to the classical semantics, 
$T_1$ is contained in $T_2$ (denoted $T_1\subseteq T_2$) if all 
input-output pairs realized by $T_1$ are also realized by $T_2$; 
according to the origin semantics, $T_1$ is contained in $T_2$ 
(denoted $T_1\ocont T_2$) if all origin graphs realized by $T_1$ 
are also realized by $T_2$. 
In this section, we study the following variant of the containment problem:

\begin{labeling}{\sffamily\bfseries Question:}
\item[\sffamily\bfseries Resynchronizer synthesis problem.]
\item[\sffamily\bfseries Input:] 
      two transducers $T_1,T_2$.
\item[\sffamily\bfseries Question:] 
       does there exist some resynchronization $R$ such that $T_1 \ocont R(T_2)$.
\end{labeling}

In fact, the above problem comes in several variants, depending on the model
of transducers considered (one-way or two-way) and the class of admissible
resynchronizations $R$ (rational or bounded regular).
Moreover, for the positive instances of the above problem, we usually 
ask to compute a witnessing resynchronization $R$ from the given $T_1$ and $T_2$
(this is the reason for calling the problem a \emph{synthesis problem}).

Clearly, the synthesis problem for unrestricted 
resynchronizers is equivalent to a classical containment, that is, 
$T_1 \subseteq T_2$ if and only if $T_1 \ocont R(T_2)$ for some  
resynchronizer $R$. 
Therefore, the synthesis problem for unrestricted resynchronizers
is undecidable. Thus we will consider the synthesis problem of rational 
(resp.~bounded regular) resynchronizers for one-way (resp.~two-way) transducers.

We also recall that rational resynchronizers preserve definability of 
relations by one-way transducers~\cite{FiliotJLW16}, while bounded regular resynchronizers 
(which, by Theorem \ref{thm:rational-vs-regular}, are strictly more expressive 
than rational resynchronizers) preserve definability by two-way
transducers~\cite{bmpp18}.
For the sake of presentation, we shall first consider the synthesis of
rational resynchronizers in the functional one-way setting, that is,
for instances given by functional one-way transducers.
We show that in this setting the problem collapses again to the classical 
containment problem, which is however decidable now, that is: 
$T_1 \subseteq T_2$ if and only if $T_1 \ocont R(T_2)$ for some rational 
resynchronizer $R$. 
The decidability result can be slightly extended to some non-functional 
transducers. More precisely, we will show that synthesis of rational 
resynchronizers for finite-valued one-way transducers 
is still decidable. When moving to the relational case, however, 
the problem becomes undecidable.

The decidability status in the one-way setting could be also contrasted 
with the two-way setting. In this respect, we observe that, in the functional case,
the synthesis problem does not collapse anymore to classical containment,
as there are functional two-way transducers $T_1,T_2$ such that $T_1\subseteq T_2$, 
but for which no bounded regular resynchronizer $R$ satisfies $T_1 \ocont R(T_2)$
(an example can be found at the beginning of Section \ref{subsec:synthesis-twoway}).
We are able to prove decidability of synthesis of bounded, regular resynchronizers 
for unambiguous two-way transducers. The decidability status, however, remains 
open in the functional two-way case, as well as in the unrestricted 
(non-functional) two-way case. 

\subsection{Resynchronizing functional, one-way transducers}\label{subsec:one-way-equivalence-up}

Recall that it can be decided in \pspace whether a transducer (be it one-way or two-way) is functional \cite{bcps03},
and that the classical containment problem for functional (one-way/two-way) transducers is also in \pspace \cite{BH77}.
The following result shows that, for functional one-way transducers, classical containment and 
rational resynchronizer synthesis are inter-reducible.

\begin{restatable}{theorem}{oneWayEquivalenceUp}\label{th:one-way-equivalence-up}
Let $T_1,T_2$ be two functional one-way transducers. The following conditions are equivalent, and decidable:
\begin{enumerate}
  \item $T_1 \subseteq T_2$,
  \item $T_1 \ocont R(T_2)$ for some resynchronization $R$,
  \item $T_1 \oeq R(T_2)$ for some {\sl rational} resynchronizer $R$.
\end{enumerate}
\end{restatable}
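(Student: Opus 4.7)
The plan is to close the cycle (3) $\Rightarrow$ (2) $\Rightarrow$ (1) $\Rightarrow$ (3), whence decidability of all three conditions follows from the classical \pspace procedure for testing containment of functional one-way transducers~\cite{BH77}. The first two implications are immediate: a rational resynchronizer is a special case of a resynchronization, and every resynchronization preserves the input and output projections, so $T_1 \ocont R(T_2)$ projects down to $T_1 \subseteq T_2$ in the classical sense.

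The core of the proof is (1) $\Rightarrow$ (3). Assume $T_1 \subseteq T_2$. Since both transducers are functional and $\dom(T_1) \subseteq \dom(T_2)$, they realise the same partial function on $\dom(T_1)$: for every $u \in \dom(T_1)$ the unique output word $v$ is produced by both transducers, though the letters of $v$ may be attributed to different positions of $u$. The key classical tool is a \emph{bounded delay} property: there exists a constant $k$, depending only on $T_1$ and $T_2$, such that for every $u \in \dom(T_1)$ and every pair of accepting runs $\r_1$ of $T_1$ and $\r_2$ of $T_2$ on $u$, the total outputs produced after any common prefix of $u$ differ in length by at most $k$. This is established by a standard pumping argument on the product of $T_1$ and $T_2$: an unbounded gap would yield a synchronized loop that can be pumped asymmetrically, producing runs with different total outputs and contradicting functionality.

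Granted bounded delay, we construct the rational resynchronizer $R$ as a one-way transducer over $\S \uplus \G$ whose state records a tuple $(p, q, w, s)$, where $p$ is the current simulated state of $T_1$, $q$ is the state of $T_2$ reconstructed on the fly from the input synchronized word, $w \in \G^{\le k}$ is the pending output discrepancy between the two simulations, and $s$ is a sign bit indicating which side is ahead. When $R$ reads an input letter $a \in \S$, it copies $a$ to the output, non-deterministically guesses a matching $T_1$-transition and writes the corresponding $T_1$-output chunk, also committing to the unique $T_2$-transition whose output will be read next; the lag word $w$ is updated accordingly. When $R$ reads an output letter $b \in \G$, it verifies that $b$ matches the next letter of the committed $T_2$-output, updates $w$, and writes nothing. $R$ accepts only when both simulated runs reach final states with $w$ empty. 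By construction, any accepting run of $R$ on a synchronized word of $T_2$ produces a valid synchronized word of $T_1$ with the same input and output projections, giving $R(T_2) \ocont T_1$; conversely, for every synchronized word of $T_1$ on some $u \in \dom(T_1)$, the hypothesis $T_1 \subseteq T_2$ and functionality supply a synchronized word of $T_2$ on $u$ with the same total output, and the joint run yields a witnessing computation of $R$, proving $T_1 \ocont R(T_2)$ and hence $T_1 \oeq R(T_2)$.

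The main obstacle is the formal justification of the bounded delay lemma (via the pumping argument on the product) and the careful bookkeeping of the lag buffer $w$ so as to ensure that $R$ emits exactly the synchronized words of $T_1$, with no spurious pairs. Decidability of condition (1) for functional one-way transducers is the classical \pspace result of~\cite{BH77}, which transfers through the cycle of implications to the equivalent conditions (2) and (3).
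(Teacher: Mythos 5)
Your proof is correct, and its core is the same as the paper's: the implications $3\Rightarrow 2\Rightarrow 1$ are handled identically, and for $1\Rightarrow 3$ both arguments build the resynchronizer as a product of $T_1$ and $T_2$ that consumes the block $a\,w_2$ of a $T_2$-synchronized word and emits the block $a\,w_1$ of the corresponding $T_1$-synchronized word. The difference is that you route the argument through a bounded-delay lemma and a lag buffer $w\in\G^{\le k}$, whereas the paper needs neither: once $R$ accepts only when \emph{both} simulated runs reach final states, equality of the output projections of the consumed and produced synchronized words is automatic, because functionality together with $T_1\subseteq T_2$ forces both accepting runs to produce the same total output on their common input. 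So the letter-by-letter verification of the discrepancy --- and hence the pumping argument you flag as the main obstacle --- is redundant; you can simply delete the buffer. (Your bounded-delay claim is nevertheless true, by essentially the lag argument of Lemma~\ref{lem:rational-lag}, and it is exactly the tool the paper does need later, via Theorem~\ref{th:k-val equivalence}, to extend the result to finite-valued transducers, where one only obtains $T_1\ocont R(T_2)$ rather than $T_1\oeq R(T_2)$.) One genuine imprecision to fix: you commit to ``the unique $T_2$-transition whose output will be read next'', but a functional transducer need not be deterministic or unambiguous, so $R$ must guess the $T_2$-transition nondeterministically and merely check that the consumed output letters match the guess --- which is what the paper's product transitions from $(s_1,s_2)$ to $(t_1,t_2)$ reading $a\,w_2$ and writing $a\,w_1$ do implicitly.
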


\ifLIPICS	
\begin{proof}[Proof sketch.]
The implications from 2.~to 1.~and 3.~to 2.~are trivial.
The implication from 1.~to 3.~is proved by constructing a 
rational resynchronizer $R$ as a product of $T_1,T_2$:
at each step, $R$ consumes a symbol $a\in\S$ and a word $v_1\in\G^*$ 
from a transition of $T_1$ and produces the same symbol $a$ and 
possibly a different word $v_2\in\G^*$ from a corresponding 
transition of $T_2$. The fact that $R$ preserves the outputs
relies on functionality of $T_1$ and $T_2$. 
\end{proof}
\fi
\ifARXIV
\begin{proof}
One implication, from 2.~to~1., is trivial, 
since origin containment implies classical containment,
and since applying an arbitrary resynchronization $R$ to $T_2$ 
cannot result in having more input-output pairs (it can however
modify the origin, as well as discard some input-output pairs).
The implication from 3.~to 2.~is also trivial.

For the remaining implication, from 1.~to~3., suppose that $T_1,T_2$ are 
functional one-way transducers such that $T_1 \subseteq T_2$. 
We construct a rational resynchronizer $R$ over the disjoint union $\Sigma\uplus\Gamma$ 
of the input and output alphabets of $T_1,T_2$, using a variant of the 
direct product of $T_1$ and $T_2$. 
More precisely, let $T_1=(Q_1,q_1,\Delta_1,F_1)$,
$T_2=(Q_2,q_2,\Delta_2,F_2)$, and $R=(Q,q,\Delta,F)$, where
$Q = Q_1\times Q_2$, $q=(q_1,q_2)$, $F=F_1\times F_2$, $\Delta$ contains all transitions
of the form $(s_1,s_2) \act{a w_2 \,\mid\, a w_1} (t_1,t_2)$, with $s_i \act{a \,\mid\, w_i} t_i$ 
in $\Delta_i$ for both $i=1$ and $i=2$.
Intuitively, the transducer $R$ simulates a run of $T_1$ and a run of $T_2$ in parallel, by
repeatedly consuming an input symbol $a$ and the corresponding output $w_2$ produced by $T_2$, 
and producing the same input symbol $a$ and the corresponding output $w_1$ of $T_1$.
Since $T_1$ and $T_2$ are functional and classically contained one in the other, 
we have that $R$ maps strings over $\Sigma\uplus\Gamma$ to strings over $\Sigma\uplus\Gamma$ 
while preserving the projections on the input and on the output alphabets.
This means that $R$ is indeed a resynchronizer. 
Finally, $T_1$ is clearly origin equivalent to $R(T_2)$.
\end{proof}


\fi

A natural question arises: can a characterization similar to 
Theorem \ref{th:one-way-equivalence-up} 
be obtained for transducers that compute arbitrary relations,
rather than just functions? The example below provides a negative 
answer to this question. Later in Section \ref{subsec:synthesis-oneway-relational},
we will see that synthesis of rational resynchronizers 
for unrestricted one-way transducers is an undecidable problem.

\begin{example}
Consider a one-way transducer $T_1$ that checks that the input is from
$(aa)^*$ and produces a single 
output letter $b$ for each consumed input letter $a$,
and another transducer $T_2$ that works in two phases: during 
the first phase, it produces two $b$'s for each consumed $a$, 
and 
\mywraptext{2.5cm}{%
during the second phase consumes 
the remaining part of the input without producing any output. 
The origin graphs of
$T_1$ and $T_2$ are shown to the right. 
We have $T_1 \subseteq T_2$, but  $T_1 \nocont T_2$. 
The only resynchronization $R$ that satisfies $T_1 \ocont R(T_2)$ 
must map synchronized words from $(ab)^*$ to $(abb)^*(a)^*$,
while preserving the number of $a$'s and $b$'s. 
Such a transformation cannot be defined by any rational resynchronizer, 
nor by a bounded regular resynchronizer.
}%
\mywrapfig{2.5cm}{%
\vspace{-2mm}%
\begin{tikzpicture}[xscale=0.75,yscale=0.6,baseline]
\begin{scope}[xscale=1.75,color=nicecyan]
\draw (0,1.25) node [hidden] (I1) {\small $\texttt{a}$};
\draw (0.3,1.25) node [hidden] (I2) {\small $\texttt{a}$};
\draw (0.7,1.25) node [hidden] (I3) {\small $\dots$};
\draw (1,1.25) node [hidden] (I4) {\small $\texttt{a}$};
\draw (1.3,1.25) node [hidden] (I5) {\small $\texttt{a}$};

\draw (0,0) node [hidden] (O1) {\small $\texttt{b}$};
\draw (0.3,0) node [hidden] (O2) {\small $\texttt{b}$};
\draw (0.7,0) node [hidden] (O3) {\small $\dots$};
\draw (1,0) node [hidden] (O4) {\small $\texttt{b}$};
\draw (1.3,0) node [hidden] (O5) {\small $\texttt{b}$};

\draw (O1.north) edge [arrow] (I1.south);
\draw (O2.north) edge [arrow] (I2.south);
\draw (O4.north) edge [arrow] (I4.south);
\draw (O5.north) edge [arrow] (I5.south);
\end{scope}
\begin{scope}[xscale=1.75,color=nicered,yshift=-2.25cm]
\draw (0,1.25) node [hidden] (I1) {\small $\texttt{a}$};
\draw (0.3,1.25) node [hidden] (I2) {\small $\texttt{a}$};
\draw (0.7,1.25) node [hidden] (I3) {\small $\dots$};
\draw (1,1.25) node [hidden] (I4) {\small $\texttt{a}$};
\draw (1.3,1.25) node [hidden] (I5) {\small $\texttt{a}$};

\draw (0,0) node [hidden] (O1) {\small $\texttt{b}$};
\draw (0.2,0) node [hidden] (O2) {\small $\texttt{b}$};
\draw (0.4,0) node [hidden] (O3) {\small $\texttt{b}$};
\draw (0.6,0) node [hidden] (O4) {\small $\texttt{b}$};
\draw (0.9,0) node [hidden] (O5) {\small $\dots$};
\draw (1.2,0) node [hidden] (O6) {\small $\emptystr$};
\draw (1.4,0) node [hidden] (O7) {\small $\emptystr$};

\draw (O1.north) edge [arrow] ([xshift=-1pt] I1.south);
\draw (O2.north) edge [arrow] ([xshift=1pt] I1.south);
\draw (O3.north) edge [arrow] ([xshift=-1pt] I2.south);
\draw (O4.north) edge [arrow] ([xshift=1pt] I2.south);
\end{scope}
\end{tikzpicture}%
\hspace{-7mm}
%
\vspace{-3mm}%
}	
\end{example}

There is however an intermediate case, between the functional and the full 
relational case, for which a generalization of Theorem \ref{th:one-way-equivalence-up} 
is possible. This is the case of \emph{finite-valued} one-way transducers, 
that is, transducers that realize finite unions of partial functions. 
The generalization exploits a result from \cite{FiliotJLW16}, 
stated just below, that concerns synthesis of bounded-delay resynchronizers. 
Formally, given two origin graphs $\synch$ and $\synch'$ with the same input and 
output projections, 
and given an input position $y$, we denote by $\delay_{\synch,\synch'}(y)$
the difference between the largest $x\in\dom(\Out(\synch))$ such that 
$\Orig(\synch)(x)=y$ and the largest $x'\in\dom(\Out(\synch'))$ such that 
$\Orig(\synch')(x')=y$.
Given $d\in\bbN$, we define the \emph{$d$-delay resynchronizer} as the 
resynchronization that contains all pairs $(\synch,\synch')$ with the
same input and output projections and such that 
$\delay_{\synch,\synch'}(y) \in [-d,+d]$ for all input positions $y$.
It is easy to see that the $d$-delay resynchronizer is a special case of
a rational resynchronizer.

\begin{theorem}[Theorem~13 in \cite{FiliotJLW16}]\label{th:k-val equivalence}
Let $T_1,T_2$ be one-way transducers, where $T_2$ is $k$-ambiguous.%
\footnote{A transducer is $k$-ambiguous if each input admits at most $k$ successful runs.}
One can compute a $d$-delay resynchronizer $R_d$, for some $d\in\bbN$, 
such that $T_1 \subseteq T_2$ implies $T_1 \ocont R_d(T_2)$.
\end{theorem}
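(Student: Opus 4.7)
The plan is to reduce the $k$-ambiguous case to the unambiguous one via a Weber-type decomposition, and then to bound the delay between origin graphs of $T_1$ and those of each unambiguous component of $T_2$ by a product-and-pumping argument. I would first invoke the classical decomposition of $k$-ambiguous one-way transducers to effectively write $T_2$ as a union $T_2^1 \cup \dots \cup T_2^k$ of unambiguous one-way transducers, all computable from $T_2$. Given any origin graph $\synch \in T_1$ with underlying input-output pair $(u,v)$, the containment $T_1 \subseteq T_2$ places $(u,v)$ in some $T_2^i$; because $T_2^i$ is unambiguous, there is a unique origin graph $\synch'_i \in T_2^i$ projecting to $(u,v)$. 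It then suffices to exhibit, for each index $i$, a computable constant $d_i$ such that $|\delay_{\synch,\synch'_i}(y)| \le d_i$ for every such pair and every input position $y$; taking $d = \max_i d_i$ will yield the desired bound, since the $d$-delay resynchronizer is rational.

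The heart of the proof is the bounded-delay lemma for a fixed $i$. I would consider the synchronous product $T_1 \times T_2^i$, which simulates a run of $T_1$ together with the unique run of $T_2^i$ on the same input and checks that the two outputs coincide. At each input position, the joint configuration records the two control states together with the difference word $w$ between the output prefixes that $T_1$ and $T_2^i$ have emitted; since both runs must ultimately produce the same word, one prefix always extends the other, and $|w|$ at position $y$ equals $|\delay_{\synch,\synch'_i}(y)|$ up to sign. A pumping argument on the product state space then shows that if $|w|$ could be arbitrarily large in a run that still closes successfully, one could find two positions carrying the same pair of states with strictly increasing $|w|$; iterating the intermediate loop produces runs in which $|w|$ grows without bound and cannot be reabsorbed by any completion to a successful joint run, contradicting classical containment of $T_1$ in $T_2^i$ on the pumped inputs. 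The bound $d_i$ extracted from this estimate depends only on the sizes of $T_1$ and $T_2^i$.

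The main obstacle is precisely this bounded-difference lemma: the nondeterminism of $T_1$ together with the possibly infinite regular output languages on individual transitions means one cannot bound outputs directly by suffix length, so the argument must reason about which completions of a partial product run are compatible with containment rather than about lengths produced on a fixed input suffix. Once the lemma is in place, the remaining steps --- noting that every $d$-delay resynchronizer is rational, and that $T_1 \ocont R_d(T_2)$ follows from picking, for each $\synch \in T_1$, the matching $\synch'_i$ supplied by the lemma --- are routine.
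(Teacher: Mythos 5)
First, a remark on scope: the paper does not prove this statement at all --- it is imported verbatim as Theorem~13 of \cite{FiliotJLW16} and used as a black box in Corollary~\ref{cor:k-val equivalence} --- so there is no in-paper proof to compare against and I assess your argument on its own merits.

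Your decomposition of $T_2$ into unambiguous components $T_2^1,\dots,T_2^k$ and the final assembly via $d=\max_i d_i$ are fine, but the reduction to a \emph{per-component} bounded-delay lemma is unsound, for two linked reasons. (i) The lemma you reduce to is false as stated: it is not true that for every origin graph $\synch$ of $T_1$ over a pair $(u,v)$ lying in $T_2^i$, the delay to the unique $\synch'_i$ of $T_2^i$ is bounded. Take $\S=\{a,c\}$, $\G=\{b\}$; let $T_2^1$ output $bb$ per $a$ before the marker $c$ and $b$ per $a$ after it, let $T_2^2$ output $b$ per $a$ before $c$ and $bb$ per $a$ after it, and let $T_1=T_2^2$. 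All three are unambiguous, $T_1\subseteq T_2^1\cup T_2^2$, and the pair $(a^nca^n,b^{3n})$ lies in \emph{both} components; yet $\delay_{\synch,\synch'_1}(n)=n$ while $\delay_{\synch,\synch'_2}\equiv 0$. So the witnessing component must be chosen as a function of the origin graph $\synch$ (here $i=2$), not merely of the pair $(u,v)$, and proving that a good component always exists is precisely the content of the theorem --- it cannot be established one component at a time. (ii) Consequently your pumping argument does not close: iterating a loop of the product $T_1\times T_2^i$ along which the difference word grows yields a pumped input $u'$ and a run of $T_1$ on $u'$ whose output is not reproduced by $T_2^i$, but this only contradicts $T_1\subseteq T_2^i$ --- which you invoke explicitly and which is not a hypothesis. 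The pumped pair may perfectly well be realized by another component $T_2^j$, so no contradiction with $T_1\subseteq T_2$ is obtained.

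A correct argument has to run $T_1$ in product with \emph{all} $k$ runs of $T_2$ simultaneously (equivalently, with $T_2^1\times\dots\times T_2^k$, recording for each component whether its run is still alive and its current lag up to a threshold), and show that if the lags to all surviving candidate runs exceed the threshold, then one can choose loops to pump so that \emph{none} of the $k$ components can reabsorb its lag, producing a pair in $T_1\setminus T_2$. Your argument is essentially correct for $k=1$ (where it amounts to the product construction used in Theorem~\ref{th:one-way-equivalence-up}), but the passage from $k=1$ to general $k$ is exactly where the difficulty of the theorem lies, and your reduction discards it.
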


As a corollary we can generalize Theorem \ref{th:one-way-equivalence-up}
to $k$-valued one-way transducers, with the only difference that the witnessing 
rational resynchronizer now satisfies $T_1 \ocont R(T_2)$ rather than $T_1 \oeq R(T_2)$. We also recall that classical containment remains decidable
for $k$-valued one-way transducers, thanks to the fact that these can be
effectively transformed to finite unions of functional transducers~\cite{Weber96}:

\begin{corollary}\label{cor:k-val equivalence}
Let $T_1,T_2$ be $k$-valued one-way transducers. The following conditions are equivalent, and decidable:
\begin{enumerate}
  \item $T_1 \subseteq T_2$,
  \item $T_1 \ocont R(T_2)$ for some resynchronization $R$,
  \item $T_1 \ocont R(T_2)$ for some {\sl rational} resynchronizer $R$.
\end{enumerate}
\end{corollary}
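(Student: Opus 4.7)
The implication $3 \Rightarrow 2$ is immediate, since rational resynchronizers form a special case of resynchronizations. The implication $2 \Rightarrow 1$ holds because any resynchronization $R$ preserves input and output projections, so $R(T_2)$ and $T_2$ realise the same classical relation; thus the origin containment $T_1 \ocont R(T_2)$ yields the classical containment $T_1 \subseteq T_2$.

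The main work lies in $1 \Rightarrow 3$. The plan is to reduce to Theorem~\ref{th:k-val equivalence}, which provides a $d$-delay (hence rational) resynchronizer whenever the right-hand transducer is $k$-ambiguous. The obstacle is that $T_2$ is only assumed $k$-valued, not $k$-ambiguous. To overcome this, I would invoke Weber's effective decomposition~\cite{Weber96}, which expresses $T_2$ as a union $T_2^{(1)} \cup \cdots \cup T_2^{(k)}$ of functional one-way sub-transducers of $T_2$. Each $T_2^{(j)}$ can be further restricted, by selecting a subset of its runs, to an unambiguous functional transducer. Taking the state-disjoint union $T_2' = T_2^{(1)} \uplus \cdots \uplus T_2^{(k)}$ yields a one-way transducer that is classically equivalent to $T_2$, is $k$-ambiguous, and whose origin graphs are a subset of those of $T_2$ (since every run of $T_2'$ corresponds to a run of $T_2$).

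Applying Theorem~\ref{th:k-val equivalence} to $T_1$ and $T_2'$ then yields some $d \in \bbN$ and a $d$-delay resynchronizer $R_d$ such that $T_1 \subseteq T_2'$ implies $T_1 \ocont R_d(T_2')$. Since $T_1 \subseteq T_2 = T_2'$ classically and the origin graphs of $T_2'$ are among those of $T_2$, we conclude $R_d(T_2') \subseteq R_d(T_2)$, and hence $T_1 \ocont R_d(T_2)$. As $d$-delay resynchronizers are rational, this produces the desired witness. Decidability is inherited from the effectiveness of Weber's decomposition together with decidability of classical containment already available in the functional case via Theorem~\ref{th:one-way-equivalence-up}.

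The main technical point that needs care is the origin-preservation aspect of the reduction: the inclusion $R_d(T_2') \subseteq R_d(T_2)$ relies on each $T_2^{(j)}$ being a genuine sub-transducer of $T_2$, so that its origin graphs are already origin graphs of $T_2$. This is precisely what a careful transcription of Weber's construction delivers, but it is the step where the argument must be handled most delicately.
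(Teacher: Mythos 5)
Your proposal is correct and follows essentially the same route as the paper: both reduce to Theorem~\ref{th:k-val equivalence} by invoking Weber's decomposition to obtain a $k$-ambiguous transducer $T_2'$ that is classically equivalent to $T_2$ and satisfies $T_2' \ocont T_2$, and then conclude via the monotonicity $R_d(T_2') \ocont R_d(T_2)$. The point you flag as delicate (that the decomposition yields origin graphs already realized by $T_2$) is exactly the property $T_2' \ocont T_2$ that the paper also relies on.
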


\begin{proof}
We prove the only interesting implication from 1.~to~3.
Suppose that $T_1,T_2$ are $k$-valued one-way transducers such that $T_1 \subseteq T_2$. 
Using the decomposition theorem from \cite{Weber96}, 
we can construct a $k$-ambiguous
one-way transducer $T'_2$ that is classically equivalent to $T_2$ 
and such that $T'_2\ocont T_2$. 
Since $T_1\subseteq T'_2$, 
by Theorem \ref{th:k-val equivalence} we can compute a $d$-delay (in particular, rational)
resynchronizer $R_d$ such that $T_1 \ocont R_d(T'_2)$. 
Finally, since $T'_2\ocont T_2$, $T_1\ocont R_d(T'_2)$, and $R_d(T'_2) \ocont R_d(T_2)$,
we get $T_1\ocont R_d(T_2)$. 
\end{proof}

\subsection{Resynchronizing arbitrary one-way transducers}\label{subsec:synthesis-oneway-relational}

In the previous section we saw how to synthesize a rational
resynchronizer for functional, or even finite-valued, one-way 
transducers. One may ask if finite-valuedness is necessary.
 We  already know
that classical containment $T_1 \subseteq T_2$ is
undecidable~\cite{FischerR68,gri68} for arbitrary 
one-way transducers, whereas origin-containment 
$T_1 \ocont T_2$ is decidable~\cite{bmpp18}. Synthesis of a
rational resynchronizer $R$ such that $T_1 \ocont R(T_2)$ is a
question that lies between the two questions above. 
We  show in this section that in the case of {\sl real-time} transducers with 
unary output alphabet, the latter question is equivalent to 
language-boundedness of one-counter automata, a problem that we 
define below.

A transducer is said to be \emph{real-time} if it produces 
bounded outputs for each consumed input symbol.
A \emph{one-counter automaton} (OCA) is a non-deterministic pushdown 
automaton with a single stack symbol, besides the bottom stack symbol. 
In the definition of the language-boundedness problem, we assume that 
the OCA recognizes a universal language; this assumption is used in the 
reduction to the synthesis problem.

\begin{labeling}{\sffamily\bfseries Question:}
\item[\sffamily\bfseries Language-boundedness of OCA.]
\item[\sffamily\bfseries Input:] 
      An OCA $A$ over alphabet $\Omega$ that recognizes 
      the {\sl universal} language $L(A)=\Omega^*$. 
\item[\sffamily\bfseries Question:] 
      Does there exist some bound $k$ such that every word over 
      $\Omega$ can be accepted by $A$ with a run where the counter 
      never exceeds $k$?
\end{labeling}

\noindent Our reductions between language-boundedness of OCA and 
synthesis of rational resynchronizers rely on the following
result from \cite{FiliotJLW16}, that implies that bounded-delay 
resynchronizers are enough for synthesizing resynchronizers of real-time
transducers:

\begin{theorem}[Theorem~11 in \cite{FiliotJLW16}]\label{thm:bounded-delay}
Let $T_1,T_2$ be real-time, one-way transducers and 
$R$ a rational resynchronizer such that $T_1 \ocont R(T_2)$. 
One can compute a $d$-delay resynchronizer 
$R_d$ such that $T_1 \ocont R_d(T_2)$.
\end{theorem}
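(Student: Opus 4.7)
The plan is to show that any pair $(\synch', \synch) \in R$ with $\synch' \in T_2$ and $\synch \in T_1$ already has bounded delay, with the bound depending only on $R$, $T_1$, $T_2$. I would first assume without loss of generality that $R$ is letter-to-letter (by Elgot--Mezei), and let $L = |Q_R|$ bound the lag $\il(q)$ of every state of $R$ via Lemma~\ref{lem:rational-lag}. Using the real-time hypothesis, let $c$ bound the length of every partial output appearing on a transition of $T_1$ or $T_2$, so that every output block of a single origin in an origin graph of $T_i$ has size at most $c$. Set $d := (c+1)L + c$; the claim is that the $d$-delay resynchronizer $R_d$ already satisfies $T_1 \ocont R_d(T_2)$.

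Given $\synch \in T_1$, fix $\synch' \in T_2$ and a letter-to-letter run of $R$ witnessing $(\synch', \synch) \in R$, with source and target synchronized words $w, w'$. Write $v_i, v'_i$ for the output blocks of origin $i$ in $\synch, \synch'$, and set $A_y := \sum_{i \le y}|v_i|$, $B_y := \sum_{i \le y}|v'_i|$, so the delay at origin $y$ is exactly $|A_y - B_y|$. Fix $y$ and consider the step $s = y + B_y$ of the $R$-run, i.e.\ just after $w$ has been consumed through the $y$-th origin block: the source prefix $w[1..s]$ contains exactly $y$ input symbols. By Lemma~\ref{lem:rational-lag}, the target prefix $w'[1..s]$ then contains $y + \delta$ input symbols for some $|\delta| \le L$. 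The remaining $s - (y+\delta) = B_y - \delta$ positions of $w'[1..s]$ are outputs of $\synch$, and since outputs of $\synch$ of origin $j$ lie strictly between the $j$-th and $(j{+}1)$-th input symbols of $w'$, this gives
\[
A_{y+\delta-1} \;\le\; B_y - \delta \;\le\; A_{y+\delta}.
\]
Combining with $|A_{y+\delta} - A_y| \le c\,|\delta|$ and a short case split on the sign of $\delta$ yields $|A_y - B_y| \le (c+1)L + c = d$. Hence $(\synch', \synch) \in R_d$, and therefore $\synch \in R_d(T_2)$.

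The main technical obstacle I expect is the translation between the two notions of ``delay'': the lag $\il$ of $R$, which is a difference of input-symbol counts along a synchronized-word run, and $\delay_{\synch', \synch}$, which is a difference of output positions per input origin. Once this dictionary is correctly set up, the rest is a direct consequence of Lemma~\ref{lem:rational-lag} and the real-time assumption, and both $d$ and the (very simple) $d$-delay resynchronizer $R_d$ are effectively computable from $R$, $T_1$ and $T_2$.
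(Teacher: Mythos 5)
This statement is not proved in the paper at all: it is imported verbatim as Theorem~11 of \cite{FiliotJLW16} and used as a black box, so there is no in-paper argument to compare yours against. Judged on its own merits, your proof is essentially correct, and it is (as far as I can tell) the natural argument: you show that the witnessing pairs themselves already have delay bounded by a constant computable from $R,T_1,T_2$, so the $d$-delay resynchronizer subsumes the needed part of $R$. The dictionary between the lag of the letter-to-letter resynchronizer and the paper's notion of delay is set up correctly: with $s=y+B_y$ the source prefix $w[1..s]$ carries exactly $y$ input letters, Lemma~\ref{lem:rational-lag} bounds the discrepancy $\delta$ in input letters of the produced prefix $w'[1..s]$ by $L=|Q_R|$, and the sandwich $A_{y+\delta-1}\le B_y-\delta\le A_{y+\delta}$ together with $|A_{y+\delta}-A_y|\le c|\delta|$ (real-time hypothesis, applied to both transducers) gives $|A_y-B_y|\le (c+1)L+c$; I checked both signs of $\delta$ and the arithmetic closes. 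Since for one-way origin graphs the largest output position with origin $y$ is exactly the cumulative block length, this is the paper's $\delay_{\synch,\synch'}(y)$ whenever it is defined, so the chosen pair lies in $R_d$ and $T_1\ocont R_d(T_2)$ follows.

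A few small points you should tighten. Lemma~\ref{lem:rational-lag} is stated for a \emph{trimmed} letter-to-letter $R$, so add trimming to your without-loss-of-generality step. Your notation silently pairs the unprimed blocks $v_i$ (of the target graph $\synch$) with the primed word $w'$ and vice versa, which is needlessly confusing. Finally, there are harmless boundary cases you gloss over: origins $y$ whose block is empty in one of the two graphs (where the paper's $\delay$ is not literally defined), the final output $v_{n+1}$ produced at the accepting state, and indices $y+\delta$ falling outside $\{0,\dots,n\}$; each is handled by the same bound, but a careful write-up should say so.
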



\begin{restatable}{proposition}{synthesisUnaryOneWay}\label{prop:synthesis-unary-oneway}
Synthesis of rational resynchronizers 
for real-time one-way transducers with unary output alphabet 
and language-boundedness of OCA are inter-reducible problems.
Moreover, in the reductions, one can assume that the left
hand-side transducer is functional.
\end{restatable}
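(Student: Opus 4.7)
The plan rests on Theorem~\ref{thm:bounded-delay}: for real-time one-way transducers, the existence of a rational resynchronizer is equivalent to the existence of some uniform bound $d$ on the delay between matched origin graphs. So it suffices to relate OCA language-boundedness to the existence of such a $d$, and both reductions will hinge on the correspondence ``OCA counter value at step $i$ $\leftrightarrow$ running delay between unary outputs of $T_1$ and $T_2$ at input position $i$''.

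For the reduction from OCA language-boundedness to synthesis (producing a functional $T_1$), given an OCA $A$ over $\Omega$ with $L(A)=\Omega^*$, I would let $T_1$ be the functional transducer that outputs $bb$ per input letter with the natural origin, and let $T_2$ simulate a run of $A$ letter-by-letter, emitting $b$, $bb$, or $bbb$ whenever the simulated $A$-transition increments, keeps, or decrements the counter. Because $L(A)=\Omega^*$ with acceptance by final state and counter $0$, every input admits a $T_2$-run producing $b^{2|u|}$, matching $T_1$'s unique output; the $+1$ shift ensures that $T_2$ always emits at least one symbol per input position, so delays are well-defined. A short cumulative-output calculation shows that the delay at input position $i$ equals precisely the counter value $c_i$ of the simulated $A$-run, so a uniform delay bound $d$ exists iff $A$ is language-bounded by $d$. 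Invoking Theorem~\ref{thm:bounded-delay}, synthesis for $(T_1,T_2)$ succeeds iff $A$ is language-bounded, and $T_1$ is functional as required.

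For the converse reduction, I would build an OCA $A$ over an alphabet $\Omega=\Sigma\times E_1$ whose words encode pairs $(u,\rho_1)$ of an input and a candidate $T_1$-run. On such an encoding, $A$ checks that $\rho_1$ is a successful $T_1$-run producing some $b^n$, non-deterministically simulates a matching successful $T_2$-run $\rho_2$ producing the same $b^n$, and tracks the running difference $s^{(1)}_i - s^{(2)}_i$ of cumulative output lengths on the counter (with the sign stored in the finite state), accepting when the counter returns to $0$ at the end. To guarantee $L(A)=\Omega^*$ even when $T_1\not\subseteq T_2$ classically, $A$ also admits a \emph{backup} branch that, on any input word $w$, increments the counter once per letter read up to $|w|$ and then $\varepsilon$-decrements back to $0$ before accepting. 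On matched encodings the main branch's peak counter equals the minimum max-delay over matching $\rho_2$'s; on unmatched encodings only the backup accepts, with peak counter $|w|$.

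The main obstacle is then to argue that language-boundedness of $A$ is equivalent to synthesis. The forward direction is immediate: a uniform delay bound $d$ makes the main branch uniformly bounded by $d$ on matched encodings, and the backup is unneeded. For the converse, a pumping argument on the real-time one-way transducers $T_1,T_2$ is needed: any synthesis failure (classical non-inclusion $T_1\not\subseteq T_2$ or unbounded delay) can be amplified by iterating compatible cycles of $T_1$ and $T_2$, producing an infinite family of bad encodings $(u,\rho_1)$ of unbounded length; on those encodings the backup branch's counter $|w|$ also grows unboundedly, forcing $A$ to be not language-bounded. This pumping step is the technically delicate part; once it is in place, the inter-reducibility follows and the constructions preserve functionality of $T_1$, confirming the ``moreover'' clause.
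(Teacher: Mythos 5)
Your reduction from OCA language-boundedness to synthesis is essentially the paper's construction (the paper has $T_1$ emit one output letter per input and $T_2$ emit $cc$/$c$/$\emptystr$ on increment/no-change/decrement; your uniform $+1$ shift is harmless and indeed makes the delay at each position literally equal to the counter value), and your use of Theorem~\ref{thm:bounded-delay} there is exactly right.

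The reduction from synthesis to language-boundedness, however, has a genuine flaw in the treatment of words that do \emph{not} encode a matched run. Your backup branch accepts every word $w$ but only with peak counter $|w|$, and you state that on unmatched encodings \emph{only} the backup accepts. But badly-formed strings over $\Omega=\Sigma\times E_1$ (sequences of pairs whose transitions do not chain, do not start in an initial state, etc.) exist at every length, regardless of $T_1$ and $T_2$, and in your design their only accepting run is the backup with peak counter $|w|$. Hence the OCA you build is \emph{never} language-bounded, so the reduction maps yes-instances of synthesis to no-instances of language-boundedness; your claim that ``the backup is unneeded'' when a delay bound exists is false precisely on these garbage words. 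The paper resolves this with the opposite design: ill-formed encodings must be accepted \emph{cheaply} --- as soon as the first encoding error is detected the counter is reset and any continuation is accepted --- and the assumption that $T_1$ is trimmed guarantees that the counter can be kept bounded on the still-valid prefix (that prefix extends to a successful run, for which a matching low-delay $T_2$-run can be guessed). With that design, language-boundedness is governed solely by the well-formed encodings, and no amplification of failures is needed. This also removes your second gap: the ``technically delicate'' pumping argument you defer to (turning a single classical-inclusion failure or delay blow-up into arbitrarily long bad encodings on which the backup's counter grows) is never carried out, and at least the classical-non-inclusion case is not obviously pumpable while staying outside $T_2$; the paper's construction simply does not require it.
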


\ifLIPICS
\begin{proof}[Proof sketch.]
Given some real-time transducers $T_1,T_2$, one constructs an OCA $A$
that, when the input encodes a successful run of $T_1$,
guesses and simulates an equivalent successful run of $T_2$.
The OCA $A$ keeps track in its counter, how ahead or behind is the partial
output produced by the encoded run of $T_1$ compared to the partial output 
produced by the simulated run $T_2$, and accepts with an empty counter. 
Moreover, $A$ accepts all inputs that do not encode successful runs of $T_1$:
as soon as an error is detected, the counter is reset and frozen. Thus,  
badly-formed encodings do not affect language-boundedness.
Then, using Theorem \ref{thm:bounded-delay}, one shows that 
$A$ is language-bounded if and only if $T_1 \ocont R(T_2)$ for 
some rational (and w.l.o.g.~bounded-delay) resynchronizer $R$.

In the opposite reduction, one has to construct some real-time 
transducers $T_1,T_2$ from a given OCA $A$.
Both transducers receive inputs over the same alphabet as $A$. 
$T_1$ is a simple functional transducer that outputs 
one symbol for each consumed input symbol.
$T_2$, instead, guesses and simulates a run of $A$, and ouputs 
two symbols when the counter of the OCA increases, and no symbol
when it decreases. 
As before, one argues using 
Theorem \ref{thm:bounded-delay} that 
$A$ is language-bounded if and only if $T_1 \ocont R(T_2)$ for 
some rational resynchronizer $R$.
\end{proof}
\fi
\ifARXIV
\begin{proof}
We first prove the reduction from synthesis of rational resynchronizers 
to language-boundedness of OCA, and then prove the reduction in 
the opposite direction.

\paragraph*{From synthesis to language-boundedness.}
Let $T_1,T_2$ be real-time, one-way transducers with unary output
alphabet.  We suppose in addition that $T_1$ is trimmed.
We construct an OCA $A$ that reads encodings 
of successful runs of $T_1$.
If the input is not a successful run of $T_1$, 
then, as soon as an error is detected, $A$ resets its counter
and accepts any continuation of the input.
In particular, thanks to this behaviour and to $T_1$ being trimmed, badly-formed encodings 
of runs will not cause the counter of $A$ to be unbounded.

Consider now an input for $A$ that is a correct 
encoding of a successful run of $T_1$, say $\r_1$. 
In this case, $A$ guesses and simulates a successful run $\r_2$ of $T_2$ 
having the same input as $\r_1$. 
The counter of $A$ is used as expected: it is incremented according 
to the outputs produced using the transitions of $\r_1$, and decremented according to the outputs produced using the transitions
of $\r_2$, or vice versa when one needs to represent a negative value
(recall that OCA work with counter over natural numbers). 
The detail regarding which among $T_1, T_2$ is ``leading'', resulting 
 in the non-negative counter value can be stored in the finite control 
 of the OCA.

Intuitively, a configuration of $A$ determines how ahead or behind 
is the partial output produced by the encoded run of $T_1$ compared to 
the partial output produced by the simulated run $T_2$.
The OCA $A$ accepts with empty counter. 
Note that this construction is close to the direct product of 
$T_1$ and $T_2$, the main difference being the treatment of the
badly formed encodings and the role played by the counter.

Let us now prove that the OCA $A$ is language-bounded if and only
if $T_1 \ocont R(T_2)$ for some rational resynchronizer $R$.

Suppose first that the OCA $A$ is language-bounded, namely, that 
there is some $k\in\Nat$ such that every word is accepted by $A$ 
with a counter that never exceeds $k$. 
We can think of the successful runs of $A$ that maintain the counter
between $0$ and $k$ as runs of a $k$-delay resynchronizer $R$. 
More precisely, we can define a letter-to-letter resynchronizer $R$,
the states of which are  the configurations of $A$ with the value of the
counter inside $\{0,\dots,k\}$.
On consuming an input letter, $R$ produces the same input letter; 
on consuming a sequence of $j$ output letters, depending 
on the simulated transition of $A$, $R$ produces an output 
of length $j+h$ if the counter is incremented by $h$. Likewise,
 if the simulated transition of $A$ decrements the counter by $h$, then on reading a sequence of $j$ 
 output symbols, $R$ produces an output of length $j-h$. 
The run of $R$ is successful if an only if the simulated run of $A$ is so. 
The fact that $A$ accepts every word with a counter that never 
exceeds $k$, immediately implies that $T_1 \ocont R(T_2)$.  

Conversely, suppose that $T_1 \ocont R(T_2)$ for some
rational resynchronizer $R$. By Theorem \ref{thm:bounded-delay}, 
we can assume without loss of generality that $R$ is a $k$-delay resynchronizer,
for some $k$ (that can be even computed from $T_1$, $T_2$, and $R$, 
but this is immaterial here). 
From this it is easy to see that $A$ is language-bounded, and precisely, 
that $A$ accepts every word with a counter that never exceeds $k$,
as when reading a run $\rho$ of $T_1$, it can guess 
a run $\rho'$ of $T_2$ such that $R(\rho') = \rho$.

\paragraph*{From language-boundedness to synthesis.}
Let $A$ be an OCA. We construct two real-time, one-way transducers 
$T_1,T_2$ that have the same input alphabet as $A$, say $\S$, and 
a singleton output alphabet, say $\G=\{c\}$.  
The transducer $T_1$ reads any word $a_1 \dots a_n \in \S^*$ 
and outputs one letter $c$ for each consumed input symbol. 
In particular, the synchronization language of $T_1$ is 
$\{a_1 c \dots a_n c : a_i \in \S, n \ge 0\}$. 
Note that $T_1$ is real-time and functional.
The transducer $T_2$ does the following: 
upon reading  $a_1 \dots a_n$, it guesses 
a successful run of the OCA $A$. 
Whenever the counter is incremented along the guessed run of $A$,
$T_2$ outputs $cc$; whenever the counter is decremented, $T_2$ outputs
$\emptystr$; whenever the counter is unchanged, $T_2$ outputs $c$.
Note that $T_2$ is also real-time, but not necessarily functional.

Let us now prove that $A$ is language-bounded if and only
if $T_1 \ocont R(T_2)$ for some rational resynchronizer $R$.

Suppose first that $A$ is language-bounded, with bound $k$. 
We obtain from this a  $k$-delay resynchronizer
$R$ that reads a synchronized word 
$a_1 c^{i_1} \dots a_n c^{i_n}$ of $T_2$, where $i_j\in\set{0,1,2}$
for all $j$.
The resynchronizer $R$ simulates a counter taking values in $[-k,k]$, 
and outputs $a_1 c \dots a_n c$, accepting if and only if 
the counter is $0$. Each time an $a_ic^2a_j$ is encountered, it corresponds 
to an increment in the OCA; then $R$ outputs $a_ic$, and 
the simulated counter decreases by 1 in $R$; likewise, 
each time an $a_ica_j$ is encountered, $R$ outputs $a_ic$ with 
no change in the simulated counter value, and finally, 
when two consecutive input symbols $a_ia_j$ are read by $R$, 
$R$ outputs $a_ic$ and 
 the simulated counter value increases by 1. 
 Since the counter value is bounded by $k$ in the OCA, the simulated counter 
 in $R$ is within $[-k,k]$. 
Clearly, $T_1 \ocont R(T_2)$. 

Conversely, suppose that $T_1 \ocont R(T_2)$ for some rational resynchronizer $R$.
We argue as before, using Theorem \ref{thm:bounded-delay}:
we assume without loss of generality that $R$ is a $k$-delay resynchronizer,
for some $k$, and derive from this that $A$ is language-bounded.
\end{proof}



\fi

The status of the problem of language-boundedness of OCA was open, 
to the best of our knowledge. Piotr Hofman communicated to us the
following unpublished result, which can be obtained by a reduction
from the undecidable boundedness problem for Minsky machines%
\ifLIPICS
\xspace (the proof is in the extended version of this paper)%
\fi%
:

\begin{restatable}[\cite{hofman19}]{theorem}{undecidableOCABoundedness}\label{th:ph}
The language-boundedness problem for OCA is undecidable.
\end{restatable}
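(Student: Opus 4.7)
The plan is to reduce from the boundedness problem for deterministic 2-counter Minsky machines, which is classically undecidable. Given such a machine $M$ with unique infinite computation $\tau$ and counter values $(c_1^i, c_2^i)_{i \ge 0}$, I will construct an OCA $A_M$ over a finite alphabet $\Omega$ so that $L(A_M) = \Omega^*$ and $A_M$ is language-bounded iff $M$ is bounded, i.e., iff $\sup_i \max(c_1^i, c_2^i) < \infty$.

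The alphabet $\Omega$ will encode purported traces of $M$: each input letter records a state, an instruction, and the claimed outcome of any zero-test performed there. On input $w$, $A_M$ non-deterministically dispatches to one of several branches, collectively providing universal acceptance. A \emph{structural} branch accepts when the sequence of (state, instruction) pairs in $w$ is inconsistent with $M$'s program given the claimed zero-test outcomes (a finite-control check, counter $0$). For each $j \in \{1,2\}$, a \emph{verification} branch simulates $c_j$ along all of $w$ and accepts if every zero-test claim for $c_j$ matches the simulated value; its counter reaches $\max_i c_j^i$. For each $j$, an \emph{error-detection} branch simulates $c_j$ up to a non-deterministically guessed earliest position where a zero-test claim for $c_j$ disagrees with the simulated value, and accepts there. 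Case analysis shows every input is accepted: valid traces by verification, inputs with only zero-test errors by error-detection, and inputs with state or structural errors by the structural branch.

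The main obstacle is ensuring that language-boundedness of $A_M$ corresponds exactly to boundedness of $M$, and two subtleties must be addressed. First, because the OCA accepts existentially, a valid trace is accepted with counter only $\min(\max_i c_1^i, \max_i c_2^i)$, which can stay bounded even if $M$ is not (e.g., if only one counter grows). I resolve this by first preprocessing $M$ into an equivalent machine $M'$ whose two counters evolve in lockstep — each instruction in $M$ is replaced by a gadget mirroring its effect on both counters, preserving boundedness equivalence and forcing $c_1^i = c_2^i$ throughout, so both verification branches require the same counter $\max_i c_1^i$. Second, for the error-detection branch to use counter bounded by the true $c_j$ at the first error (rather than a possibly divergent virtual value obtained by following incorrect prior claims), the simulation must coincide with the real computation up to the error position; I arrange this by coupling the simulation with finite-state consistency checks that deflect any earlier inconsistency to the structural branch, ensuring the error-detection branch commits to the earliest error of any kind. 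With these fixes in place, if $M$ is bounded by $B$ then every word is accepted with counter at most $B$; if $M$ is unbounded then arbitrarily long valid prefixes exist on which no error branch applies and verification forces unbounded counter. The resulting equivalence, combined with undecidability of boundedness for Minsky machines, yields the theorem.
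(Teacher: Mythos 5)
Your overall architecture is the same as the paper's: a reduction from boundedness of Minsky machines in which the OCA reads a purported trace and nondeterministically either (i) rejects it structurally in finite control, (ii) verifies one chosen counter along the whole trace, or (iii) exhibits the earliest counter error and accepts there. You also correctly isolate the central subtlety that the paper glosses over with a ``w.l.o.g.'': since the OCA accepts existentially, an error-free trace is accepted with counter value $\min_j \max_i c_j^i$, so a machine in which only one counter grows would defeat the reduction. Your treatment of the error-detection branches (committing to the earliest error, deflecting prior inconsistencies to the structural branch) matches the paper's argument and is fine.

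The gap is in your fix for that subtlety. A machine whose two counters satisfy $c_1^i = c_2^i$ at every step is semantically a \emph{one}-counter machine: a gadget that keeps the counters equal can only observe whether their common value is zero, so a zero-test of the original $c_1$ in a configuration where $c_2 \neq 0$ cannot be simulated. Hence no instruction-by-instruction ``mirroring'' can be both lockstep and faithful to an arbitrary $2$-counter machine $M$. Even if you only demand that $M'$ be bounded iff $M$ is, you are in trouble: boundedness for deterministic machines with a single counter (equivalently, with two permanently equal counters) is decidable, so such a boundedness-preserving translation would make the source problem decidable, a contradiction. The correct normalization — the one the paper assumes — is weaker: if $M$ is unbounded, then (on some run) \emph{all} counters exceed every bound; this does not force the counters to be equal and is achievable by a different preprocessing of the Minsky machine. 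To repair your proof, replace the lockstep construction by such a normalization (or reduce from a formulation of Minsky-machine boundedness that already guarantees it); the rest of your argument then goes through as in the paper.
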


\ifARXIV
\begin{proof}
 The reduction is from the
  boundedness problem for multi-counter (Minksy) machines. Such a
  machine $M$ can increment, decrement and test for zero. The question is
  whether there exists some bound $k$ such that all computations of
  $M$ (not necessarily accepting) from the initial configuration with
  all counters zero, have all counters stay below $k$. One can assume
  w.l.o.g.~that if $M$ is not bounded then for every $k$ there is some
  initial run of $M$ where \emph{all} counters exceed $k$.

  The OCA $A$ reads sequences of transitions of $M$. At the beginning, 
  $A$ guesses a counter index $j$ of $M$ and starts simulating the
  sequence of transitions on counter $j$. If the sequence of
  transitions is incorrect because of counter $j$, the OCA accepts and
  stops after emptying counter $j$. Note that there are two types of
  error: either the counter is zero but should be decremented, or the
  counter is tested for zero, but is not zero. Both kinds of error can
  be checked by the OCA. Otherwise, if the simulation goes through for
  counter $j$, then the OCA accepts with empty counter at the end.

  Assume that $M$ is bounded, with bound $k$. If a sequence $\r$ of
  transitions is a run of $M$, then all simulations on any counter
  will be bounded by $k$. If $\r$ is not a run, then there is a first
  position of $\r$ where an error occurs, for instance because of
  counter $j$. Then the run of $A$ simulating counter $j$ will accept
  $\r$ within bound $k$.

  If $M$ is unbounded then for every $k$ there is a run $\r$ where \emph{all}
  counters exceed $k$. In this case all runs of $A$ on $\r$ exceed
  $k$, so $A$ is not language-bounded.
\end{proof}


%
\fi

\begin{corollary}
Synthesis of rational resynchronizers for (real-time) one-way 
transducers is undecidable,
and this holds even when the left hand-side transducer is functional.
\end{corollary}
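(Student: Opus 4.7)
The plan is to deduce this corollary immediately by chaining the two preceding results. First I would invoke Proposition \ref{prop:synthesis-unary-oneway}, which provides a reduction from language-boundedness of OCA to the synthesis problem for rational resynchronizers, with the additional guarantees that the produced instance consists of real-time one-way transducers, that the output alphabet is unary, and that the left-hand side transducer $T_1$ is functional. Then I would invoke Theorem \ref{th:ph}, which asserts that language-boundedness of OCA is undecidable. Composing the two yields undecidability of the synthesis problem in the stated restricted setting.

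More concretely, suppose for contradiction that the synthesis problem for rational resynchronizers were decidable for real-time one-way transducers with a functional left-hand side. Given an arbitrary OCA $A$ (with $L(A) = \Omega^*$, as required in the statement of language-boundedness), we apply the reduction from Proposition \ref{prop:synthesis-unary-oneway} to compute real-time one-way transducers $T_1, T_2$ with $T_1$ functional, such that $A$ is language-bounded iff there exists a rational resynchronizer $R$ with $T_1 \ocont R(T_2)$. The assumed decision procedure for synthesis would then decide language-boundedness of $A$, contradicting Theorem \ref{th:ph}.

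Since there is no real obstacle here beyond pointing to the right inputs, the only minor care needed is to ensure that the direction of Proposition \ref{prop:synthesis-unary-oneway} used is the one going \emph{from} language-boundedness \emph{to} synthesis, and that the functionality assertion in the proposition's ``Moreover'' clause applies to this direction (which is precisely the construction given in the proof sketch, where $T_1$ is defined as the simple transducer outputting one $c$ per input symbol). The parenthetical ``(real-time)'' in the corollary is automatically satisfied because the transducers produced by that reduction are real-time by construction.
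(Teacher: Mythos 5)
Your proposal is correct and is exactly the argument the paper intends: the corollary is stated as an immediate consequence of chaining the reduction from language-boundedness of OCA to synthesis given in Proposition~\ref{prop:synthesis-unary-oneway} (which produces real-time transducers with $T_1$ functional) with the undecidability of language-boundedness from Theorem~\ref{th:ph}. Your attention to using the correct direction of the inter-reduction and to the ``Moreover'' clause is precisely the right care to take.
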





\subsection{Resynchronizing unambiguous, two-way transducers}\label{subsec:synthesis-twoway}
	
We now focus on the resynchronizer synthesis problem for two-way transducers. 
Here the appropriate class of resynchronizations is that of regular resynchronizers,
since, differently from rational resynchronizer, they can handle origin graphs induced 
by two-way transducers.
The situation is more delicate, as the synthesis problem does not reduce anymore
to classical containment. As an example, consider the transducer $T_1$ that
consumes an input of the form $a^*$ from left to right, while copying the letters to the output,
and a two-way transducer $T_2$ that realizes the same function but while consuming
the input in reverse. We have that $T_1\subseteq T_2$, but there is no resynchronizer
$R$ that satisfies $T_1 \ocont R(T_2)$ and that is bounded and regular at the same time.
As we will see, extending Theorem \ref{th:one-way-equivalence-up} to two-way 
transducers is possible if we move beyond the class of regular 
resynchronizers and consider bounded resynchronizers defined by Parikh automata. 
The existence of bounded regular resynchronizers between functional two-way transducers
can thus be seen as a strengthening of the classical containment relation.
Unfortunately, we are only able to solve the synthesis problem of bounded regular 
resynchronizers for {\sl unambiguous} two-way transducers, so the problem remains
open for functional two-way transducers.

First we introduce resynchronizers definable by Parikh automata. 
Formally, a \emph{Parikh automaton} is a finite automaton
$A=(\Sigma,Q,I,E,F,Z,S)$  
equipped with a function $Z:E\rightarrow \bbZ^k$ that associates vectors of integers 
to transitions and a semi-linear set $S \subseteq \bbZ^k$.
A successful run of $A$ is a run starting in $I$, ending in $F$ and
such as the sum of the weights of its transitions belongs to $S$. 
%
We say that $A$ is \emph{unambiguous} if the underlying finite
automaton is. 
In this case, we can associate with each input $u$ the vector $A(u) \in \bbZ^k$ 
associated with the unique accepting run of the underlying automaton of
$A$ on $u$, if this exists, 
otherwise $A(u)$ is undefined. 
By taking products, one can easily prove that unambiguous Parikh automata 
are closed under pointwise sum and difference, that is, given $A_1$ and $A_2$, 
there are $A_+$ and $A_-$ such that $A_+(u)=A_1(u)+A_2(u)$ and $A_-(u)=A_1(u)-A_2(u)$ 
for all possible inputs $u$.
Hereafter, we will only consider languages recognized by {\sl unambiguous} Parikh automata with the
trivial semilinear set $S=\{0^k\}$.

By a slight abuse of terminology, we call \emph{Parikh resynchronizer} 
any resynchronizer with parameters whose relations $\move_\otype$ and 
$\nxt_{\otype,\otype'}$ are recognizable by unambiguous Parikh automata,
and $\ipar$ and $\opar$ are regular.
We naturally inherit from regular resynchronizers the notion of boundedness. 
Moreover, we introduce another technical notion, that will be helpful later.
Given a resynchronizer $R$, we define its \emph{target set} as the set of all pairs $(u,z)$
where $u$ is an input, $z$ is a position in it, and $(w,y,z)\in\move_\otype$ for 
some annotation $w$ of $u$ with input parameters, some input position $y$, and some output 
type $\otype$.
Similarly, we define the \emph{target set} of a two-way transducer $T$ as the set of all 
pairs $(u,z)$, where $u=\In(\synch)$ and $z\in\Orig(\synch)(x)$ for some $x\in\dom(\Out(\synch))$
and some origin graph $\synch$ realized by $T$.

\begin{restatable}{theorem}{twoWayEquivalenceUp}\label{th:two-way-equivalence-up}
Let $T_1,T_2$ be two unambiguous two-way transducers. The following conditions are equivalent:
\begin{enumerate}
	\item $T_1 \subseteq T_2$,
	\item $T_1 \ocont R(T_2)$ for some resynchronization $R$,
	\item $T_1 \oeq R(T_2)$ for some $1$-bounded 
	      Parikh resynchronizer $R$ 
	      whose target set coincides with that of $T_1$
	      and where, each relation $\nxt_{\otype,\otype'}$ is regular
	      if $\move_{\otype}$ and $\move_{\otype'}$ are regular.
\end{enumerate}
\end{restatable}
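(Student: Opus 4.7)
The implications $3 \Rightarrow 2$ and $2 \Rightarrow 1$ are immediate: any $1$-bounded Parikh resynchronizer is in particular a resynchronization (so $T_1 \oeq R(T_2)$ implies $T_1 \ocont R(T_2)$), and applying any resynchronization preserves the underlying input-output pairs (so $T_1 \ocont R(T_2)$ implies $T_1 \subseteq T_2$). I focus on the nontrivial direction $1 \Rightarrow 3$, following a product-like idea in the spirit of Theorem \ref{th:one-way-equivalence-up} but adapted to two-way runs.

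Assume $T_1 \subseteq T_2$, with both transducers unambiguous two-way. For every $u$ in the domain of $T_1$, unambiguity yields a unique accepting run $\rho_1$ of $T_1$ producing some output $v$; since $T_1 \subseteq T_2$, $T_2$ also accepts $u$ with output $v$, and by unambiguity has a unique run $\rho_2$. The resynchronizer must redirect $T_2$'s origins onto $T_1$'s origins. To this end, through $\ipar$ (regular) I annotate each input position with the pair of crossing sequences of $\rho_1$ and $\rho_2$; the validity check is regular by standard two-way analysis, and valid annotations exist iff $u$ lies in the common domain. Through $\opar$ (regular) I annotate each output position $x$ with a bounded tuple $(v_1, o_1, v_2, o_2)$, where $v_i$ is the visit index along $\rho_i$ that emits $x$ and $o_i$ is the offset of $x$ within that visit's output block; both are bounded by a constant depending on the transducer size, since an unambiguous two-way run visits each position at most $|Q|$ times and emits a bounded output per transition.

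The core relation $\move_\otype(u, y, z)$, for $\otype = (c, v_1, o_1, v_2, o_2)$, is defined to hold iff the annotations are valid, $\rho_1$ visits $z$ for the $v_1$-th time and emits $c$ at offset $o_1$, $\rho_2$ visits $y$ for the $v_2$-th time and emits $c$ at offset $o_2$, and these two local descriptions refer to the \emph{same} position of $v$. The first three conditions are regular given the annotations. The last, global-matching condition is enforced by an unambiguous Parikh automaton: a counter $c_1$ accumulates the number of output letters emitted by $\rho_1$ strictly before its $v_1$-th visit of $z$, a counter $c_2$ does the analogous count for $\rho_2$, and the Parikh acceptance condition requires $c_1 - c_2 = 0$. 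The relations $\nxt_{\otype, \otype'}$ are defined analogously, asserting that the $\rho_1$-transition immediately following the one marked by $\otype$ is precisely the one marked by $\otype'$; they collapse to purely regular checks whenever $\move_\otype$ and $\move_{\otype'}$ are already regular, since no nontrivial counting is then needed.

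The main obstacle is implementing the counters $c_1, c_2$ with a left-to-right scanning device while the underlying runs are two-way. The solution uses a Shepherdson-style reduction on the crossing-sequence annotations: the input annotation is further refined, via an extra regular component, with a per-visit bit indicating whether that visit occurs globally before or after the marked one in $\rho_i$; consistency of this refinement is verified by a one-way automaton using local transitions between adjacent crossing sequences. A Parikh counter then accumulates the output length only over the ``before'' visits, which is a purely local operation during the left-to-right scan. Unambiguity of the resulting Parikh automaton is inherited from unambiguity of $\rho_1, \rho_2$ and of the annotation. Finally, $1$-boundedness holds because $(z, \otype)$ already pins down the unique output position $x$ via $(z, v_1, o_1)$, and hence pins $y$ via $\rho_2$'s bijection; the target set of $R$ coincides with that of $T_1$ by construction, since only positions visited by $\rho_1$ can serve as admissible targets, yielding $T_1 \oeq R(T_2)$ as required.
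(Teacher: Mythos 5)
Your proposal is correct and follows essentially the same route as the paper's proof: crossing-sequence annotations of the input, output positions encoded by their origin together with a bounded visit index, unambiguous Parikh automata that recover the absolute output position and whose difference being zero defines $\move_\otype$, consistency of the output annotation enforced inductively through $\nxt_{\otype,\otype'}$ (composed with $\move$ where source origins are needed), and $1$-boundedness plus the target-set condition following from the resulting partial bijection. The only cosmetic differences are that the paper normalizes transitions to emit at most one letter (so your offsets $o_i$ disappear) and leaves the before/after bookkeeping for the left-to-right Parikh counting implicit, which you spell out.
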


\ifLIPICS
\begin{proof}[Proof sketch.]
We focus on the implication from 1.~to~3., 
as the other implications are trivial.
Similarly to the one-way case, to synthesize a resynchronizer,
we need to annotate the input with the (unique) successful runs 
of $T_1$ and $T_2$ (if these runs exist). 
Since $T_1,T_2$  are 
two-way, the natural way of doing it is to use 
\emph{crossing sequences}. 
Thanks to the encoding of runs by means of crossing sequence, 
we can describe any output position $x$ with a pair $(y,i)$, 
where $y$ is the origin of $x$ (according to $T_1$ or $T_2$)
$i$ is the number of output positions before $x$ with the same
origin $y$. Note that $i$ is bounded, as the transducers here 
are unambiguous, and hence every input position is visited at 
most a bounded number of times.

Given $T=T_1$ or $T=T_2$ and an index $i$, one can construct 
a unambiguous Parikh automaton $A_{T,i}$ that, when receiving
as input a word $u$ with a marked position $y$, produces the 
unique position $x$ that is encoded by the pair $(y,i)$, according
to the transducer $T$. It follows that, for every output element
correctly annotated with $\otype=(a,i,j)$, the relation $\move_\otype$ 
can be defined as $\{(y,z) \:\mid\: A_{T_2,i}(y) - A_{T_1,j}(z) = 0\}$,
which is 
a unambiguous Parikh language.
This almost completes the definition of the Parikh resynchronizer $R$.
The remaining components of $R$ consists of suitable relations 
$\nxt_{\gamma,\gamma'}$ that check correctness of the annotations.
In particular, the relations $\nxt_{\gamma,\gamma'}$ are obtained
by pairing a regular property with properties defined in terms
of the prior relations $\move_\gamma$ and $\move_{\gamma'}$, 
and hence $\nxt_{\gamma,\gamma'}$ is regular whenever $\move_\gamma$ 
and $\move_{\gamma'}$ are.
\end{proof}
\fi
\ifARXIV
\begin{proof}
The implications from 2.~to~1. and from 3.~to 2.~are as in the proof
of Theorem \ref{th:one-way-equivalence-up}. The only interesting implication
is from 1.~to~3, where we suppose that $T_1 \subseteq T_2$ and we aim at 
constructing a $1$-bounded 
Parikh resynchronizer $R$ such that $T_1 \oeq R(T_2)$,
and with the same target set as $T_1$. 
The proof exploits some constructions based on crossing sequences, 
which are classically used to translate two-way 
automata to equivalent one-way automata \cite{Shepherdson59}, 
as well as to reduce containment of functional two-way 
transducers to emptiness of languages recognized by Parikh 
automata \cite{STACS19}. We briefly recall the key notions here, by adapting
them in a way that is convenient for the presentation (notably, considering transitions instead of states).

A \emph{crossing sequence} of a two-way automaton or a functional two-way transducer is a tuple 
$\bar t=(t_1,\dots,t_n)$ of transitions such that the source states of $t_1,t_3,\dots$ are 
right-reading and the source states of $t_2,t_4,\dots$ are left-reading.
The tuple is meant to describe the transitions along a successful run that 
depart from configurations at a certain position $y$. 
Formally, given a run $\rho$, the crossing sequence of $\rho$ at input position $y$, denoted $\rho[y]$,
consists of the quadruples $(q,a,v,q')$ such that $(q,y) \act{a \:|\: v} (q',y')$ is a transition
of $\rho$, where the occurrence order on transitions induces a corresponding order on the
quadruples of the crossing sequence.
Without loss of generality, for two-way automata, as well as for functional two-way transducers, 
one can restrict to successful runs that never visit the same state twice at the same position. 
Accordingly, we can assume that the length of a crossing sequence never exceeds the total number 
of states of the device. 
Moreover, when the two-way automaton or transducer is unambiguous, the crossing sequences are 
uniquely determined by the input and the specific position in it. More precisely, there are
regular languages $L_{\bar t}$, one for each possible crossing sequence, that contains precisely those
inputs $u$ with a specific position $y$ marked on it (for short, we denote such words by $\angled{u,y}$),
such that the crossing sequence at $y$ of the unique successful run on $u$ is precisely $\bar t$.

\smallskip
We now turn to the main proof, which is divided into several steps.

\paragraph*{Encoding output positions.}
We begin by describing a natural encoding of arbitrary output positions by means of their origins.
Of course, the encoding depends on the given input, denoted $u$, and on the transducer we consider, 
either $T_1$ or $T_2$, which here is generically denoted by $T$.
Now, let $\rho$ be the unique successful run of $T$ on $u$, and 
let $\synch$ be the induced origin graph.
To simplify the notations, hereafter we tacitly assume that $T$
produces at most one letter at each transition --- the 
assumption is without loss of generality, since long outputs 
originating at the same input position can be produced incrementally
by exploiting two-way head motions. 
Let $n$ be the number of states of $T$. Since $T$ is unambiguous, $\synch$ contains at most $n$ 
output positions with the same origin (otherwise, the same configuration would be visited at least 
twice along the successful run $\rho$, which could then be used to contradict the assumption of 
unambiguity). 
This means that every position $x$ in $\Out(\synch)$ can be encoded by its origin 
$y_x=\Orig(\synch)(x)$ together with a suitable index $i_x\in\{1,\dots,n\}$, 
describing the number of output positions $x' \le x$ with the same origin $y_x$ as $x$.
Moreover, we recall that $y_x$ can be represented as an annotated input of the form
$\angled{u,y_x}$.

\paragraph*{Decoding by Parikh automata.}
We now show that there are Parikh automata that compute the inverse of 
the encoding $x \mapsto (y_x,i_x)$ described above. More precisely, there are unambiguous 
Parikh automata $A_1,\dots,A_n$ such that each $A_i$ receives as input a word $\angled{u,y}$
having a special position marked on it, and outputs the unique output position $x$ such that 
$(y,i)=(y_x,i_x)$, if this exists, otherwise the output is undefined.
Each automaton $A_i$ can be constructed from $T$ and $i$ by unambiguously guessing 
the crossing sequences of the unique run of $T$ on $u$, and by counting the number of 
output symbols emitted until a \emph{productive} transition at the marked position 
$y$ is executed for the $i$-th time --- a productive transition is a transition 
that produces non-empty output.

\paragraph*{Redirecting origins.}
We now apply the constructions outlined above in order to obtain the desired Parikh resynchronizer $R$
from $T_1$ and $T_2$.
Let $u$ be some input and $\synch_1,\synch_2$ be the origin graphs induced by the 
unique successful runs of $T_1,T_2$ on $u$.
Since $T_1\subseteq T_2$, we can further let 
$v=\Out(\synch_1)=\Out(\synch_2)$.
Consider any output position $x\in\dom(v)$. 
According to $\synch_2$, $x$ is encoded by an input position $y_x$ and an index $i_x\in\{1,\dots,n_2\}$,
where $n_2$ is the number of states of $T_2$. In a similar way, according to $\synch_1$, the same position 
$x$ is encoded by some input position $z_x$ and an index $j_x\in\{1,\dots,n_1\}$, where $n_1$ is the number
of states of $T_1$. 
Moreover, based on the previous constructions, there are unambiguous Parikh automata 
$A_{2,i}$ and $A_{1,j}$ such that
\begin{itemize}
	\item $A_{2,i}(\angled{u,y})=x$ if and only $(y,i) = (y_x,i_x)$,
	\item $A_{1,j}(\angled{u,z})=x$ if and only $(z,j) = (z_x,j_x)$.
\end{itemize}
Since unambiguous Parikh automata are closed under pointwise difference, there is a 
unambiguous Parikh automaton $A_{i,j}$ that recognizes precisely the language of 
annotated words $\angled{u,y,z}$ such that
\begin{align*}
	A_{2,i}(\angled{u,y}) - A_{1,j}(\angled{u,z}) = 0
	\tag{$\star$}
\end{align*}
Note that the above language defines 
a partial bijection between pairs of positions $y,z$ in the input $u$ in 
such a way that $y$ and $z$ are the origins of the same output position $x$ 
according to the unique origin graphs $\synch_1,\synch_2$ of $T_1,T_2$ 
such that $\In(\synch_1)=\In(\synch_2)=u$.
This property can be used to define the component $\move_\otype$ 
of the desired resynchronizer $R$, by simply letting
\[
  \move_\otype = \{ (u,y,z) ~\mid~ A_{i,j}(\angled{u,y,z}) = 0 \}
\]
where $\otype=(a,i,j) \in \G\times\{1,\dots,n_2\}\times\{1,\dots,n_1\}$.

For the correctness of the above definition we rely on guessing the 
correct pairs of indices $(i,j)$ as annotations of output positions.
More precisely, we have that:
\begin{itemize}
  \item for every output position $x$ with source origin 
        $y=\Orig(\synch_2)(x)$ and with label $\otype=(a,i_x,j)$,
        there is at most one input position $z$ such that $(u,y,z)\in\move_\otype$; 
        in addition, if we also have $j=j_x$, then $z=\Orig(\synch_1)(x)$
        is the target origin of $x$; symmetrically,
  \item for every output position $x$ with target origin 
        $z=\Orig(\synch_1)(x)$ and with label $\otype=(a,i,j_x)$,
        there is at most one input position $y$ such that $(u,y,z)\in\move_\otype$; 
        in addition, if we also have $i=i_x$, then $y=\Orig(\synch_2)(x)$
        is the source origin of $x$.
\end{itemize}
Based on the above properties, we need to guess suitable output parameters 
that associate with each position $x$, a correct pair $(i_x,j_x)$. 
We explain below how this is done using the components $\opar$ and 
$\nxt_{\otype,\otype'}$ of the resynchronizer. 

\paragraph*{Constraining output parameters.}
We first focus on the indices $j_x$ related to $T_1$; 
we will later explain how to adapt the constructions to check the indices $i_x$ 
related to $T_2$. As usual, we fix an input $u$ and the unique successful run $\rho_1$
of $T_1$ on $u$.
The idea is that each index $j_x$ corresponds to a certain element of the crossing 
sequence of $\rho_1$ at the target origin $z_x$, and knowing the correct index for 
$x$ determines the correct index for the next output position $x+1$. 
Based on this, correctness can be verified inductively using the guessed 
crossing sequences and the relation $\nxt_{\otype,\otype'}$ of the resynchronizer, 
as follows.
For the base case, we check that the first output position is correctly 
annotated with the index $j=1$: this is readily done by a regular language $\opar$.

For the inductive step, we consider an output position $x$ and assume 
that it is correctly annotated with $j=j_x$.
Let $j'$ be the annotation of the next position $x+1$. 
To check that $j'$ is also correct, we consider pairs of productive transitions
in the crossing sequences associated with the target origins of $x$ and $x+1$, 
and verify that they are connected by a non-productive run.
More precisely, let $z$ and $z'$ be the target origins of $x$ and $x+1$,
respectively, and let $\bar t_z$ and $\bar t_{z'}$ be the crossing sequences 
of $\rho_1$ at those positions.
We have that $j'=j_{x+1}$ if and only if the $j$-th productive transition 
of $\bar t_z$ and the $j'$-th productive transition of $\bar t_{z'}$ 
are connected by a factor of the run that consists only of non-productive transitions.
The latter property can be translated to a regular property $\nxt_{\otype,\otype'}$
concerning the input annotated with two specific positions, $z$ and $z'$, 
assuming that $\otype=(a,i,j)$ and $\otype=(a',i',j')$ are the letters of 
the output positions $x$ and $x+1$.

It now remains to check the correctness of the output annotations
w.r.t.~the indices $i$ for the second transducer $T_2$. We follow
a principle similar to the one described above for $T_1$. The only
difference is that now, in the inductive step, we have work with the 
source origins $y$ and $y'$ of consecutive output positions $x$ and $x+1$. 
The additional difficulty is that, by definition, the relation $\nxt_{\otype,\otype'}$ 
can only refer to target origins. We overcome this problem by exploiting the partial
bijection between target and source origins, as defined by the relations 
$\move_\otype$ and $\move_{\otype'}$. Formally, we first define a relation 
$\nxt_{\otype,\otype'}^{\text{source}}$ as before, that constrain the indices 
$i$ and $i'$ associated with two consecutive output positions $x$ and $x+1$ 
labeled by $\otype$ and $\otype'$, respectively. We do this as if 
$\nxt_{\otype,\otype'}^{\text{source}}$ were able to speak about source origins. 
We then intersect the following relation with the previously 
defined relation $\nxt_{\otype,\otype'}$: 
\[
  \big\{ (u,z,z') ~\mid~ 
     \exists y,y'~ (u,y,y')\in\nxt_{\otype,\otype'}^{\text{source}}, ~
                   (u,y,z)\in\move_\otype, ~
                   (u,y',z')\in\move_{\otype'} \big\}.
\]
Since in the inductive step we assume that $x$ is correctly annotated with the
pair $(i,j)$ and $x+1$ is annotated with $(i',j')$, where $j'=j_x$ is correct 
by the previous arguments, there are unique $y,y'$ that satisfy 
$(u,y,z)\in\move_\otype$ and $(u,y',z')\in\move_\otype$ 
in the above definition, and these must be the source origins of $x$ and $x+1$.
This means that the above relation, which is definable by a unambiguous Parikh 
automaton, correctly verifies the correctness of the index $i'$ associated with $x+1$.

We conclude by observing a few properties of the defined Parikh resynchronizer $R$.
As already explained, the relation $\move_\otype$ defines a bijection between
pairs of input positions, so $R$ is a $1$-bounded 
Parikh resynchronizer.
As concerns its target set, that is the set of pairs $(u,z)$ 
such that $(u,y,z)\in\move_\otype$ for some $z\in\dom(u)$ and some 
$\otype\in\G\times\{1,\dots,n_2\}\times\{1,\dots,n_1\}$,
it coincides by construction with the target set of $T_1$.
Finally, since the relation $\nxt_{\otype,\otype'}$ is defined
by conjoining a regular property with the properties defined by
the relations $\move_\otype$ and $\move_{\otype'}$, we have that
$\nxt_{\otype,\otype'}$ is regular if $\move_\otype$ and $\move_{\otype'}$
are regular.
\end{proof}


\fi

We now explain how to exploit the above characterization to decide 
bounded regular resynchronizer synthesis problem. We provide the following 
characterization, whose proof follows from the previous theorem:

\begin{restatable}{theorem}{twoWayEquivalenceUpBis}\label{th:two-way-equivalence-up-bis}
Let $T_1,T_2$ be two unambiguous two-way transducers such that $T_1\subseteq T_2$,
and let $\hat R$ be the bounded 
Parikh resynchronizer obtained 
from Theorem \ref{th:two-way-equivalence-up}. 
The following conditions are equivalent:
\begin{enumerate}
	\item $\hat R$ is a regular resynchronizer,
	\item $T_1 \ocont R(T_2)$ for some bounded regular resynchronizer $R$,
	\item $T_1 \ocont R(T_2)$ for some $1$-bounded regular resynchronizer $R$,
	\item $T_1 \oeq R(T_2)$ for some $1$-bounded 
	      regular resynchronizer $R$ 
	      with the same target set as $T_1$.
\end{enumerate}
\end{restatable}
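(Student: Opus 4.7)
The plan is to prove the cycle of implications, with most of the work going into $2 \Rightarrow 1$. The implications $1 \Rightarrow 4 \Rightarrow 3 \Rightarrow 2$ are immediate: by Theorem~\ref{th:two-way-equivalence-up} the Parikh resynchronizer $\hat R$ is already $1$-bounded, satisfies $T_1 \oeq \hat R(T_2)$, and shares its target set with $T_1$, so regularity of $\hat R$ yields condition~4 at once; then $\oeq$ implies $\ocont$, and $1$-boundedness implies boundedness.

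For the main implication $2 \Rightarrow 1$, I would first note that, again thanks to Theorem~\ref{th:two-way-equivalence-up}, it suffices to show that each $\move_\otype$-relation of $\hat R$ is regular, since the $\nxt$-relations then inherit regularity by construction. Recall that $\move_\otype$, for $\otype = (a,i,j)$, is the partial bijection that sends the $T_2$-origin $y$ of the output position $x$ (the $i$-th productive visit of $y$ by $T_2$) to the $T_1$-origin $z$ of the same $x$ (the $j$-th productive visit of $z$ by $T_1$); by the unambiguity of $T_1,T_2$ and the inclusion $T_1 \subseteq T_2$, this partial bijection is uniquely determined by $u$ and $(i,j)$ on its domain.

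Given a bounded regular resynchronizer $R_0$ witnessing condition~2, the idea is to lift the regularity of $R_0$'s $\move$-relations to $\hat R$'s $\move_\otype$. Concretely I would augment $\hat R$'s input parameter alphabet with (a) that of $R_0$, and (b) a bounded local snapshot at every input position $y$ listing, over a finite alphabet, the sequence of pairs (output label, $R_0$-output parameter) produced at the successive productive visits of $y$ along $T_2$'s unique accepting run. Since $T_2$ is unambiguous, the snapshot has length bounded by the number of states of $T_2$. With this enrichment, MSO can, given $y$ and $i$, look up locally the corresponding label $a$ and $R_0$-output parameter $p_0$, thereby replacing the Parikh condition $A_{2,i}(\enc{u,y}) = A_{1,j}(\enc{u,z})$ by the regular condition that $(u,y,z)$ belong to the $\move_{(a,p_0)}$-relation of $R_0$. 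Correctness of the snapshot annotation is enforced by a regular conjunction involving $\ipar$, $\opar$, and the $\nxt$-relations of $R_0$ together with crossing-sequence guesses for $T_2$.

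The main obstacle is showing that the new regular relation obtained in this way coincides exactly with the Parikh-defined $\move_\otype$ of $\hat R$. One inclusion relies on uniqueness: every triple produced by the regular relation redirects the unique $T_2$-origin at $x$ to the unique $T_1$-origin at $x$, so it must be in $\hat R$'s $\move_\otype$. The converse inclusion uses condition~2 to guarantee the existence of at least one valid $R_0$-output annotation $\beta$, for which $R_0$'s $\move_{(a,p_0)}$ returns precisely the unique $T_1$-origin; thus every triple of $\hat R$'s $\move_\otype$ is also captured by the regular relation. Once $\move_\otype$ is shown regular, Theorem~\ref{th:two-way-equivalence-up} gives regularity of the $\nxt$-relations, and we conclude that $\hat R$ is a regular resynchronizer.
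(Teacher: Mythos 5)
Your cycle $1\Rightarrow 4\Rightarrow 3\Rightarrow 2$ is fine (the paper runs the cycle in the opposite direction, but those implications are equally immediate). The gap is in $2\Rightarrow 1$, specifically in the claim that your regular relation ``$(u,y,z)$ belongs to $\move_{(a,p_0)}$ of $R_0$, where $p_0$ is looked up from the snapshot at the $i$-th productive visit of $y$'' coincides with $\hat R$'s Parikh-defined $\move_{(a,i,j)}$. The inclusion from the regular relation into $\hat R$'s relation is unjustified: the hypothesis $T_1\ocont R_0(T_2)$ only says that for each origin graph of $T_1$ there \emph{exists} a choice of parameters and of triples in $R_0$'s move relations redirecting each $T_2$-origin to the corresponding $T_1$-origin. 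It does not make $\move_{(a,p_0)}$ a partial function: $k$-boundedness limits the number of sources per target, not the number of targets per source, so $\move_{(a,p_0)}$ may contain $(u,y,z')$ for many $z'$ besides the $T_1$-origin of $x$, and may also relate other sources to the same target. Hence your relation is in general a strict superset of $\hat R$'s $\move_\otype$, and the sentence ``every triple produced by the regular relation redirects the unique $T_2$-origin at $x$ to the unique $T_1$-origin at $x$'' does not follow. (A secondary mismatch: your relation does not depend on the index $j$, whereas $\hat R$'s $\move_{(a,i,j)}$ excludes triples where $x$ is not the $j$-th productive visit of $z$ by $T_1$.)

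This is precisely the difficulty that the paper's two intermediate implications are designed to remove before any comparison with $\hat R$ is attempted. In $2\Rightarrow 3$ the given $k$-bounded resynchronizer is turned into a $1$-bounded one by adding an output parameter $i\in\{1,\dots,k\}$ and restricting $\move'_{(\otype,i)}$ to triples with exactly $i$ smaller sources mapping to the same target; in $3\Rightarrow 4$ input parameters over $\bbB^\G$ are used to select a \emph{unique} target origin per source and to restrict targets to the target set of $T_1$, upgrading $\ocont$ to $\oeq$. Only once the witness is a partial bijection with the same target set as $T_1$ and origin-equivalent to $T_1$ does its move relation provably coincide with $\hat R$'s, which is what transfers regularity. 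To repair your direct $2\Rightarrow 1$ argument you would have to build these normalizations into your construction, at which point you are essentially reproducing the paper's chain.
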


\ifARXIV
\begin{proof}
We prove the following implications in the order: 
1.~$\rightarrow$ 2.~$\rightarrow$ 3.~$\rightarrow$ 4.~$\rightarrow$ 1.

\paragraph*{From 1.~to~2.}
This is trivial since $\hat R$ is bounded 
and satisfies $T_1 \oeq \hat R(T_2)$, and hence $T_1\ocont \hat R(T_2)$.

\paragraph*{From 2.~to~3.}
Let $R$ be a $k$-bounded regular resynchronizer.
The goal is to construct an equivalent $1$-bounded regular resynchronizer $R'$
(note that this part of the proof does not depend on $T_1$ and $T_2$).
For this, we introduce a parameter $i_x\in\{1,\dots,k\}$ 
associated with each output position $x$, and require that 
for all output positions $x,x'$ having the same label $\otype$, 
and for all input positions $y,z$ 
such that $(u,y,z)\in \move_\otype$, if $i_x=i_{x'}$, then $x=x'$.
The existence of such a mapping $x\mapsto i_x$ follows easily 
from the assumption that $R$ is $k$-bounded. 
The relation $\move'_{(\otype,i)}$ of the new resynchronizer $R'$
redirects origins of output positions based on their annotations 
$(\otype,i)\in\G\times\{1,\dots,k\}$, as follows:
\[
\move'_{(\otype,i)} = 
\big\{ (w,y,z) ~\mid~
(w,y,z)\in \move_\otype, ~ 
\exists^{!i} y'~~ y'<y \wedge (w,y',z)\in \move_\otype \big\}
\]
where $\exists^{!i} y'$ is an abbreviation for ``there exist exactly $i$ positions $y'$ such that\dots''.
As for the relation $\nxt'_{(\otype,i),(\otype',j)}$, this coincides with
$\nxt_{\otype,\otype'}$, so it does not take into account the new annotations.
Thus, the defined resynchronizer $R'$ is $1$-bounded, regular, and defines the 
same resynchronization as $R$.

\paragraph*{From 3.~to~4.}
Suppose that $R$ is a $1$-bounded regular resynchronizer with input
alphabet $\S$ and output alphabet $\G$, such that $T_1 \ocont R(T_2)$. 
The goal is to construct a $1$-bounded regular resynchronizer $R'$
with the same target set as $T_1$ and such that $T_1 \oeq R'(T_2)$.
For the sake of simplicity, we assume 
that $R$ has no input parameters, and similarly
$T_1$ has no common guess (the more general cases can be dealt with
by annotating the considered inputs with the possible parameters and
the common guess).
The idea for defining the desired resynchronizer $R'$ is as follows. 
We first restrict each relation $\move_\otype$ so as to make it a 
partial bijection, that is, for every input $u$, and every source origin $y\in\dom(u)$, 
there is an annotation $w$ of the input and at most one target origin $z$ 
that corresponds to $y$ in $u\otimes w$ (and conversely, since $R$ is $1$-bounded, 
for every target origin $z$ there is a unique source origin $y$ that 
corresponds to $z$). This step requires the use of appropriate input
parameters that determine a unique target origin $z$ from any given 
source origin $y$. Then, we restrict further the relation $\move_{\otype}$
so that every target origin $z$ is witnessed by $T_1$.
Formally, we introduce input parameters ranging over $\bbB^\G$
and work with annotated inputs of the form $u\otimes w$,
with $u\in\S^*$ and $w\in(\bbB^\G)^*$.
Given $u\in\S^*$, we define $O_u$ as the set of all positions $z=\Orig(\synch)(x)$
where $\synch$ is an origin graph of $T_1$, $x\in\dom(\Out(\synch))$,
and $\In(\synch)=u$.
The new relation $\move'_{\otype}$ that redirects source origins 
to target origins is defined as the following restriction of 
$\move_{\otype}$:
\[
\move'_{\otype} = 
\big\{ (u\otimes w,y,z) ~\mid~
(u,y,z)\in \move_\otype, ~ w(z)(\otype)=1, ~ z\in O_u ~
\big\}.
\]
Clearly, the above relation is regular and contained in $\move_{\otype}$. 
However, it is still possible that $\move'_{\otype}$ associates multiple target
origins with the same source origin.

To get a partial bijection from $\move'_{\otype}$ we need
to constrain the possible annotated input $u\otimes w$. 
We do so by requiring that, for every output letter $\otype\in\G$ 
and every position $y$ in $u\otimes w$, 
if there is $z$ satisfying $(u,y,z)\in \move_\otype$,
then there is {\sl exactly one} $z'$ satisfying $(u,y,z')\in \move_\otype$
and $w(\otype)(z)=1$.
Note that the latter property is again regular, and thus could be 
conjoined with the original relation $\ipar$ to form the new
relation $\ipar'$.
Accordingly, the relation $\nxt'_{\otype,\otype'}$ of the 
desired resynchronizer $R'$ defines the same language as 
$\nxt_{\otype,\otype'}$, but expanded with arbitrary
input annotations over $\bbB^\G$.

It is now easy to see that the the resulting resynchronizer $R'$ is $1$-bounded,
and in fact, on each input, defines a partial bijection between source and target
origins in such a way that the target set coincides with that of $T_1$.
By pairing this with the containments $R'(T_2)\ocont R(T_2)$ and $T_1 \ocont R(T_2)$, 
we obtain $T_1 \oeq R'(T_2)$.

\paragraph*{From 4.~to~1.}
Knowing that $\hat R(T_2) \oeq T_1 \oeq R(T_2)$ for two $1$-bounded resynchronizers
$R,\hat R$ with the same target sets as $T_1$ implies that the relations 
$\move_\otype$ and $\move'_\otype$, from $R$ and $\hat R$ respectively, coincide. 
Moreover, since the relation $\move_\otype$ of $R$ is assumed regular, this means 
that $\move'_\otype$ is regular too. Finally, we recall that  $\hat R$
is such that $\nxt'_{\otype,\otype'}$ is regular whenever $\move_\otype$ and $\move_{\otype'}$
are. We can then conclude that the relations $\nxt'_{\otype,\otype'}$ from $\hat R$
are also regular, and hence $\hat R$ is a regular resynchronizer.
\end{proof}


\fi

\medskip
Theorems \ref{th:two-way-equivalence-up} and
\ref{th:two-way-equivalence-up-bis} together provide a
characterization of those pairs of unambiguous two-way transducers $T_1,T_2$ for
which there is a bounded regular resynchronizer $R$ such that
$T_1 \ocont R(T_2)$.  
The effectiveness of this characterization stems
from the decidability of regularity of languages recognized by
unambiguous Parikh automata~\cite{CFM13}.
This result requires unambiguity and uses Presburger arithmetics to
determine for each (simple) loop a threshold such that iterating the
loop more than the threshold always satisfies the Parikh
constraint. The language of the Parikh automaton is regular if and
only if every  (simple) loop has such a threshold.
We thus conclude:

\begin{corollary}\label{cor:two-way-equivalence-up}
Given two unambiguous two-way transducers $T_1,T_2,$ one can decide whether
there is a regular resynchronizer $R$ such that $T_1 \ocont R(T_2)$.
\end{corollary}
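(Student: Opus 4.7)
The plan is to reduce the question to testing regularity of each of the Parikh-recognizable components of the canonical Parikh resynchronizer $\hat R$ produced by Theorem~\ref{th:two-way-equivalence-up}. First, I would decide the classical containment $T_1\subseteq T_2$: since both transducers are unambiguous two-way, this reduces to an emptiness test for an unambiguous Parikh automaton via the crossing-sequence technique already used inside Theorem~\ref{th:two-way-equivalence-up}, so it is effective. If $T_1\not\subseteq T_2$, then by the easy direction (2.$\Rightarrow$1.) of Theorem~\ref{th:two-way-equivalence-up} no resynchronization $R$ at all, and in particular no bounded regular one, can satisfy $T_1\ocont R(T_2)$, so we reject.

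Assuming $T_1\subseteq T_2$, I would explicitly build the $1$-bounded Parikh resynchronizer $\hat R$ promised by Theorem~\ref{th:two-way-equivalence-up}, by running through the construction in its proof: guess crossing sequences for the unique successful runs of $T_1$ and $T_2$, assemble the unambiguous Parikh automata $A_{T,i}$ that decode pairs $(y,i)$ to output positions, and take pointwise differences to obtain each $\move_\otype$. The relations $\nxt_{\otype,\otype'}$ are then obtained by conjoining a regular condition with properties expressed through $\move_\otype$ and $\move_{\otype'}$. Crucially, the last clause of Theorem~\ref{th:two-way-equivalence-up} guarantees that every $\nxt_{\otype,\otype'}$ is regular as soon as $\move_\otype$ and $\move_{\otype'}$ are; hence deciding whether $\hat R$ is a regular resynchronizer reduces to deciding whether each individual $\move_\otype$ is regular.

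I would then invoke the theorem of~\cite{CFM13}, which asserts that regularity is decidable for languages recognized by unambiguous Parikh automata. Running this test on each component $\move_\otype$ of $\hat R$ yields a yes/no answer. By Theorem~\ref{th:two-way-equivalence-up-bis} (equivalence 1.$\Leftrightarrow$2.) this is precisely the condition characterizing the existence of a bounded regular resynchronizer $R$ with $T_1\ocont R(T_2)$: if the test succeeds then $\hat R$ itself is such a resynchronizer, and if it fails then no bounded regular $R$ can do the job. Chaining these steps gives the claimed decision procedure.

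The main obstacle is the third step: deciding regularity of unambiguous Parikh languages is nontrivial, as it relies on Presburger arithmetic to compute, for each simple loop, a threshold beyond which further iterations preserve satisfaction of the vector constraint, and it is this analysis that feeds the algorithm of~\cite{CFM13}. Everything else is mechanical, provided the proof of Theorem~\ref{th:two-way-equivalence-up} is carried out effectively — which it is, since it only manipulates crossing-sequence encodings and uses closure of unambiguous Parikh automata under pointwise sum and difference, together with ordinary regular languages.
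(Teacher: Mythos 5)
Your proposal is correct and follows essentially the same route as the paper: decide classical containment $T_1\subseteq T_2$ first, construct the canonical $1$-bounded Parikh resynchronizer $\hat R$ from Theorem~\ref{th:two-way-equivalence-up}, test regularity of each $\move_\otype$ via the decidability result of~\cite{CFM13} for unambiguous Parikh automata, and conclude with the equivalence 1.$\Leftrightarrow$2.\ of Theorem~\ref{th:two-way-equivalence-up-bis}. Your explicit preliminary containment check (which the paper leaves implicit in the hypothesis of Theorem~\ref{th:two-way-equivalence-up-bis}) is a welcome clarification rather than a deviation.
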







\section{Conclusions}\label{sec:conclusions}

We studied two notions of resynchronization for transducers with origin,
called rational resynchronizer and regular resynchronizer. Rational
resynchronizers are suited for transforming origin graphs of one-way
transducers, while regular resynchronizers can be applied also to 
origin graphs of two-way transducers.
We showed that the former are strictly included in the latter, even
when restricting the origin graphs to be one-way.
We then studied the following variant of containment problem for transducers:
given two transducers $T_1,T_2$, decide whether $T_1 \ocont R(T_2)$ for 
some (rational or regular) resynchronizer $R$.  That is, if all origin graphs
of $T_1$ can be seen as some origin graph of $T_2$ transformed according
to $R$, then compute such a resynchronizer $R$. 
This problem can be seen as a synthesis problem of resynchronizers.
It is shown that the synthesis problem is decidable when
$T_1,T_2$ are finite-valued one-way transducers and the
resynchronizer is constrained to be rational, as well as 
when $T_1,T_2$ are unambiguous two-way transducers and
the resynchronizer is allowed to be regular (and bounded). 
In the one-way setting, the problem turns out to be undecidable 
already for unrestricted (non-functional) transducers and
rational resynchronizers. 
In the two-way setting, the decidability status remains open
already when the transducers are not unambiguous (be them functional or not). 
Concerning this last point, however, we recall that the synthesis 
problem becomes undecidable as soon as we consider regular resynchronizers
that are unbounded, as in this case the problem is at least as hard as
classical containment.



\ifLIPICS

\fi
\ifARXIV

\fi

\ifLIPICSappendix
\ifLIPICS
\newpage
\appendix

\section{Proofs of Section \ref{sec:expressiveness}}

\RationalVsRegular*

We fix a one-way transducer $R$ over $\S\uplus\G$ that defines a rational resynchronizer.
We assume without loss of generality that $R$ is \emph{letter-to-letter}, as well as \emph{trimmed}, 
namely, every state in $R$ occurs in some successful run.
Note that $R$ maps synchronized words to synchronized words.
With a slight abuse of terminology, we shall use the terms `source' (resp.~`target') 
to refer to a synchronized word that is an input (resp.~an output) of $R$. 
When depicting examples, we will often adopt the convention that source synchronized
words are shown in blue, while target synchronized words are shown in red.
On the other hand, we shall use the terms `input' and `output' to refer to 
the projections of a synchronized word over $\S$ and $\G$, respectively
(note that, in this case, it does not matter whether the synchronized word
is the source or the target, since these have the same projections over 
$\S$ and $\G$).
The goal is to construct a $1$-bounded, regular resynchronizer $R'$, with parameters,
that defines the same resynchronization as $R$. 

We begin by introducing the key concept of \emph{lag}, which represents the difference
between the number of input symbols consumed and number of input symbols produced along
a certain run (not necessarily successful) of $R$. Formally, given a run of $R$ of the 
form $\r = q_0 \act{c_1 \:|\: d_1} q_1 \act{c_2 \:|\: d_2} \dots \act{c_n \:|\: d_n} q_n$,
we define its lag $\il(\r)$ as $|\projS(c_1\ldots
c_n)|-|\projS(d_1\ldots d_n)|$, 
where $\projS$ denotes the operation of projection onto the alphabet $\S$.
Note that, because $R$ is letter-to-letter, one could have equally defined 
$\il(\r)$ by counting the difference between produced output symbols and consumed output symbols.
Further note that the lag of a successful run is always $0$, since $R$ preserves the input projection.
Notice that the lag of a run is a notion distinct of the \emph{delay} of a 
rational resynchronizer presented in \cite{FiliotJLW16}
which is the maximum distance between the target origin of an output position and its source origin.
The following lemma shows that the lag is in fact a property of the initial 
and final states of a run. 
				
\begin{lemma}\label{lem:rational-lag}
For every two runs $\r_1$ and $\r_2$ of $R$ that begin with the same state and end with the same state,
$\il(\r_1)=\il(\r_2)$. 
\end{lemma}
	
\begin{proof}
Since $R$ is trimmed, both runs $\r_1$ and $\r_2$ can be completed to some 
successful runs of the form $\r' \r_1 \r''$ and $\r' \r_2 \r''$.
From $\il(\r' \r_1 \r'') = 0 = \il(\r' \r_2 \r'')$, it immediately follows 
that $\il(\r_1) = 0-(\il(\r') + \il(\r'')) = \il(\r_2)$.
\end{proof}
		
In view of the above lemma, we can associate a lag $\il(q)$ with each state $q$ of $R$
as follows: we choose an arbitrary run $\r$ that starts with the initial state of $R$
and ends with $q$, and let $\il(q)=\il(\r)$. This is well-defined since 
$\il(q)$ does not depend on the particular choice of $\r$.
For instance, if we consider the letter-to-letter resynchronizer $R$ of Example \ref{ex:rational-resync},
the only state with non-zero lag is the bottom one, which has lag $1$.
Note that, because each transition of $R$ can only increase or decrease the lag
by $1$, all lags range over the finite set $\{-|Q|,\dots,+|Q|\}$, where
$Q$ is the state space of $R$.

Next, we consider a successful run of $R$, say 
$\r= q_0 \act{c_1 \:|\: d_1} q_1 \act{c_2 \:|\: d_2} \dots \act{c_n \:|\: d_n} q_n$,
and define relations $\om_\r$ and $\im_\r$ between positions of $\r$.
These relations are used later to define a bijection between source and target origins.
The relation $\om_\r$ consists of all pairs $(i,j)$ of positions of $\r$ such that
$c_i$ and $d_j$ are output letters and 
$c_1 c_2\dots c_i \eqG d_1 d_2 \dots d_j$
(the latter is a shorthand for $\projG(c_1 c_2\dots c_i)  = \projG(d_1 d_2 \dots d_j)$). 
Note that $\om_\r$ is in fact a partial bijection.
In a similar way, we define $\im_\r$ as the partial bijection that contains all pairs $(i,j)$ 
of positions of $\r$ such that $c_i$ and $d_j$ are input letters 
and $c_1 c_2 \dots c_i \eqS d_1 d_2 \dots d_j$. 
	
\begin{example}\label{ex:match}
\ifLIPICS
Consider the pair of source and target synchronized 
words over $\S\uplus\G$ shown
\fi
\ifARXIV
We consider again the pair of source and target synchronized 
words that we used beofre to give an intuitive account of 
the proof of Theorem \ref{thm:rational-vs-regular}.
These synchronized words are  
\fi
\mywraptextjustified{7cm}{%
\ifLIPICS
to the right, where $\S=\{a\}$ and $\G=\{b\}$, 
which could be realized by a successful run $\r$ of $R$. 
For the moment, we overlook the blue and red arrows.
Because $R$ is letter-to-letter, any position in any of the
two words corresponds precisely to a position in the run
$\r$, so we can represent the
\fi
\ifARXIV
depicted
to the right, and assumed to be realized by a 
successful run $\r$ of $R$.
For the moment, we overlook the blue and red arrows.
Because $R$ is letter-to-letter, any position in any of the
two words corresponds precisely to a position in the run
$\r$, so we can represent the
relations $\om_\r$ 
\fi
}%
\mywrapfig{7cm}{%
	\vspace{-3.5mm}\hspace{4mm}%
\begin{tikzpicture}[xscale=1.25,yscale=0.8,baseline]
	
	\draw [nicecyan] (0,1.25) node [hidden] (1) {\small $\texttt{a}$};
	\draw [nicecyan] (0.3,1.25) node [hidden] (2) {\small $\texttt{a}$};
	\draw [nicecyan] (0.6,1.25) node [hidden] (3) {\small $\texttt{b}$};
	\draw [nicecyan] (0.9,1.25) node [hidden] (4) {\small $\texttt{a}$};
	\draw [nicecyan] (1.2,1.25) node [hidden] (5) {\small $\texttt{a}$};
	\draw [nicecyan] (1.5,1.25) node [hidden] (6) {\small $\texttt{a}$};
	\draw [nicecyan] (1.8,1.25) node [hidden] (7) {\small $\texttt{b}$};
	\draw [nicecyan] (2.1,1.25) node [hidden] (8) {\small $\texttt{b}$};
	\draw [nicecyan] (2.4,1.25) node [hidden] (9) {\small $\texttt{a}$};
	\draw [nicecyan] (2.7,1.25) node [hidden] (10) {\small $\texttt{a}$};
	\draw [nicecyan] (3.0,1.25) node [hidden] (11) {\small $\texttt{a}$};
	\draw [nicecyan] (3.3,1.25) node [hidden] (12) {\small $\texttt{b}$};
	\draw [nicecyan] (3.6,1.25) node [hidden] (13) {\small $\texttt{a}$};
	\draw [nicecyan] (3.9,1.25) node [hidden] (14) {\small $\texttt{a}$};
	\draw [nicecyan] (4.2,1.25) node [hidden] (15) {\small $\texttt{b}$};
	\draw [nicecyan] (4.5,1.25) node [hidden] (16) {\small $\texttt{b}$};
	\draw [nicecyan] (4.8,1.25) node [hidden] (17) {\small $\texttt{b}$};
	\draw [nicecyan] (5.1,1.25) node [hidden] (18) {\small $\texttt{a}$};
	
	\draw [nicered] (0,0) node [hidden] (1') {\small $\texttt{a}$};
	\draw [nicered] (0.3,0) node [hidden] (2') {\small $\texttt{b}$};
	\draw [nicered] (0.6,0) node [hidden] (3') {\small $\texttt{a}$};
	\draw [nicered] (0.9,0) node [hidden] (4') {\small $\texttt{a}$};
	\draw [nicered] (1.2,0) node [hidden] (5') {\small $\texttt{a}$};
	\draw [nicered] (1.5,0) node [hidden] (6') {\small $\texttt{b}$};
	\draw [nicered] (1.8,0) node [hidden] (7') {\small $\texttt{a}$};
	\draw [nicered] (2.1,0) node [hidden] (8') {\small $\texttt{a}$};
	\draw [nicered] (2.4,0) node [hidden] (9') {\small $\texttt{a}$};
	\draw [nicered] (2.7,0) node [hidden] (10') {\small $\texttt{b}$};
	\draw [nicered] (3.0,0) node [hidden] (11') {\small $\texttt{b}$};
	\draw [nicered] (3.3,0) node [hidden] (12') {\small $\texttt{a}$};
	\draw [nicered] (3.6,0) node [hidden] (13') {\small $\texttt{a}$};
	\draw [nicered] (3.9,0) node [hidden] (14') {\small $\texttt{b}$};
	\draw [nicered] (4.2,0) node [hidden] (15') {\small $\texttt{a}$};
	\draw [nicered] (4.5,0) node [hidden] (16') {\small $\texttt{b}$};
	\draw [nicered] (4.8,0) node [hidden] (17') {\small $\texttt{b}$};
	\draw [nicered] (5.1,0) node [hidden] (18') {\small $\texttt{a}$};

	\draw [dashed,link] (1') to (1);
	\draw [dashed,link] (5') to (5);
	\draw [dashed,link] (9') to (10);
	\draw [dashed,link] (13') to (13);
	\draw [dashed,link] (15') to (14);

	\draw [link] (2') to (3);
	\draw [link] (6') to (7);
	\draw [link] (10') to (8);
	\draw [link] (11') to (12);
	\draw [link] (14') to (15);
	\draw [link] (16') to (16);
	\draw [link] (17') to (17);
	
	\draw [arrow, nicecyan] (3.north) to [out=90,in=60,looseness=2] (2.north);
	\draw [arrow, nicecyan] (7.north) to [out=90,in=60,looseness=2] ([xshift=1pt]6.north);
	\draw [arrow, nicecyan] (8.north) to [out=90,in=70,looseness=2] ([xshift=-1.5pt]6.north);
	\draw [arrow, nicecyan] (12.north) to [out=90,in=60,looseness=2] (11.north);
	\draw [arrow, nicecyan] (15.north) to [out=90,in=60,looseness=2] ([xshift=1.25pt]14.north);
	\draw [arrow, nicecyan] (16.north) to [out=90,in=70,looseness=2] ([xshift=-1pt]14.north);
	\draw [arrow, nicecyan] (17.north) to [out=90,in=80,looseness=2] ([xshift=-3.25pt]14.north);

	\draw [arrow, nicered] (2'.south) to [out=-90,in=-60,looseness=2] (1'.south);
	\draw [arrow, nicered] (6'.south) to [out=-90,in=-60,looseness=2] (5'.south);
	\draw [arrow, nicered] (10'.south) to [out=-90,in=-60,looseness=2] ([xshift=1pt]9'.south);
	\draw [arrow, nicered] (11'.south) to [out=-90,in=-70,looseness=2] ([xshift=-1.5pt]9'.south);
	\draw [arrow, nicered] (14'.south) to [out=-90,in=-60,looseness=2] (13'.south);
	\draw [arrow, nicered] (16'.south) to [out=-90,in=-60,looseness=2] ([xshift=1pt]15'.south);
	\draw [arrow, nicered] (17'.south) to [out=-90,in=-80,looseness=2] ([xshift=-1.5pt]15'.south);
\end{tikzpicture}%
\hspace{-4mm}%
%
	\vspace{-3mm}
}
\ifLIPICS
relations $\om_\r$ 
\fi
and $\im_\r$ by means of edges between source and target positions. 
In the figure, the solid edges represent pairs of $\om_\r$, while
the dashed edges represent some pairs of $\im_\r$ 
(precisely, those pairs $(i,j)$ such that
the transition at position $j$ produces an input letter, while the next
transition produces an output letter).
\end{example}

\paragraph*{Mapping the source to target origins.} 
We now explain how the relations $\im_\r$ and $\om_\r$ 
can be used to define a mapping from source to target origins. 
We do so by first using the figure of Example \ref{ex:match}.
Consider any output letter at position $i$ in the source synchronized 
word $w$ (e.g.~the first blue letter $b$). 
Let $j$ be the last $\S$-labelled position before $i$, as indicated
by the blue arrow. This position $j$ determines the source origin 
$y=|\projS(w[1,j])|$ of the output letter.
To find the corresponding target origin, we observe that the position 
$i$ is mapped via the relation $\om_\r$ (solid line) to some position 
$k$ in the target synchronized word.
Let $h$ be the last $\S$-labelled position before $k$ (red arrow),
and map $h$ back to a position $\ell$ in the source via the relation
$\im_\r$ (dashed line).
The position $\ell$ determines precisely the target origin
$z=|\projS(w[1,\ell])|$ of the considered output letter.
The above steps describe a correspondence between two positions 
$j$ and $\ell$ in $\r$, with labels over $\S$, that is precisely defined by
\begin{align*}
  \exists i,k,h ~~  
  \begin{cases}
  \r[j,i] \text{ consumes a word in } \S\G^+ \\
  (i,k) \in \om_\r \\
  \r[h,k] \text{ produces a word in } \S\G^+ \\
  (h,\ell)\in \im_\r.
  \end{cases}
  \tag{$\star$}
\end{align*}
In the above $\rho[j,i]$ represents the part of $\rho$ between positions $j, i$ (both $j,i$ included).  

We denote by $\match_\r$ the relation of all pairs $(j,\ell)$
that satisfy Equation ($\star$).
Note that $\match$ determines an analogous correspondence 
between source and target origins of the input projection.
However, $\match$ has two issues: 
it is not yet a partial bijection (since different output positions 
may have the same source origin), and it needs to be implemented 
by means of a regular relation $\move_\otype$ that only considers 
positions of the input, plus the label $\otype$ of a single position 
in the output. 
Below, we explain how to overcome those issues.

\paragraph*{The case of bounded output blocks.}
Hereafter, we call \emph{output block} any maximal factor of a synchronized 
word that is labelled over $\G$. Intuitively, this corresponds to a maximal 
factor of the output that originates at the same input position.
We first consider, as a simpler case, a rational resynchronizer $R$ that 
reads {\sl source} synchronized words where the lengths of the output blocks 
are uniformly bounded by some constant, say $\bound$ (a similar property holds 
for the blocks of the target synchronized words, using lag-based arguments).
In this case we can encode any successful run $\r$ of $R$ entirely on the input,
by annotating every $\S$-labelled position $y$ with a factor $\r_y$
of $\r$ that reads the input symbol at position $y$, followed by the sequence 
of output symbols up to the next input symbol. Note that every factor $\r_y$ 
has length at most $\bound+1$.
The correctness of this input annotation can be checked by the regular
language $\ipar$. Given a factor $\rho_y \in \Sigma \Gamma^+$, 
$\rho_y[1] \in \Sigma$ is the first position of the factor $\rho_y$. 
Likewise, $\rho_y[i,j]$ denotes the subfactor of $\rho_y$ 
consisting of positions $i, i+1, \dots, j$.

In addition, we also annotate the output word with indices 
from $\{1,\dots,\bound\}$, called \emph{offsets}, in such a way that 
an output position $x$ is annotated with an offset $o$ if and only if
it is the $o$-th output position with the same source origin.
Note that the correctness of the annotation cannot be checked by 
a regular language such as $\opar$ that refers only to the output.
The check will be done instead by a combined use of the relations
$\move_\otype$ and $\nxt_{\otype,\otype'}$.

We first check that, for every pair of consecutive output positions $x$ 
and $x+1$ annotated with the offsets $o$ and $o'$, respectively, 
it holds that $o'=o+1$ or $o'=1$, depending on whether the 
\emph{source} origins of $x$ and $x+1$ coincide or not. For this
we let $(u,z,z') \in \nxt_{\otype,\otype'}$, with $\otype=(a,o)$
and $\otype'=(a',o')$, if
\begin{enumerate}
  \item either $o'=o+1$ and there is $y=y'$ such that
        $(u,y,z)\in\move_\otype$ and $(u,y',z')\in\move_{\otype'}$,
  \item or $o'=1$ and there are $y<y'$ such that
        $(u,y,z)\in\move_\otype$ and $(u,y',z')\in\move_{\otype'}$.
\end{enumerate}
Recall that the relation $\nxt_{\otype,\otype'}$ must be
defined in terms of the \emph{target} origins of $x$ and $x+1$. 
So it needs to rely on the relation $\move_\otype$
in order to determine the source origins 
from the target origins. We assume that for every 
output type $\otype$ 
the relation $\move_\otype$, 
which will be defined later, determines a~\emph{partial bijection} 
between input positions (we will see that this is indeed the case). 
Based on these assumptions, the above definition of $\nxt_{\otype,\otype'}$ 
guarantees that the offsets annotating consecutive positions 
in the output are either incremented or reset, depending on whether
they have the same origin or not.

It remains to check that maximal offset occurring in an output block 
with origin $y$ coincides with number of output symbols produced by 
the corresponding factor $\rho_y$ of the run.
Thus, we modify slightly the definition of $\nxt_{\otype,\otype'}$
in case 2., as follows:
\begin{enumerate}
  \item[2'.] or $o'=1$ and there are $y<y'$ such that
        $(u,y,z)\in\move_\otype$ and $(u,y',z')\in\move_{\otype'}$,
        \emph{and $o=|\rho_y|-1$}.
\end{enumerate}
Note that the factor $\rho_y$ can be derived by inspecting the
annotation of the input position $y$. 
The modification suffices to guarantee that the output annotation is 
correct for all output blocks but the last one. 
The annotation for the last output block can be checked by
marking the last output position with a distinguished symbol
and by requiring that if $\otype$ witnesses the marked symbol
and the offset $o$, then $\move_\otype$ can only contain a triple 
of the form $(u,y,z)$, with $o=|\rho_y|-1$.
We omit the tedious definitions in this case.


\medskip
Now, having the input correctly annotated with the factors $\r_y$ 
of $\r$ and the output correctly annotated with the offsets, 
we can encode any position $i$ of $\r$ by a pair 
$(y,o)$ that consists of a position $y$ of the input and an offset 
$o\in\{0,1,\dots,\bound\}$. 
The encoding is defined in such a way that $i = \sum_{y'<y}|\r_{y'}| + o + 1$
(in particular, $o=0$ when the transition at position $i$ consumes an
input symbol, otherwise $o\ge1$).
We use this encoding to translate the relations $\om_\r$, $\im_\r$, 
and $\match_\r$, to equivalent finite unions of partial bijections between 
input positions.
We begin by explaining the translation of $\om_\r$.

\paragraph*{Translation of $\om_\r$.} 
Consider any pair $(i,j)\in\om_\r$. Since the transition at position
$i$ of $\r$ consumes an output symbol, it is encoded by a pair of the form 
$(y,o)$, with $o\ge1$.
On the other hand, the transition at position $j$ may consume 
either an input symbol or an output symbol (but does produce an output symbol).
In the former case, $j$ is encoded by a pair $(y',0)$;
in the latter case, it is encoded by a pair $(y',o')$, with $o'\ge1$.
As an example, in the figure below, $(7,4) \in \om_\r$. Position 7 of
the run is encoded as $(5,1)$ on the input.   
The transition at position 4 consumes an input symbol $a$, and 
produces the output symbol $b$, and is encoded as $(3,0)$.  


\begin{tikzpicture}[xscale=1.25,baseline]
	\draw [nicecyan] (0,1.25) node [hidden] (1) {\small $\texttt{a}$};
	\draw [black] (0,1.5) node [hidden] (1'') {\tiny $\texttt{(1,0)}$};
	\draw [nicecyan] (0.5,1.25) node [hidden] (2) {\small $\texttt{a}$};
	\draw [black] (0.5,1.5) node [hidden] (2'') {\tiny $\texttt{(2,0)}$};
	\draw [nicecyan] (1,1.25) node [hidden] (3) {\small $\texttt{b}$};
	\draw [black] (1,1.5) node [hidden] (3'') {\tiny $\texttt{(2,1)}$};

	\draw [nicecyan] (1.5,1.25) node [hidden] (4) {\small $\texttt{a}$};
	
	\draw [black] (1.5,1.5) node [hidden] (4'') {\tiny $\texttt{(3,0)}$};

	\draw [nicecyan] (2,1.25) node [hidden] (5) {\small $\texttt{a}$};
	\draw [black] (2,1.5) node [hidden] (5'') {\tiny $\texttt{(4,0)}$};

	\draw [nicecyan] (2.5,1.25) node [hidden] (6) {\small $\texttt{a}$};
	\draw [black] (2.5,1.5) node [hidden] (6'') {\tiny $\texttt{(5,0)}$};

	\draw [nicecyan] (3,1.25) node [hidden] (7) {\small $\texttt{b}$};
	\draw [black] (3,1.5) node [hidden] (7'') {\tiny $\texttt{(5,1)}$};
	
	\draw [nicecyan] (3.5,1.25) node [hidden] (8) {\small $\texttt{b}$};
	\draw [black] (3.5,1.5) node [hidden] (8'') {\tiny $\texttt{(5,2)}$};

	\draw [nicecyan] (4,1.25) node [hidden] (9) {\small $\texttt{b}$};
	\draw [black] (4,1.5) node [hidden] (9'') {\tiny $\texttt{(5,3)}$};

	\draw [nicered] (0,0) node [hidden] (1') {\small $\texttt{a}$};
	\draw [nicered] (0.5,0) node [hidden] (2') {\small $\texttt{b}$};
	
	\draw [nicered] (1,0) node [hidden] (3') {\small $\texttt{a}$};
	\draw [nicered] (1.5,0) node [hidden] (4') {\small $\texttt{b}$};

	\draw [nicered] (2,0) node [hidden] (5') {\small $\texttt{a}$};
	\draw [nicered] (2.5,0) node [hidden] (6') {\small $\texttt{b}$};
	\draw [nicered] (3,0) node [hidden] (7') {\small $\texttt{b}$};
	\draw [nicered] (3.5,0) node [hidden] (8') {\small $\texttt{a}$};
	\draw [nicered] (4,0) node [hidden] (9') {\small $\texttt{a}$};
\node[draw=red,dotted,fit=(4')(5')(6')] {};
	\draw [link] (2') to (3);
	\draw [link] (4') to (7);
\end{tikzpicture}
\medskip

\noindent
In general, we observe that the lag induced just after the 
$o$-th transition of $\r_y$ must be equal to the number 
of output symbols produced between the $(o'+1)$-th 
transition of $\r_{y'}$ and the $o$-th transition of $\r_y$,
both included
(when the lag is negative one follows the transitions
 in reverse order, counting negatively). As an illustration in the figure,   
 the lag after the first transition of $\rho_5$ is 2, which is the number 
 of output symbols in the dotted box. The dotted box 
 consists of the symbols produced between the first transition 
 of $\r_{3'}$ and the first transition of $\r_5$, and has two output symbols.

\paragraph*{Translation of $\im_{\r}$.} 
The translation of the relation $\im_\r$ is similar.
The only difference is that now the pairs $(i,j)\in\im_\r$ 
are encoded by tuples of the form 
$\big((y,o),(y',o')\big)$, with $o=0$ since the transition at $i$ 
consumes an input symbol. The transition at position $j$ as before, 
can consume an input symbol or an output symbol. 
Consider the figure below, where $(2,3) \in \im_{\r}$. 
Position $i=2$ is encoded as 
$(2,0)$.  The transition at position 3 
consumes an output symbol $b$ (and produces the input symbol $a$). 
Position 3 is encoded as $(2,1)$. 
	
\begin{tikzpicture}[xscale=1.25,baseline]
	\draw [nicecyan] (6,1.25) node [hidden] (1) {\small $\texttt{a}$};
		\draw [nicecyan] (6.5,1.25) node [hidden] (2) {\small $\texttt{a}$};
	\draw [nicecyan] (7,1.25) node [hidden] (3) {\small $\texttt{b}$};
	\draw [nicecyan] (7.5,1.25) node [hidden] (4) {\small $\texttt{a}$};
	\draw [black] (7.5,1.5) node [hidden] (4'') {\tiny $\texttt{(3,0)}$};
	
	\draw [nicecyan] (8,1.25) node [hidden] (5) {\small $\texttt{a}$};
	\draw [black] (8,1.5) node [hidden] (5'') {\tiny $\texttt{(4,0)}$};
	
	\draw [nicecyan] (8.5,1.25) node [hidden] (6) {\small $\texttt{a}$};
	\draw [nicecyan] (9,1.25) node [hidden] (7) {\small $\texttt{b}$};
	
	\draw [nicecyan] (9.5,1.25) node [hidden] (8) {\small $\texttt{b}$};
	\draw [nicecyan] (10,1.25) node [hidden] (9) {\small $\texttt{b}$};
	\draw [nicered] (6,0) node [hidden] (1') {\small $\texttt{a}$};
	\draw [nicered] (6.5,0) node [hidden] (2') {\small $\texttt{b}$};
	\draw [nicered] (7,0) node [hidden] (3') {\small $\texttt{a}$};
	\draw [nicered] (7.5,0) node [hidden] (4') {\small $\texttt{b}$};
	\draw [nicered] (8,0) node [hidden] (5') {\small $\texttt{a}$};
	\draw [nicered] (8.5,0) node [hidden] (6') {\small $\texttt{b}$};
	\draw [nicered] (9,0) node [hidden] (7') {\small $\texttt{b}$};
	\draw [nicered] (9.5,0) node [hidden] (8') {\small $\texttt{a}$};
	\draw [nicered] (10,0) node [hidden] (9') {\small $\texttt{a}$};
\node[draw=red,dotted,fit=(4') (5')] {};
	\draw [dashed,link] (1') to (1);
	\draw [dashed,link] (3') to (2);
	\draw [dashed,link] (4) to (5');
\end{tikzpicture}
\medskip

\noindent
The only difference here is that 
one has to relate the lag with the 
number of {\sl input} letters produced between (both positions included) the first transition 
of $\r_y$ and the $o'$-th transition 
of $\r_{y'}$. 
Again, in the figure, the lag after the first transition of $\r_3$ is 
1, which is the number of input symbols in the 
dotted box. The dotted box contains the symbols 
produced between the first transition of 
$\r_3$ and the first transition of $\rho_{4'}$, and has one input 
symbol.

\paragraph*{Relations encoding $\om_\r$ and $\im_\r$.} 
So we can represent $\om_\r$ as a finite union of
relations $O_{o,o'}\subseteq(\S\times\S')^*\times\Nat\times\Nat$,
each describing a regular property of annotated inputs with two 
distinguished positions in it, in such a way that the positions
are bijectively related to one another.

Likewise, we can represent $\im_\r$ as a finite union of relations 
$I_{0,o'}$, each describing a regular property of annotated inputs 
with two distinguished positions encoded as $(y,0)$ and $(y', o')$ in it, 
which are bijectively related to one another.

\paragraph*{Translation of $\match_\r$.} 
We finally turn to the translation of the relation $\match_\r$,
which will eventually determine the relations $\move_\otype$ 
of the desired regular resynchronizer $R'$.
This is done by mimicking Equation ($\star$) via the encoding
of  positions in the run $\rho$ using pairs of input positions and offsets,
and more precisely, by replacing the variables $j,i,k,h,\ell$ 
of Equation (*) with the pairs $(y,0)$, $(y,o)$, $(y',o')$, $(y'',o'')$, $(z,0)$.

Formally, for every offset $o\in\{1,\dots,\bound\}$, we define 
the set $M_o$ of all triples $(u,y,z)$, where $u$ is an annotated 
input and $y,z$ are positions in it that satisfy the following 
property:

\begin{align*}
  \exists y',y''  
  \bigvee_{0\le o',o''\le\bound} ~~ 
  \begin{cases}
  \r_y[1,o+1] \text{ consumes a word in $\S\G^+$} \\
  (u,y,y') \in O_{o,o'} \\
  \r_{y''}[o''+1,|\r_{y''}|] ~ \r_{y''+1} \dots \r_{y'-1} ~ \r_{y'}[1,o'+1] 
    \text{ produces a word in $\S\G^+$} \\
  (u,z,y'') \in I_{0,o''}.
  \tag{$\star\star$}
  \end{cases}
\end{align*}

Note that the first condition holds trivially by definition of $\r_y$, 
while the third condition is easily implemented by accessing the
factors $\r_{y''},\dots, \r_{y'}$
of $\r$ that are encoded by the input parameters. 
For simplicitly, here we assumed that $(y'',o'')$ is lexicographically 
before $(y',o')$; to treat the symmetric case, one has to interpret 
the definition by considering the sequence of transitions in reverse.
The intended meaning of $(u,y,z)\in M_o$ is as follows. 
Suppose that the input is correctly annotated with the factors $\r_y$
of a successful run $\r$ of $R$, and that the output 
position $x$ of $\r$ is correctly annotated with an offset $o$. 
Assuming that $x$ is the $o$-th output position with source origin $y$, 
then $z$ is its target origin in $\r$.

Continuing with our running example, we determine the target origin 
for the point $b$ annotated (5,1), whose source origin is 
(5,0). We will find the target origin of this $b$ annotated (5,1). As seen in the computation of $\om_\r$, we know that 
$(u,5,3) \in O_{1,0}$.  The factor  $\rho_5=abbb$, and 
$\rho_5[1,2]=ab \in \Sigma \Gamma^+$, and as we have seen, 
$(u,5,3) \in O_{1,0}$. Now, consider the part of the source $u$ annotated with 
$(2,1)(3,0)$. This produces the output $ab \in \Sigma \Gamma^+$. That is, for $y''=2, o''=1$, and $y'=3, o'=0$, we have
$\rho_{y''}[o''+1, 2] \:  \rho_{y'}[1,o'+1]=\rho_2[2,2]\rho_3[1,1]=ba$ produces the output 
$ab \in \Sigma \Gamma^+$. 

Consider $(z,0)=(2,0)$. The lag after the $a$ at $i=2$ annotated $(2,0)$ is 1. Also, 
$(2,3) \in \im_\r$. The position 3 consumes an 
output and produces an input $a$. 
Indeed, the lag after the first transition of $\rho_2$ is 
1, which is the number of input symbols between the first transition 
of $\rho_2$ and the second transition ($(o'+1)$th transition) of  
$\rho_2$. That is, $(u,2,2) \in I_{0,1}$. Thus, starting with the $b$ annotated $(y,o)=(5,1)$ such that 
$\rho_5[1,2] \in \Sigma \Gamma^+$,  
we first obtain $(y',o')=(3,0)$ with $(u,5,3) \in O_{1,0}$. Further, 
 $\rho_2[2,2]\rho_3[1,1]$ produces a word in $\Sigma \Gamma^+$. 
 Finally, we have $(u,2,2) \in I_{0,1}$, obtaining $(u,5,2) \in M_1$.

\begin{tikzpicture}[xscale=1.25,baseline]
	\draw [nicecyan] (6,1.25) node [hidden] (1) {\small $\texttt{a}$};
	\draw [black] (6,1.5) node [hidden] (1') {\tiny $\texttt{(1,0)}$};
			\draw [nicecyan] (6.5,1.25) node [hidden] (2) {\small $\texttt{a}$};
	\draw [black] (6.5,1.5) node [hidden] (2') {\tiny $\texttt{(2,0)}$};
		\draw [nicecyan] (7,1.25) node [hidden] (3) {\small $\texttt{b}$};
	\draw [black] (7,1.5) node [hidden] (3') {\tiny $\texttt{(2,1)}$};
			\draw [nicecyan] (7.5,1.25) node [hidden] (4) {\small $\texttt{a}$};
	\draw [black] (7.5,1.5) node [hidden] (4') {\tiny $\texttt{(3,0)}$};
		\draw [nicecyan] (8,1.25) node [hidden] (5) {\small $\texttt{a}$};
	\draw [black] (8,1.5) node [hidden] (5') {\tiny $\texttt{(4,0)}$};
		\draw [nicecyan] (8.5,1.25) node [hidden] (6) {\small $\texttt{a}$};
	\draw [black] (8.5,1.5) node [hidden] (6'') {\tiny $\texttt{(5,0)}$};
	\draw [nicecyan] (9,1.25) node [hidden] (7) {\small $\texttt{b}$};
	\draw [black] (9,1.5) node [hidden] (7'') {\tiny $\texttt{(5,1)}$};
	\draw [nicecyan] (9.5,1.25) node [hidden] (8) {\small $\texttt{b}$};
	\draw [black] (9.5,1.5) node [hidden] (7'') {\tiny $\texttt{(5,2)}$};
	\draw [nicecyan] (10,1.25) node [hidden] (9) {\small $\texttt{b}$};
	\draw [black] (10,1.5) node [hidden] (7'') {\tiny $\texttt{(5,3)}$};

	\draw [nicered] (6,0) node [hidden] (1') {\small $\texttt{a}$};
	\draw [nicered] (6.5,0) node [hidden] (2') {\small $\texttt{b}$};
	\draw [nicered] (7,0) node [hidden] (3') {\small $\texttt{a}$};
	\draw [nicered] (7.5,0) node [hidden] (4') {\small $\texttt{b}$};
	\draw [nicered] (8,0) node [hidden] (5') {\small $\texttt{a}$};
	\draw [nicered] (8.5,0) node [hidden] (6') {\small $\texttt{b}$};
	\draw [nicered] (9,0) node [hidden] (7') {\small $\texttt{b}$};
	\draw [nicered] (9.5,0) node [hidden] (8') {\small $\texttt{a}$};
	\draw [nicered] (10,0) node [hidden] (9') {\small $\texttt{a}$};
	\draw [dashed,link] (3') to (2);
	
\end{tikzpicture}

\smallskip
\paragraph*{Definition of $\move_\otype$.} 
It is tempting to define $\move_\otype$ just as $M_o$, 
for every $\otype=(a,o)\in\G\times\{1,\dots,\bound\}$. 
However, we recall that the correctness of the output annotation is guaranteed
only once we are sure that every relation $\move_\otype$ defines a partial 
bijection between input positions $y$ and $z$ (hereafter we say for short that
the relation is bijective), which is not known a priori.
Bijectiveness must then be enforced syntactically, without relying on annotations: 
for this it suffices to define $\move_\otype$ as 
$\{ (u,y,z)\in M_o \:\mid\: 
    \forall (u,y',z')\in M_o ~ (y=y')\leftrightarrow (z=z') \}$,
and observe that either $M_o$ is bijective, and hence
$\move_\otype=M_o$, 
or it is not, and in this case $\move_\otype$ is a subrelation of $M_o$
that is still bijective.
Note that, in the case where $\move_\otype$ is a subrelation of $M_o$,
there will be no induced pair of synchronized words, since the origins
of some output elements could not be redirected. 
This is fine, and actually needed, in order to avoid generating with $R'$
spurious pairs of synchronized words that are not also generated by $R$.
On the other hand, observe that the relation $\move_\otype$ does generate, 
for appropriate choices of the output annotations, all the pairs of 
synchronized words that are generated by $R$.
We finally observe that the relations $\move_\otype$ and $\nxt_{\otype,\otype'}$
are regular. We obtain in this way,  a $1$-bounded, regular resynchronizer $R'$
equivalent to $R$.

\paragraph*{The general case.} 
We now aim at generalizing the previous ideas to capture a
rational resynchronizer $R$ with source output blocks of possibly
unbounded length. One additional difficulty is that we cannot 
anymore encode a successful run $\r$ of $R$ entirely on the input, 
as $\r$ may have arbitrarily long factors on outputs blocks.
Another difficulty is that we cannot uniquely identify the
positions in an output block using offsets ranging over a fixed 
finite set. We will see that a solution to both problems comes 
from covering most of the output by factors in which the positions 
behave similarly in terms of the source-to-target origin transformation. 
Intuitively, each of these factors can be thought of as 
a `pseudo-position', and accordingly the output blocks can be thought 
of as having boundedly many pseudo-positions.
This will make it possible to apply the same ideas as before. 
We now state the key lemma that 
identifies the aforesaid factors.
By a slight abuse of terminology, we call output blocks 
also the maximal $\G$-labelled factors of a synchronized word.

\begin{lemma}\label{lem:blocks}
Let $\r$ 
be a successful run of $R$, and let $w$ and $w'$ 
be the source and target synchronized words induced by $\r$.
\begin{itemize}
\item Every output block $v$ of $w$ can be factorized into $\cO(|Q|^2)$ 
      sub-blocks $v_1,\dots,v_n$ such that
	  if $|v_i|>1$ and $\r_i$ is the factor of $\r$ that corresponds to $v_i$,
	  then all states in $\r_i$ have the same lag, say $\ell_i$,
	  and the factor obtained by extending $\r_i$ to the left and 
	  to the right by exactly $|\ell_i|$ transitions forms a loop of $R$.
\item Moreover, for every factorization $v=v_1\dots v_n$ as above,
      each sub-block $v_i$ is also a factor of $w'$, and hence all 
      positions in $v_i$ have the same target origin.
\end{itemize}
\end{lemma}

\begin{proof}
%
\begin{figure}
\centering  
\begin{tikzpicture}

\draw (0,0) edge [|-|] (10,0);

\draw [dashed,rounded corners=5] (1,-0.2) rectangle (4,0.2);
\draw [dashed,rounded corners=5] (5,-0.2) rectangle (9,0.2);

\draw [draw=none,pattern=north east lines] (1.75,-0.2) rectangle (3.25,0.2);
\draw [draw=none,pattern=north east lines] (5.75,-0.2) rectangle (8.25,0.2);

\draw (7.5,2) node [above] {singleton sub-blocks};
\draw [arrow,dotted] (7,2) to [out=-10,in=110] (10,0.3);
\draw [arrow,dotted] (7,2) to [out=-20,in=100] (9.25,0.3);
\draw [arrow,dotted] (7,2) to [out=-30,in=90] (8.5,0.3);
\draw [arrow,dotted] (7,2) to [out=-144,in=80] (5.5,0.3);
\draw [arrow,dotted] (7,2) to [out=-150,in=70] (4.5,0.3);
\draw [arrow,dotted] (7,2) to [out=-155,in=60] (3.5,0.3);
\draw [arrow,dotted] (7,2) to [out=-160,in=60] (1.5,0.3);
\draw [arrow,dotted] (7,2) to [out=-165,in=55] (0.75,0.3);
\draw [arrow,dotted] (7,2) to [out=-170,in=50] (0,0.3);

\draw (1.25,2) node [above] {non-overlapping maximal loops};
\draw [arrow,thick] (1,2) to (1.25,0.3);
\draw [arrow,thick] (1.75,2) to (5.25,0.3);

\draw (4.75,-1) node [below] {loops shrinked by lag};
\draw [arrow,thick] (4.25,-1) to (2.5,-0.3);
\draw [arrow,thick] (5,-1) to (7,-0.3);
\end{tikzpicture}
\caption{Factorization of an output block.}
\label{fig:loop}
\end{figure}


%
%
We prove the first claim of the lemma (Figure \ref{fig:loop} provides an
intuitive account of the constructions).
Let $v$ be an output block of the source synchronized word $w$ and let
$\r'$ be the factor of the run $\r$ aligned with $v$. As a preliminary
step, we fix a maximal set of pairwise non-overlapping maximal loops inside $\r'$,
say $\r'_1,\dots,\r'_m$. A simple counting argument shows that $m\le|Q|$ and that
there are at most $|Q|$ positions in $\r'$ that are not covered by the loops 
$\r'_1,\dots,\r'_m$. The latter positions determine some sub-blocks of $v$
of length $1$. The remaining sub-blocks of $v$ will be obtained by factorizing
the loops $\r'_1,\dots,\r'_m$, as follows. Consider any loop $\r'_j$. 
By construction, all letters consumed by $\r'_j$ occur in $v$, so they must be 
output letters.
Similarly, all letters produced by $\r'_j$ are also output letters, since otherwise,
by considering repetitions of the loop $\r'_j$, one could get different lags,
violating Lemma \ref{lem:rational-lag}. 
This means that the lag associated with the states along $\r'_j$ is constant, 
say $\ell_j$ ($\le|Q|$).
If $\r'_j$ has length at most $2|\ell_j|$, then we simply 
decompose it into $2|\ell_j|$ factors of length $1$.
Otherwise, we cover a prefix of $\r'_j$ with $|\ell_j|$ factors of length $1$,
and a suffix of $\r'_j$ with $|\ell_j|$ other factors of length $1$.
The remaining part of $\r'_j$ is covered by a last factor of length
$|\r'_j| - 2|\ell_j|$. Overall, this induces a factorization of $v$ 
into at most $|Q|$ (the sub-blocks not covered by a loop)  + $|Q|\cdot (2|Q| + 1)$ (Each $\rho'_j$ 
is decomposed into $(2\ell_j+1) \leq (2|Q|+1)$ sub-blocks). This gives 
$\cO(|Q|^2)$ sub-blocks $v_1,\dots,v_n$.
Moreover, by construction, if $|v_i|>1$, then in the corresponding factor
$\r_i$ of $\r$, all states have the same lag, say $\ell_i$, and if we
extend $\r_i$ to the left and to the right by exactly $|\ell_i|$ transitions,
we get back one of the loops $\r'_j$ (recall that each loop $\r'_j$ 
of length $>2|\ell_j|$ is decomposed into  $|\ell_j|$ blocks of length 1, then 
a block of length $|\r'_j|-|\ell_j|$, and finally, $|\ell_j|$ blocks of length 1.
Clearly, if we extend the middle block on either side by 
blocks of length $|\ell_j|$, then we get back $\r'_j$. 
 This proves the first claim of the lemma.

As for the second claim, suppose that $v_1,\dots,v_n$ is a factorization 
of an output block $v$ of $w$ satisfying the first claim. Clearly, every
sub-block $v_i$ of length $1$ is also a factor of the target synchronized
word $w'$. The interesting case is when a sub-block $v_i$ has length 
larger than $1$. In this case, by the previous claim, we know that in
the corresponding factor $\r_i$ of $\r$, all states have the same lag $\ell_i$,
and the factor $\r'_i$ of $\r$ that is obtained by expanding $\r_i$ to the 
left and to the right by $|\ell_i|$ transition is a loop.
In fact, since $\r'_i$ is a loop, we also know that all states in it
have lag $\ell_i$.
Now, to prove that $v_i$ is a factor of the target synchronized word $w'$,
it suffices to show that every two consecutive positions of $\r_i$
are mapped to consecutive positions via the relation $\om_\r$.
This follows almost by construction, since for every 
pair $(i',k')\in\om_\r$, if $i'$ occurs inside the factor $\r_i$, 
then $k'$ occurs inside the loop $\r'_i$ (recall that $\r'_i$ consumes
and produces only output symbols), and hence $k'=i' - \ell_i$.
In addition, if $i'+1$ also occurs inside $\r_i$, then clearly
$(i'+1,k'+1)\in\om_\r$.
This proves that $v_i$ is a factor of the target synchronized word $w'$,
and hence all positions in it have the same target origin.
\end{proof}

In view of the above lemma we can guess a suitable factorization of 
the output into sub-blocks that refine the output blocks, and treat 
each sub-block as if it were a single position. 
In particular, we can annotate every sub-block with a unique offset from 
a finite set of quadratic size w.r.t.~$|Q|$. The role of the offsets
will be the same as in the previous proof, where blocks had bounded length,
namely, determine some partial bijections $O_{o,o'}$, $I_{0,o'}$, and $M_o$
between positions of the input.
In addition, we annotate every sub-block with the pair consisting 
of the first and last states of the factor of the successful run 
that consumes that sub-block. We call such a pair of states a 
\emph{pseudo-transition}, as it plays the same role of a transition
associated with a single output position.
Finally, we annotate every input position $y$ with a sequence
of bounded length that represents a single transition on $y$
followed by the pseudo-transitions on the subblocks with source
origin $y$. The resulting input annotation provides an abstraction
of a successful run of $R$.

The correctness of the above annotations can be enforced by defining suitable
relations $\ipar$, $\opar$, $\nxt_{\otype,\otype'}$ for the regular resynchronizer $R'$.
We omit the tedious details concerning these relations, and only observe that, 
as before, the definition $\nxt_{\otype,\otype'}$ relies on the fact that 
$\move_\otype$ and $\move_{\otype'}$ define partial bijections between
input positions.

Finally, we turn to describing the relation $\move_\otype$ that 
maps source to target origins for $\otype$-labelled output positions.
The definition is basically the same as before, based on some
auxiliary relations $O_{o,o'}$ and $I_{0,o''}$ that implement
$\om_\r$ and $\im_\r$ at the level of input positions.
As before, we guarantee, by means of a syntactical trick, 
that $\move_\otype$ determines a partial bijection between 
input positions.
In conclusion, we get a regular resynchronizer $R'$, with input and output parameters,
that is equivalent to the rational resynchronizer $R$.




\section{Proofs of Section \ref{sec:synthesis}}

\oneWayEquivalenceUp*

\begin{proof}
One implication, from 2.~to~1., is trivial, 
since origin containment implies classical containment,
and since applying an arbitrary resynchronization $R$ to $T_2$ 
cannot result in having more input-output pairs (it can however
modify the origin, as well as discard some input-output pairs).
The implication from 3.~to 2.~is also trivial.

For the remaining implication, from 1.~to~3., suppose that $T_1,T_2$ are 
functional one-way transducers such that $T_1 \subseteq T_2$. 
We construct a rational resynchronizer $R$ over the disjoint union $\Sigma\uplus\Gamma$ 
of the input and output alphabets of $T_1,T_2$, using a variant of the 
direct product of $T_1$ and $T_2$. 
More precisely, let $T_1=(Q_1,q_1,\Delta_1,F_1)$,
$T_2=(Q_2,q_2,\Delta_2,F_2)$, and $R=(Q,q,\Delta,F)$, where
$Q = Q_1\times Q_2$, $q=(q_1,q_2)$, $F=F_1\times F_2$, $\Delta$ contains all transitions
of the form $(s_1,s_2) \act{a w_2 \,\mid\, a w_1} (t_1,t_2)$, with $s_i \act{a \,\mid\, w_i} t_i$ 
in $\Delta_i$ for both $i=1$ and $i=2$.
Intuitively, the transducer $R$ simulates a run of $T_1$ and a run of $T_2$ in parallel, by
repeatedly consuming an input symbol $a$ and the corresponding output $w_2$ produced by $T_2$, 
and producing the same input symbol $a$ and the corresponding output $w_1$ of $T_1$.
Since $T_1$ and $T_2$ are functional and classically contained one in the other, 
we have that $R$ maps strings over $\Sigma\uplus\Gamma$ to strings over $\Sigma\uplus\Gamma$ 
while preserving the projections on the input and on the output alphabets.
This means that $R$ is indeed a resynchronizer. 
Finally, $T_1$ is clearly origin equivalent to $R(T_2)$.
\end{proof}



\synthesisUnaryOneWay*

\begin{proof}
We first prove the reduction from synthesis of rational resynchronizers 
to language-boundedness of OCA, and then prove the reduction in 
the opposite direction.

\paragraph*{From synthesis to language-boundedness.}
Let $T_1,T_2$ be real-time, one-way transducers with unary output
alphabet.  We suppose in addition that $T_1$ is trimmed.
We construct an OCA $A$ that reads encodings 
of successful runs of $T_1$.
If the input is not a successful run of $T_1$, 
then, as soon as an error is detected, $A$ resets its counter
and accepts any continuation of the input.
In particular, thanks to this behaviour and to $T_1$ being trimmed, badly-formed encodings 
of runs will not cause the counter of $A$ to be unbounded.

Consider now an input for $A$ that is a correct 
encoding of a successful run of $T_1$, say $\r_1$. 
In this case, $A$ guesses and simulates a successful run $\r_2$ of $T_2$ 
having the same input as $\r_1$. 
The counter of $A$ is used as expected: it is incremented according 
to the outputs produced using the transitions of $\r_1$, and decremented according to the outputs produced using the transitions
of $\r_2$, or vice versa when one needs to represent a negative value
(recall that OCA work with counter over natural numbers). 
The detail regarding which among $T_1, T_2$ is ``leading'', resulting 
 in the non-negative counter value can be stored in the finite control 
 of the OCA.

Intuitively, a configuration of $A$ determines how ahead or behind 
is the partial output produced by the encoded run of $T_1$ compared to 
the partial output produced by the simulated run $T_2$.
The OCA $A$ accepts with empty counter. 
Note that this construction is close to the direct product of 
$T_1$ and $T_2$, the main difference being the treatment of the
badly formed encodings and the role played by the counter.

Let us now prove that the OCA $A$ is language-bounded if and only
if $T_1 \ocont R(T_2)$ for some rational resynchronizer $R$.

Suppose first that the OCA $A$ is language-bounded, namely, that 
there is some $k\in\Nat$ such that every word is accepted by $A$ 
with a counter that never exceeds $k$. 
We can think of the successful runs of $A$ that maintain the counter
between $0$ and $k$ as runs of a $k$-delay resynchronizer $R$. 
More precisely, we can define a letter-to-letter resynchronizer $R$,
the states of which are  the configurations of $A$ with the value of the
counter inside $\{0,\dots,k\}$.
On consuming an input letter, $R$ produces the same input letter; 
on consuming a sequence of $j$ output letters, depending 
on the simulated transition of $A$, $R$ produces an output 
of length $j+h$ if the counter is incremented by $h$. Likewise,
 if the simulated transition of $A$ decrements the counter by $h$, then on reading a sequence of $j$ 
 output symbols, $R$ produces an output of length $j-h$. 
The run of $R$ is successful if an only if the simulated run of $A$ is so. 
The fact that $A$ accepts every word with a counter that never 
exceeds $k$, immediately implies that $T_1 \ocont R(T_2)$.  

Conversely, suppose that $T_1 \ocont R(T_2)$ for some
rational resynchronizer $R$. By Theorem \ref{thm:bounded-delay}, 
we can assume without loss of generality that $R$ is a $k$-delay resynchronizer,
for some $k$ (that can be even computed from $T_1$, $T_2$, and $R$, 
but this is immaterial here). 
From this it is easy to see that $A$ is language-bounded, and precisely, 
that $A$ accepts every word with a counter that never exceeds $k$,
as when reading a run $\rho$ of $T_1$, it can guess 
a run $\rho'$ of $T_2$ such that $R(\rho') = \rho$.

\paragraph*{From language-boundedness to synthesis.}
Let $A$ be an OCA. We construct two real-time, one-way transducers 
$T_1,T_2$ that have the same input alphabet as $A$, say $\S$, and 
a singleton output alphabet, say $\G=\{c\}$.  
The transducer $T_1$ reads any word $a_1 \dots a_n \in \S^*$ 
and outputs one letter $c$ for each consumed input symbol. 
In particular, the synchronization language of $T_1$ is 
$\{a_1 c \dots a_n c : a_i \in \S, n \ge 0\}$. 
Note that $T_1$ is real-time and functional.
The transducer $T_2$ does the following: 
upon reading  $a_1 \dots a_n$, it guesses 
a successful run of the OCA $A$. 
Whenever the counter is incremented along the guessed run of $A$,
$T_2$ outputs $cc$; whenever the counter is decremented, $T_2$ outputs
$\emptystr$; whenever the counter is unchanged, $T_2$ outputs $c$.
Note that $T_2$ is also real-time, but not necessarily functional.

Let us now prove that $A$ is language-bounded if and only
if $T_1 \ocont R(T_2)$ for some rational resynchronizer $R$.

Suppose first that $A$ is language-bounded, with bound $k$. 
We obtain from this a  $k$-delay resynchronizer
$R$ that reads a synchronized word 
$a_1 c^{i_1} \dots a_n c^{i_n}$ of $T_2$, where $i_j\in\set{0,1,2}$
for all $j$.
The resynchronizer $R$ simulates a counter taking values in $[-k,k]$, 
and outputs $a_1 c \dots a_n c$, accepting if and only if 
the counter is $0$. Each time an $a_ic^2a_j$ is encountered, it corresponds 
to an increment in the OCA; then $R$ outputs $a_ic$, and 
the simulated counter decreases by 1 in $R$; likewise, 
each time an $a_ica_j$ is encountered, $R$ outputs $a_ic$ with 
no change in the simulated counter value, and finally, 
when two consecutive input symbols $a_ia_j$ are read by $R$, 
$R$ outputs $a_ic$ and 
 the simulated counter value increases by 1. 
 Since the counter value is bounded by $k$ in the OCA, the simulated counter 
 in $R$ is within $[-k,k]$. 
Clearly, $T_1 \ocont R(T_2)$. 

Conversely, suppose that $T_1 \ocont R(T_2)$ for some rational resynchronizer $R$.
We argue as before, using Theorem \ref{thm:bounded-delay}:
we assume without loss of generality that $R$ is a $k$-delay resynchronizer,
for some $k$, and derive from this that $A$ is language-bounded.
\end{proof}




\undecidableOCABoundedness*

\begin{proof}
 The reduction is from the
  boundedness problem for multi-counter (Minksy) machines. Such a
  machine $M$ can increment, decrement and test for zero. The question is
  whether there exists some bound $k$ such that all computations of
  $M$ (not necessarily accepting) from the initial configuration with
  all counters zero, have all counters stay below $k$. One can assume
  w.l.o.g.~that if $M$ is not bounded then for every $k$ there is some
  initial run of $M$ where \emph{all} counters exceed $k$.

  The OCA $A$ reads sequences of transitions of $M$. At the beginning, 
  $A$ guesses a counter index $j$ of $M$ and starts simulating the
  sequence of transitions on counter $j$. If the sequence of
  transitions is incorrect because of counter $j$, the OCA accepts and
  stops after emptying counter $j$. Note that there are two types of
  error: either the counter is zero but should be decremented, or the
  counter is tested for zero, but is not zero. Both kinds of error can
  be checked by the OCA. Otherwise, if the simulation goes through for
  counter $j$, then the OCA accepts with empty counter at the end.

  Assume that $M$ is bounded, with bound $k$. If a sequence $\r$ of
  transitions is a run of $M$, then all simulations on any counter
  will be bounded by $k$. If $\r$ is not a run, then there is a first
  position of $\r$ where an error occurs, for instance because of
  counter $j$. Then the run of $A$ simulating counter $j$ will accept
  $\r$ within bound $k$.

  If $M$ is unbounded then for every $k$ there is a run $\r$ where \emph{all}
  counters exceed $k$. In this case all runs of $A$ on $\r$ exceed
  $k$, so $A$ is not language-bounded.
\end{proof}



\twoWayEquivalenceUp*

\begin{proof}
The implications from 2.~to~1. and from 3.~to 2.~are as in the proof
of Theorem \ref{th:one-way-equivalence-up}. The only interesting implication
is from 1.~to~3, where we suppose that $T_1 \subseteq T_2$ and we aim at 
constructing a $1$-bounded 
Parikh resynchronizer $R$ such that $T_1 \oeq R(T_2)$,
and with the same target set as $T_1$. 
The proof exploits some constructions based on crossing sequences, 
which are classically used to translate two-way 
automata to equivalent one-way automata \cite{Shepherdson59}, 
as well as to reduce containment of functional two-way 
transducers to emptiness of languages recognized by Parikh 
automata \cite{STACS19}. We briefly recall the key notions here, by adapting
them in a way that is convenient for the presentation (notably, considering transitions instead of states).

A \emph{crossing sequence} of a two-way automaton or a functional two-way transducer is a tuple 
$\bar t=(t_1,\dots,t_n)$ of transitions such that the source states of $t_1,t_3,\dots$ are 
right-reading and the source states of $t_2,t_4,\dots$ are left-reading.
The tuple is meant to describe the transitions along a successful run that 
depart from configurations at a certain position $y$. 
Formally, given a run $\rho$, the crossing sequence of $\rho$ at input position $y$, denoted $\rho[y]$,
consists of the quadruples $(q,a,v,q')$ such that $(q,y) \act{a \:|\: v} (q',y')$ is a transition
of $\rho$, where the occurrence order on transitions induces a corresponding order on the
quadruples of the crossing sequence.
Without loss of generality, for two-way automata, as well as for functional two-way transducers, 
one can restrict to successful runs that never visit the same state twice at the same position. 
Accordingly, we can assume that the length of a crossing sequence never exceeds the total number 
of states of the device. 
Moreover, when the two-way automaton or transducer is unambiguous, the crossing sequences are 
uniquely determined by the input and the specific position in it. More precisely, there are
regular languages $L_{\bar t}$, one for each possible crossing sequence, that contains precisely those
inputs $u$ with a specific position $y$ marked on it (for short, we denote such words by $\angled{u,y}$),
such that the crossing sequence at $y$ of the unique successful run on $u$ is precisely $\bar t$.

\smallskip
We now turn to the main proof, which is divided into several steps.

\paragraph*{Encoding output positions.}
We begin by describing a natural encoding of arbitrary output positions by means of their origins.
Of course, the encoding depends on the given input, denoted $u$, and on the transducer we consider, 
either $T_1$ or $T_2$, which here is generically denoted by $T$.
Now, let $\rho$ be the unique successful run of $T$ on $u$, and 
let $\synch$ be the induced origin graph.
To simplify the notations, hereafter we tacitly assume that $T$
produces at most one letter at each transition --- the 
assumption is without loss of generality, since long outputs 
originating at the same input position can be produced incrementally
by exploiting two-way head motions. 
Let $n$ be the number of states of $T$. Since $T$ is unambiguous, $\synch$ contains at most $n$ 
output positions with the same origin (otherwise, the same configuration would be visited at least 
twice along the successful run $\rho$, which could then be used to contradict the assumption of 
unambiguity). 
This means that every position $x$ in $\Out(\synch)$ can be encoded by its origin 
$y_x=\Orig(\synch)(x)$ together with a suitable index $i_x\in\{1,\dots,n\}$, 
describing the number of output positions $x' \le x$ with the same origin $y_x$ as $x$.
Moreover, we recall that $y_x$ can be represented as an annotated input of the form
$\angled{u,y_x}$.

\paragraph*{Decoding by Parikh automata.}
We now show that there are Parikh automata that compute the inverse of 
the encoding $x \mapsto (y_x,i_x)$ described above. More precisely, there are unambiguous 
Parikh automata $A_1,\dots,A_n$ such that each $A_i$ receives as input a word $\angled{u,y}$
having a special position marked on it, and outputs the unique output position $x$ such that 
$(y,i)=(y_x,i_x)$, if this exists, otherwise the output is undefined.
Each automaton $A_i$ can be constructed from $T$ and $i$ by unambiguously guessing 
the crossing sequences of the unique run of $T$ on $u$, and by counting the number of 
output symbols emitted until a \emph{productive} transition at the marked position 
$y$ is executed for the $i$-th time --- a productive transition is a transition 
that produces non-empty output.

\paragraph*{Redirecting origins.}
We now apply the constructions outlined above in order to obtain the desired Parikh resynchronizer $R$
from $T_1$ and $T_2$.
Let $u$ be some input and $\synch_1,\synch_2$ be the origin graphs induced by the 
unique successful runs of $T_1,T_2$ on $u$.
Since $T_1\subseteq T_2$, we can further let 
$v=\Out(\synch_1)=\Out(\synch_2)$.
Consider any output position $x\in\dom(v)$. 
According to $\synch_2$, $x$ is encoded by an input position $y_x$ and an index $i_x\in\{1,\dots,n_2\}$,
where $n_2$ is the number of states of $T_2$. In a similar way, according to $\synch_1$, the same position 
$x$ is encoded by some input position $z_x$ and an index $j_x\in\{1,\dots,n_1\}$, where $n_1$ is the number
of states of $T_1$. 
Moreover, based on the previous constructions, there are unambiguous Parikh automata 
$A_{2,i}$ and $A_{1,j}$ such that
\begin{itemize}
	\item $A_{2,i}(\angled{u,y})=x$ if and only $(y,i) = (y_x,i_x)$,
	\item $A_{1,j}(\angled{u,z})=x$ if and only $(z,j) = (z_x,j_x)$.
\end{itemize}
Since unambiguous Parikh automata are closed under pointwise difference, there is a 
unambiguous Parikh automaton $A_{i,j}$ that recognizes precisely the language of 
annotated words $\angled{u,y,z}$ such that
\begin{align*}
	A_{2,i}(\angled{u,y}) - A_{1,j}(\angled{u,z}) = 0
	\tag{$\star$}
\end{align*}
Note that the above language defines 
a partial bijection between pairs of positions $y,z$ in the input $u$ in 
such a way that $y$ and $z$ are the origins of the same output position $x$ 
according to the unique origin graphs $\synch_1,\synch_2$ of $T_1,T_2$ 
such that $\In(\synch_1)=\In(\synch_2)=u$.
This property can be used to define the component $\move_\otype$ 
of the desired resynchronizer $R$, by simply letting
\[
  \move_\otype = \{ (u,y,z) ~\mid~ A_{i,j}(\angled{u,y,z}) = 0 \}
\]
where $\otype=(a,i,j) \in \G\times\{1,\dots,n_2\}\times\{1,\dots,n_1\}$.

For the correctness of the above definition we rely on guessing the 
correct pairs of indices $(i,j)$ as annotations of output positions.
More precisely, we have that:
\begin{itemize}
  \item for every output position $x$ with source origin 
        $y=\Orig(\synch_2)(x)$ and with label $\otype=(a,i_x,j)$,
        there is at most one input position $z$ such that $(u,y,z)\in\move_\otype$; 
        in addition, if we also have $j=j_x$, then $z=\Orig(\synch_1)(x)$
        is the target origin of $x$; symmetrically,
  \item for every output position $x$ with target origin 
        $z=\Orig(\synch_1)(x)$ and with label $\otype=(a,i,j_x)$,
        there is at most one input position $y$ such that $(u,y,z)\in\move_\otype$; 
        in addition, if we also have $i=i_x$, then $y=\Orig(\synch_2)(x)$
        is the source origin of $x$.
\end{itemize}
Based on the above properties, we need to guess suitable output parameters 
that associate with each position $x$, a correct pair $(i_x,j_x)$. 
We explain below how this is done using the components $\opar$ and 
$\nxt_{\otype,\otype'}$ of the resynchronizer. 

\paragraph*{Constraining output parameters.}
We first focus on the indices $j_x$ related to $T_1$; 
we will later explain how to adapt the constructions to check the indices $i_x$ 
related to $T_2$. As usual, we fix an input $u$ and the unique successful run $\rho_1$
of $T_1$ on $u$.
The idea is that each index $j_x$ corresponds to a certain element of the crossing 
sequence of $\rho_1$ at the target origin $z_x$, and knowing the correct index for 
$x$ determines the correct index for the next output position $x+1$. 
Based on this, correctness can be verified inductively using the guessed 
crossing sequences and the relation $\nxt_{\otype,\otype'}$ of the resynchronizer, 
as follows.
For the base case, we check that the first output position is correctly 
annotated with the index $j=1$: this is readily done by a regular language $\opar$.

For the inductive step, we consider an output position $x$ and assume 
that it is correctly annotated with $j=j_x$.
Let $j'$ be the annotation of the next position $x+1$. 
To check that $j'$ is also correct, we consider pairs of productive transitions
in the crossing sequences associated with the target origins of $x$ and $x+1$, 
and verify that they are connected by a non-productive run.
More precisely, let $z$ and $z'$ be the target origins of $x$ and $x+1$,
respectively, and let $\bar t_z$ and $\bar t_{z'}$ be the crossing sequences 
of $\rho_1$ at those positions.
We have that $j'=j_{x+1}$ if and only if the $j$-th productive transition 
of $\bar t_z$ and the $j'$-th productive transition of $\bar t_{z'}$ 
are connected by a factor of the run that consists only of non-productive transitions.
The latter property can be translated to a regular property $\nxt_{\otype,\otype'}$
concerning the input annotated with two specific positions, $z$ and $z'$, 
assuming that $\otype=(a,i,j)$ and $\otype=(a',i',j')$ are the letters of 
the output positions $x$ and $x+1$.

It now remains to check the correctness of the output annotations
w.r.t.~the indices $i$ for the second transducer $T_2$. We follow
a principle similar to the one described above for $T_1$. The only
difference is that now, in the inductive step, we have work with the 
source origins $y$ and $y'$ of consecutive output positions $x$ and $x+1$. 
The additional difficulty is that, by definition, the relation $\nxt_{\otype,\otype'}$ 
can only refer to target origins. We overcome this problem by exploiting the partial
bijection between target and source origins, as defined by the relations 
$\move_\otype$ and $\move_{\otype'}$. Formally, we first define a relation 
$\nxt_{\otype,\otype'}^{\text{source}}$ as before, that constrain the indices 
$i$ and $i'$ associated with two consecutive output positions $x$ and $x+1$ 
labeled by $\otype$ and $\otype'$, respectively. We do this as if 
$\nxt_{\otype,\otype'}^{\text{source}}$ were able to speak about source origins. 
We then intersect the following relation with the previously 
defined relation $\nxt_{\otype,\otype'}$: 
\[
  \big\{ (u,z,z') ~\mid~ 
     \exists y,y'~ (u,y,y')\in\nxt_{\otype,\otype'}^{\text{source}}, ~
                   (u,y,z)\in\move_\otype, ~
                   (u,y',z')\in\move_{\otype'} \big\}.
\]
Since in the inductive step we assume that $x$ is correctly annotated with the
pair $(i,j)$ and $x+1$ is annotated with $(i',j')$, where $j'=j_x$ is correct 
by the previous arguments, there are unique $y,y'$ that satisfy 
$(u,y,z)\in\move_\otype$ and $(u,y',z')\in\move_\otype$ 
in the above definition, and these must be the source origins of $x$ and $x+1$.
This means that the above relation, which is definable by a unambiguous Parikh 
automaton, correctly verifies the correctness of the index $i'$ associated with $x+1$.

We conclude by observing a few properties of the defined Parikh resynchronizer $R$.
As already explained, the relation $\move_\otype$ defines a bijection between
pairs of input positions, so $R$ is a $1$-bounded 
Parikh resynchronizer.
As concerns its target set, that is the set of pairs $(u,z)$ 
such that $(u,y,z)\in\move_\otype$ for some $z\in\dom(u)$ and some 
$\otype\in\G\times\{1,\dots,n_2\}\times\{1,\dots,n_1\}$,
it coincides by construction with the target set of $T_1$.
Finally, since the relation $\nxt_{\otype,\otype'}$ is defined
by conjoining a regular property with the properties defined by
the relations $\move_\otype$ and $\move_{\otype'}$, we have that
$\nxt_{\otype,\otype'}$ is regular if $\move_\otype$ and $\move_{\otype'}$
are regular.
\end{proof}



\twoWayEquivalenceUpBis*

\begin{proof}
We prove the following implications in the order: 
1.~$\rightarrow$ 2.~$\rightarrow$ 3.~$\rightarrow$ 4.~$\rightarrow$ 1.

\paragraph*{From 1.~to~2.}
This is trivial since $\hat R$ is bounded 
and satisfies $T_1 \oeq \hat R(T_2)$, and hence $T_1\ocont \hat R(T_2)$.

\paragraph*{From 2.~to~3.}
Let $R$ be a $k$-bounded regular resynchronizer.
The goal is to construct an equivalent $1$-bounded regular resynchronizer $R'$
(note that this part of the proof does not depend on $T_1$ and $T_2$).
For this, we introduce a parameter $i_x\in\{1,\dots,k\}$ 
associated with each output position $x$, and require that 
for all output positions $x,x'$ having the same label $\otype$, 
and for all input positions $y,z$ 
such that $(u,y,z)\in \move_\otype$, if $i_x=i_{x'}$, then $x=x'$.
The existence of such a mapping $x\mapsto i_x$ follows easily 
from the assumption that $R$ is $k$-bounded. 
The relation $\move'_{(\otype,i)}$ of the new resynchronizer $R'$
redirects origins of output positions based on their annotations 
$(\otype,i)\in\G\times\{1,\dots,k\}$, as follows:
\[
\move'_{(\otype,i)} = 
\big\{ (w,y,z) ~\mid~
(w,y,z)\in \move_\otype, ~ 
\exists^{!i} y'~~ y'<y \wedge (w,y',z)\in \move_\otype \big\}
\]
where $\exists^{!i} y'$ is an abbreviation for ``there exist exactly $i$ positions $y'$ such that\dots''.
As for the relation $\nxt'_{(\otype,i),(\otype',j)}$, this coincides with
$\nxt_{\otype,\otype'}$, so it does not take into account the new annotations.
Thus, the defined resynchronizer $R'$ is $1$-bounded, regular, and defines the 
same resynchronization as $R$.

\paragraph*{From 3.~to~4.}
Suppose that $R$ is a $1$-bounded regular resynchronizer with input
alphabet $\S$ and output alphabet $\G$, such that $T_1 \ocont R(T_2)$. 
The goal is to construct a $1$-bounded regular resynchronizer $R'$
with the same target set as $T_1$ and such that $T_1 \oeq R'(T_2)$.
For the sake of simplicity, we assume 
that $R$ has no input parameters, and similarly
$T_1$ has no common guess (the more general cases can be dealt with
by annotating the considered inputs with the possible parameters and
the common guess).
The idea for defining the desired resynchronizer $R'$ is as follows. 
We first restrict each relation $\move_\otype$ so as to make it a 
partial bijection, that is, for every input $u$, and every source origin $y\in\dom(u)$, 
there is an annotation $w$ of the input and at most one target origin $z$ 
that corresponds to $y$ in $u\otimes w$ (and conversely, since $R$ is $1$-bounded, 
for every target origin $z$ there is a unique source origin $y$ that 
corresponds to $z$). This step requires the use of appropriate input
parameters that determine a unique target origin $z$ from any given 
source origin $y$. Then, we restrict further the relation $\move_{\otype}$
so that every target origin $z$ is witnessed by $T_1$.
Formally, we introduce input parameters ranging over $\bbB^\G$
and work with annotated inputs of the form $u\otimes w$,
with $u\in\S^*$ and $w\in(\bbB^\G)^*$.
Given $u\in\S^*$, we define $O_u$ as the set of all positions $z=\Orig(\synch)(x)$
where $\synch$ is an origin graph of $T_1$, $x\in\dom(\Out(\synch))$,
and $\In(\synch)=u$.
The new relation $\move'_{\otype}$ that redirects source origins 
to target origins is defined as the following restriction of 
$\move_{\otype}$:
\[
\move'_{\otype} = 
\big\{ (u\otimes w,y,z) ~\mid~
(u,y,z)\in \move_\otype, ~ w(z)(\otype)=1, ~ z\in O_u ~
\big\}.
\]
Clearly, the above relation is regular and contained in $\move_{\otype}$. 
However, it is still possible that $\move'_{\otype}$ associates multiple target
origins with the same source origin.

To get a partial bijection from $\move'_{\otype}$ we need
to constrain the possible annotated input $u\otimes w$. 
We do so by requiring that, for every output letter $\otype\in\G$ 
and every position $y$ in $u\otimes w$, 
if there is $z$ satisfying $(u,y,z)\in \move_\otype$,
then there is {\sl exactly one} $z'$ satisfying $(u,y,z')\in \move_\otype$
and $w(\otype)(z)=1$.
Note that the latter property is again regular, and thus could be 
conjoined with the original relation $\ipar$ to form the new
relation $\ipar'$.
Accordingly, the relation $\nxt'_{\otype,\otype'}$ of the 
desired resynchronizer $R'$ defines the same language as 
$\nxt_{\otype,\otype'}$, but expanded with arbitrary
input annotations over $\bbB^\G$.

It is now easy to see that the the resulting resynchronizer $R'$ is $1$-bounded,
and in fact, on each input, defines a partial bijection between source and target
origins in such a way that the target set coincides with that of $T_1$.
By pairing this with the containments $R'(T_2)\ocont R(T_2)$ and $T_1 \ocont R(T_2)$, 
we obtain $T_1 \oeq R'(T_2)$.

\paragraph*{From 4.~to~1.}
Knowing that $\hat R(T_2) \oeq T_1 \oeq R(T_2)$ for two $1$-bounded resynchronizers
$R,\hat R$ with the same target sets as $T_1$ implies that the relations 
$\move_\otype$ and $\move'_\otype$, from $R$ and $\hat R$ respectively, coincide. 
Moreover, since the relation $\move_\otype$ of $R$ is assumed regular, this means 
that $\move'_\otype$ is regular too. Finally, we recall that  $\hat R$
is such that $\nxt'_{\otype,\otype'}$ is regular whenever $\move_\otype$ and $\move_{\otype'}$
are. We can then conclude that the relations $\nxt'_{\otype,\otype'}$ from $\hat R$
are also regular, and hence $\hat R$ is a regular resynchronizer.
\end{proof}





\fi
\fi

\end{document}
